\documentclass[11pt]{article}
\usepackage{graphicx}

%backref=page
\usepackage[colorlinks=true,citecolor=blue]{hyperref}
\usepackage{natbib}
\usepackage{graphicx}
\usepackage{amsfonts}
\usepackage{amsmath}
\usepackage{amssymb}
\usepackage{url}
\usepackage{fancyhdr}
\usepackage{indentfirst}
\usepackage{enumerate}
\usepackage{titlesec}
\usepackage{amsthm}
\usepackage{dsfont}
\usepackage[misc]{ifsym}
\usepackage{multicol}
\usepackage{blindtext}
\usepackage{geometry}
\usepackage{appendix}
\usepackage{float}
\usepackage{caption}
\usepackage{mathrsfs}
\usepackage{multirow}
\usepackage{arydshln}

\geometry{a4paper,total={170mm,257mm},left=20mm,top=20mm,}

\theoremstyle{definition}

\newtheorem{example}{Example}

\newtheorem{definition}{Definition}

\theoremstyle{plain}
\newtheorem{theorem}{Theorem}
\newtheorem{lemma}{Lemma}
\newtheorem{proposition}{Proposition}

\theoremstyle{remark}
\newtheorem{remark}{Remark}

\makeatletter
\def\blfootnote{\xdef\@thefnmark{}\@footnotetext}
\makeatother

\usepackage{cases}

\theoremstyle{definition}

\def\N{\mathbb{N}}

\def\p{\mathbb{P}}
\def\E{\mathbb{E}}

\def\R{\mathbb{R}}

\def\H{\mathcal{H}}

\def\X{\mathcal{X}}

\def\cx{\mathrm{cx}}

\def\X{\mathcal{X}}

\def\d{\mathrm{d}}

\DeclareMathOperator{\im}{Range}

\newcommand{\VaR}{\mathrm{VaR}}

\DeclareMathOperator*{\supp}{supp}

\newcommand{\axiomplain}[2]{%
  \par\vspace{0.5\baselineskip}% vertical space above
  \noindent\hspace{1em}(#1) #2%
  \par\vspace{0.5\baselineskip}% vertical space below
}

\newcommand{\axiomplaina}[2]{%
  \noindent\hspace{1em}(#1) #2%
  \par\vspace{0.5\baselineskip}% vertical space below
}

\newcommand{\axiomplainb}[2]{%
  \par\vspace{0.5\baselineskip}% vertical space above
  \noindent\hspace{1em}(#1) #2%
}

\newcommand{\axiomplainc}[2]{%
  \noindent\hspace{1em}(#1) #2%
}

\usepackage[onehalfspacing]{setspace}

\usepackage{bm}
\usepackage{tikz}
\usetikzlibrary{arrows.meta,calc,backgrounds}

% Define a background layer and ordering
\pgfdeclarelayer{background}
\pgfsetlayers{background,main}

\def\id{\mathds{1}}

\setlength{\bibsep}{1pt}

 % change back when removing all comments

%\renewcommand{\baselinestretch}{1.4}
%
%
%\topmargin -1.30cm \oddsidemargin -0.00cm \evensidemargin 0.0cm
%\textwidth 15.56cm \textheight 23.20cm

\parindent 5ex

\title{Aggregate then evaluate}
%models of decision making under ambiguity}

\author{
Zachary Van Oosten\thanks{Department of Statistics and Actuarial Science, University of Waterloo,  Canada. \Letter~{\scriptsize\url{zjvanoos@uwaterloo.ca}}} \and Ruodu Wang\thanks{Department of Statistics and Actuarial Science, University of Waterloo,  Canada. \Letter~{\scriptsize\url{wang@uwaterloo.ca}}}}
\date{\today}

\begin{document}
	\maketitle
	\begin{abstract}
We distinguish two frameworks for decisions under ambiguity: evaluate-then-aggregate (ETA) and aggregate-then-evaluate (ATE). Given a statistic that represents the decision maker's pure-risk preferences (such as expected utility) and an ambiguous act, an ETA model first evaluates the act under each plausible probabilistic model using this statistic and then aggregates the resulting evaluations according to ambiguity attitudes. In contrast, an ATE model first aggregates ambiguity by assigning the act a single representative distribution and then evaluates that distribution using the statistic. These frameworks differ in the order in which risk and ambiguity are processed, and they coincide when there is no ambiguity. While most existing ambiguity models fall within the ETA framework, our study focuses on the ATE framework, which is conceptually just as compelling and has been relatively neglected in the literature. We develop a Choquet ATE model, which generalizes the Choquet expected utility model by allowing arbitrary pure-risk preferences. We provide an axiomatization of this model in a Savage setting with an exogenous source of unambiguous events. The Choquet ATE framework allows us to analyze a wide range of ambiguity attitudes and their interplay with risk attitudes.

\textbf{Keywords}: Pure risk,   ambiguity, Choquet integrals,  diversification, distributions
	\end{abstract}

\noindent\rule{\textwidth}{0.5pt}

\section{Introduction}

The expected utility (EU) model, originally axiomatized by \citet{vNM47}, has long served as the cornerstone of modern decision theory. The EU model is formulated in the pure-risk setting, where acts are represented by distributions over outcomes with objectively specified probabilities. However, the pure-risk setting is often unrealistic in many applications, as decision makers (DMs) rarely have complete knowledge of the true distributions underlying the alternatives they face.

To move beyond the pure-risk setting, \citet{S54} reformulated acts as functions from states of the world to consequences, rather than as distributions over consequences. This framework is now referred to as the Savage setting. Additionally, by imposing reasonably motivated axioms on a DM's preferences over such acts, Savage showed that  (a) a subjective probabilistic model could be elicited for the DM, transforming the problem back into one of the pure-risk setting and (b) within this elicited pure-risk setting, the DM's preferences are consistent with the EU model. This decision model is known as the subjective expected utility (SEU) model.

The SEU model is particularly attractive because it grounds probability theory in behavioral principles. However, \citet{E61} raised a critical challenge: when agents possess asymmetric information about the likelihoods of events, Savage's sure-thing principle can fail to describe plausible behavior. This insight undermined the behavioral appeal of the SEU model and, more generally, probabilistically sophisticated models \citep{MS92}, which assume that uncertainty can be captured by a single subjective probabilistic model and that acts are evaluated solely through their induced distributions over outcomes. Ellsberg's critique was pivotal in launching the formal study of decision-making under ambiguity. Among the most influential decision models under ambiguity are the Choquet expected utility model of \citet{S89}, the maxmin expected utility model of \citet{GS89}, the $\alpha$-maxmin model of \cite{GMM04}, the smooth ambiguity model of \citet{KMM05}, and the variational preferences model of \citet{MMR06}. These decision models under ambiguity share the important feature that, when ambiguity is absent, i.e., when uncertainty can be described by a single probabilistic model, they collapse back to the SEU model.

While reverting to the SEU model in the absence of ambiguity is consistent with the historical development of decision theory, this restriction limits the scope of these decision models. This follows because numerous alternative pure-risk  decision models, which are either empirically   or normatively appealing in certain contexts, 
have been proposed and widely adopted. Examples include the mean-variance model of \citet{M52}, used in finance; the rank-dependent utility model of \citet{Q82}, the prospect theory of \citet{KT79}, and the cumulative prospect theory of \citet{TK92}, used in behavioral decision theory; the expected shortfall model of \citet{AT02}, used in quantitative risk management. The aim of this paper is to analyze general decision models under ambiguity that reduce to probabilistically sophisticated models in the absence of ambiguity, rather than relying solely on the SEU model. To do this, we consider two frameworks that both provide a clear separation between pure risk and ambiguity.
A main difference between these two frameworks is that whether pure risk or ambiguity is processed first. 

The first framework we discuss is called 
\emph{evaluate-then-aggregate (ETA)}.  In these decision models, given an act, the DM first examines the distributions it produces under various probabilistic models. For each distribution, the DM evaluates it using a statistic consistent with their pure-risk preferences, e.g., the expected-utility statistic in the case of the EU model. These model-specific evaluations are then aggregated across all the relevant probabilistic models using some aggregator. This framework provides a clear separation between a \emph{risk component} (the evaluation at each probabilistic model) and an \emph{ambiguity component} (the aggregation across probabilistic models). This framework includes most of the decision models under ambiguity discussed above. 

The second framework is called \emph{aggregate-then-evaluate (ATE)}, which we introduce in this paper. In this framework, given an act, instead of evaluating each model separately and then aggregating, the DM first aggregates their ambiguity regarding the act into a single distribution. Afterwards, the DM evaluates this distribution using a statistic consistent with their pure-risk preferences. This framework provides a clear separation between an ambiguity component (the aggregation into a single distribution) and a risk component (the evaluation of this distribution).  For example, the Choquet expected utility model with respect to a continuous capacity $\nu$ can be viewed as first mapping each act $X$ to a distribution $Q_X$ whose survival function $S_{Q_X}$ is given by \begin{equation}
    \label{eq:mot}
    S_{Q_X}(x)=\nu(X>x),~~x\in \mathbb{R},
\end{equation} and then applying the expected-utility statistic to these distributions. Allowing for general statistics in the second step is what we refer to as the \emph{Choquet ATE} model.

We illustrate the conceptual differences between the two frameworks by an example adapted from \cite{E61}. An urn contains 90 balls; 30 are red, and the remaining 60 consist of blue and yellow balls in unknown proportions. 
The possible distributions of the outcome of randomly picking a ball from the urn are represented by
$\mu_{x}$ for $x\in \{0,1,\dots,60\}$ that represents the number of blue balls in the urn. 
The DM needs to  rank four acts: $R$ pays $\$4$ when a red ball is drawn, $B$ pays $\$6$ when a blue ball is drawn,
$Y$ pays $\$5$ when a yellow ball is drawn,
and  $C$  pays $\$1$ in all states. Suppose that the DM has a statistic $\gamma$ defined on distributions, which computes the (numerical) utility of pure risks (represented by distributions). Note that $R$ and $C$ have no ambiguity, and hence their distributions $Q_R$ and $Q_C$ are known, and the utilities $\gamma(Q_R)$ and $\gamma(Q_C)$ can be directly computed. On the other hand, both $B$ and $Y$ have 61 possible distributions, one under each $\mu_x$, which are denoted by ${B}_\# \mu_x$ and ${Y}_\# \mu_x$. 
Below, we describe how the ETA and ATE approaches assess $B$ and $Y$. 
\begin{enumerate}
    \item  The ETA approach first evaluates the utility of $B$, denoted by $U_x=\gamma(B_\# \mu_x)$ under  $\mu_x$ for every $x\in \{0,1,\dots,60\}$, and then aggregates the $61$ utilities by a mapping $\rho$ representing the DM's ambiguity attitude into a utility $\rho(U_0,U_1,\dots,U_{60})$. 
    For instance, the utilities may be aggregated using the worst-case method, a weighted average, or a smooth ambiguity model. The same approach is taken to evaluate the utility of $Y$. The resulting aggregate utilities will be compared with the utilities $\gamma(Q_R)$ and $\gamma(Q_C)$ to make a comparison.
    \item The ATE approach begins by assigning distributions $Q_B$ and $Q_Y$ to the acts $B$ and $Y$, and then the utilities  $\gamma(Q_B)$ and $\gamma(Q_Y)$ are computed and compared with $\gamma(Q_R)$ and $\gamma(Q_C)$.
    The distributions $Q_B$ and $Q_Y$ can be summarized by the two numbers $q_B=Q_B(\{6\})$
    and $q_Y=Q_Y(\{5\})$. 
   This approach does not assume $Q_{B}$ and $Q_{Y}$ are consistent with a probabilistic model. For example, it may hold that
    $q_B+q_Y<2/3$ (e.g., when the DM is ambiguity averse) or $q_B+q_Y>2/3$ (e.g., when the DM is ambiguity seeking). The distributions $Q_B$ and $Q_Y$ are constructed by aggregating the beliefs and attitudes toward ambiguity.
\end{enumerate}

Both approaches are conceptually feasible and may yield the same preferences over $\{B,Y,R,C\}$ for some $\gamma$ and ambiguity attitudes. Nevertheless, the ETA approach requires the (mental) computation of 61 models for each of $B$ and $Y$, whereas the ATE approach only requires determining the values $q_B$ and $q_Y$. For instance,  suppose  $q_B=0.25$ and $q_Y=0.28$ in the ATE approach. If $\gamma$ is the expected value (i.e., risk neutral) then we have $B\succ Y\succ R\succ C$;  if $\gamma$ is the expected utility with the utility function $x\mapsto x^{1/2}$ on $\R_+$ then we have $C\succ R\succ Y\succ B$.
In this example, the ATE approach seems conceptually simpler and more natural, though we cannot assert with certainty which approach better describes a real individual's thought process.

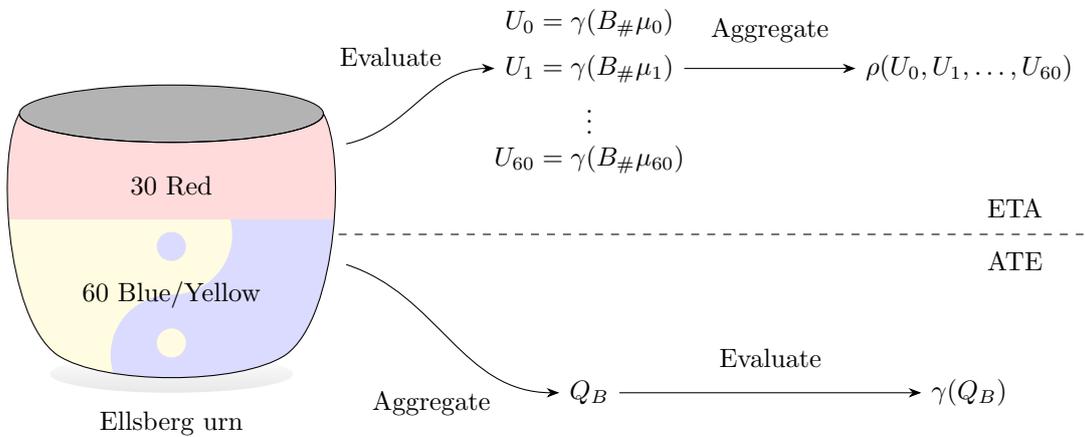
\begin{figure}[ht!]
\centering

% Vertical whitespace ABOVE the urn
\vspace{1em}

\begin{tikzpicture}[>=Stealth, font=\small]

\def\R{2.0}   % radius (x direction)
\def\H{3.2}   % height
\def\E{0.38}  % ellipse y radius
\def\ytop{1.6}
\def\ybot{\ytop-\H}

%-----------------------------------------------
%Urn Body
%-----------------------------------------------

\newcommand{\UrnPath}{%
  (-\R,\ytop) arc (180:360:{\R} and {\E})
    .. controls (\R+0.3,\ytop-0.5) and (\R+0.2,\ybot+0.5) ..
       (0.75*\R,\ybot)
    .. controls (0.45*\R,\ybot-0.4) and (-0.45*\R,\ybot-0.4) ..
       (-0.75*\R,\ybot)
    .. controls (-\R-0.2,\ybot+0.5) and (-\R-0.3,\ytop-0.5) ..
       (-\R,\ytop)
}

 %---------------------------------
  % Bottom region: Yin–Yang style
  %---------------------------------
\begin{scope}
  \clip \UrnPath;
  \begin{scope}
    % restrict to the bottom part of the urn
    \clip (-\R-1,\ybot-2) rectangle (\R+1,0.4);

    % 1) Base: left half yellow, right half blue
    \fill[yellow!14] (-\R-1,\ybot-2) rectangle (0,0.4);
    \fill[blue!14]   (0,\ybot-2)    rectangle (\R+1,0.4);

    % 2) Yin–Yang disk in the middle
    \def\cx{0}        % x-center
    \def\cy{-0.8}     % y-center (tune this)
    \def\ryin{1.6}    % radius   (tune this)

    \begin{scope}
      % limit the following to the Yin–Yang circle
      \clip (\cx,\cy) circle[radius=\ryin];

      % big lobes that carve the S-curve
      \fill[yellow!14] (\cx,\cy+0.5*\ryin)
                       circle[radius=0.5*\ryin];
      \fill[blue!14]   (\cx,\cy-0.5*\ryin)
                       circle[radius=0.5*\ryin];

      % small opposite-colored dots
      \fill[yellow!14] (\cx,\cy-0.4*\ryin)
                       circle[radius=0.12*\ryin];
      \fill[blue!14]   (\cx,\cy+0.4*\ryin)
                       circle[radius=0.12*\ryin];
    \end{scope}
  \end{scope}
%---------------------------------
  % Top region: pastel red as before
  %---------------------------------
  \fill[red!14] (-\R-1,0.2) rectangle (\R+1,\ytop+2);
\end{scope}

% Top of urn
\fill[black!30]
  (-\R,\ytop) arc (180:360:{\R} and {\E})
              arc (0:180:{\R} and {\E}) -- cycle;
\draw
  (-\R,\ytop) arc (180:360:{\R} and {\E})
               arc (0:180:{\R} and {\E});

% Draw the urn
\draw \UrnPath;

\pgfmathsetmacro{\urnbottom}{\ybot - 0.4}

% Shadow BELOW the urn, on the background layer
\begin{scope}[on background layer]
  \shade[
    shading=axis,
    bottom color=black!15,
    top color=transparent!0,
    opacity=0.6
  ] 
    (0,{\urnbottom+0.15}) ellipse ({0.8*\R} and {0.25});
\end{scope}

%-----------------------------------------------
% Text on urn
%-----------------------------------------------
\node at (0,0.65)   {30 Red};
\node at (0,-0.8)  {60 Blue/Yellow};
\node at (0,\ybot-0.9) {Ellsberg urn};

% Anchor points
\coordinate (urnUpper)    at (\R, \ytop-0.4);
\coordinate (urnLower)    at (\R,-0.4);
\coordinate (urnUpperOut) at ($(urnUpper)+(0.3,0)$);
\coordinate (urnLowerOut) at ($(urnLower)+(0.3,0)$);

%-----------------------------------------------
% TOP PATH
%-----------------------------------------------
\node (U0)    at (5.5, 2.8) {$U_0 = \gamma(B_\#\mu_0)$};
\node (U1)    at (5.5, 2.2) {$U_1 = \gamma(B_\#\mu_1)$};
\node (Vdots) at (5.5, 1.6) {$\vdots$};
\node (U60)   at (5.5, 1.0) {$U_{60} = \gamma(B_\#\mu_{60})$};

\node (pU)    at (10.5, 2.2) {$\rho(U_0,U_1,\dots,U_{60})$};

% Arrow urn -> U1
\draw[->] (urnUpperOut) to[out=12,in=180] (U1.west);
\node[above=17pt] at ($(urnUpperOut)!0.3!(U1.west)$) {Evaluate};

% Arrow U1 -> p(U_0,...)
\draw[->] (U1.east) -- (pU.west);
\node[above=6pt] at ($(U1.east)!0.5!(pU.west)$) {Aggregate};

%-----------------------------------------------
% BOTTOM PATH (aligned under U-block)
%-----------------------------------------------
\node (QX)  at (5.5,-2.1) {$Q_B$};
\node (gQX) at (10.5,-2.1) {$\gamma(Q_B)$};

\draw[->] (urnLowerOut) to[out=-18,in=180] (QX.west);
\node[below=25pt] at ($(urnLowerOut)!0.4!(QX.west)$) {Aggregate};

\draw[->] (QX.east) -- (gQX.west);
\node[above=6pt] at ($(QX.east)!0.5!(gQX.west)$) {Evaluate};

%-----------------------------------------------
% ETA / ATE divider
%-----------------------------------------------
\draw[dashed] (\R+0.2,0) -- (12,0);
\node at (11.1, 0.35) {ETA};
\node at (11.1,-0.35) {ATE};

\end{tikzpicture}

\vspace{1em}

\caption{The ETA approach vs the ATE approach for the utility of act $B$}

\end{figure}

The aim of this paper is to make explicit these two different frameworks in the Savage setting, where acts are viewed as mappings from states of the world to real-valued outcomes. Since the ATE framework is new, we will focus on it exclusively in the second half of the paper. The main axiomatic work of this paper will be to characterize the Choquet ATE model. To provide sufficient structure for this axiomatization, we will work in a refined Savage setting with an exogenously given source of unambiguous events. This setting is beneficial because it provides the proper conceptual structure for defining matching probabilities; see \cite{DKW16} and \cite{BHSW18}, which are a major component of the axiomatization, and it allows us to explicitly study the interplay between risk attitudes and ambiguity attitudes. 

We begin in Section \ref{sec:def} by reviewing the mathematical tools that will be used throughout the paper. In Section \ref{sec:sav}, we formally present the Savage setting. In Section \ref{sec:risk}, we formulate the pure-risk setting and provide the mathematical tools for distributions that will support all later discussions of the risk component of preferences. In Section \ref{sec:EA}, we present the ETA model in the Savage setting. In Section \ref{sec:AE}, we do the same with the ATE model. Furthermore,   Theorem \ref{theo:riskEq} provides a novel result characterizing when a mapping assigning acts to distributions is given by \eqref{eq:mot}. In Section \ref{sec:Axioms}, we introduce a refined Savage setting and provide the axioms and  the characterization of the Choquet ATE model in Theorem \ref{theo:mainrep}. Notably, we find that the ambiguity component of the preferences is fully determined by the matching probability. Section \ref{sec:RA} analyzes various ambiguity attitudes for the Choquet ATE model and their interplay with risk attitudes. We start in Section \ref{sec:RAttitudes} by studying various risk attitudes. In Theorem \ref{theo:rD}, we fully characterize the propensity for risk diversification, which will help study its interplay with ambiguity. In Section \ref{sec:AA}, we discuss comparative and absolute ambiguity attitudes and confirm that we get a characterization of these properties through matching probabilities. In Section \ref{sec:div}, we analyze the overall propensity for diversification. We first demonstrate that the propensity for risk diversification alone does not necessarily imply the overall propensity for diversification. Based on  a comonotonic sum inequality, we establish a property for mappings that assign acts to distributions, given in \eqref{eq:coIm}, that guarantees (in the case of the ATE model) the propensity for risk diversification implies the overall propensity for diversification. Finally, Theorem \ref{theo:SAA} establishes that the mapping defined in \eqref{eq:mot} satisfies condition \eqref{eq:coIm} if and only if $\nu$ is supermodular. In Section \ref{sec:robust}, we show how some of the ideas from Section \ref{sec:div} give rise to a distributionally robust representation in Theorem \ref{theo:robust}. We conclude the paper in Section \ref{sec:con}. The proofs of all results, along with related additional results, are provided in the appendices.

\section{Preliminaries}

\label{sec:prelim}

\subsection{Standard definitions and notation}

\label{sec:def}

For a given measurable space $(\Theta,\mathcal{E})$, a function $\nu:\mathcal{E}\mapsto [0,1]$ is a \emph{capacity} if $\nu(\varnothing)=0$, $\nu(\Theta)=1$, and $\nu(A)\leq \nu(B)$ for all $A,B\in \mathcal{E}$ with $A\subseteq B.$ A capacity $\nu$ is said to be: \emph{upward continuous} if for all increasing sequences $(A_n)_{n\in\N}\subseteq\mathcal{E}$ it holds that $\lim_{n\to\infty}\nu(A_n)=\nu\left(\bigcup_{n=1}^\infty A\right)$, \emph{downward continuous} if for all decreasing sequences $(A_n)_{n\in \N}\subseteq\mathcal{E}$ it holds that $\lim_{n\to\infty}\nu(A_n)=\nu\left(\bigcap_{n=1}^\infty A_n\right)$, and \emph{continuous} if it is upward and downward continuous.
A capacity  is called a \emph{probability measure} if for all pairwise disjoint sequences $(A_n)_{n\in \N}\subseteq\mathcal{E}$ it holds that $\nu(\bigcup_{n=1}^\infty A_n)=\sum_{k=1}^\infty \nu(A_n)$. We denote the set of probability measures as $\mathcal{M}_1(\Theta,\mathcal{E})$.

For a continuous capacity $\nu$ , the \emph{core of $\nu$}, denoted by $\mathfrak{C}(\nu)$, is defined as 
$$\mathfrak{C}(\nu)=\left\{\mu\in \mathcal{M}_{1}(\Theta,\mathcal{E}): \mu(A)\geq \nu(A)\text{ for all }A\in \mathcal{E}\right\}.$$   
The following are common properties of a continuous capacity $\nu$. 
\begin{itemize}\setlength{\itemsep}{1pt}
    \item[] Supermodularity: For all $A,B\in \mathcal{E}$, $\nu(A)+\nu(B)\leq \nu(A\cup B)+\nu(A\cap B)$.
    \item[] Submodularity: For all $A,B\in \mathcal{E}$, $\nu(A)+\nu(B)\geq \nu(A\cup B)+\nu(A\cap B)$.
    \item[] Balanced: It holds that $\mathfrak{C}(\nu)\neq \varnothing$.
    \item[] Exactness: For all $A\in \mathcal{E}$, $\nu(A)=\min_{\mu\in \mathfrak{C}(\nu)}\mu(A)$.
\end{itemize}
For more on the core and the previous properties of capacities, see \cite{D94} and \cite{MM04}. 

Denote the set of real-valued bounded Borel functions on $(\Theta,\mathcal{E})$ by $\mathcal{X}(\Theta,\mathcal{E})$. Constant acts in $\mathcal{X}(\Theta,\mathcal{E})$ are identified with constants in $\R$. For $\eta,\zeta\in \mathcal{X}(\Theta,\mathcal{E})$, we write $\eta\geq \zeta$ when $\eta(s)\geq \zeta(s)$ for all $s\in \Theta$.
Given a capacity $\nu$, the \emph{Choquet integral} with respect to $\nu$ is given by
$$\int \eta\d\nu=\int_0^\infty \nu(\eta>x)\d x+\int_{-\infty}^0\left(\nu(\eta>x)-1\right)\d x,~~\eta\in \mathcal{X}(\Theta,\mathcal{E}).$$ 
We say that $\eta,\zeta\in \mathcal{X}$ are \emph{comonotonic} if for all $s,s'\in \Theta,$
$(\eta(s)-\eta(s'))(\zeta(s)-\zeta(s'))\geq 0.$ 
As shown by \cite{S86}, a function $I:\mathcal{X}(\Theta,\mathcal{E})\to\mathbb{R}$ can be represented as $I(\eta)=\int_{\Omega}\eta \d\nu$ for some capacity $\nu$ if and only if $I$ is monotone ($I(\eta)\geq I(\zeta)$ whenever $\eta\geq \zeta$), comonotonic additive ($I(\eta+\zeta)=I(\eta)+I(\zeta)$ for comonotonic $\eta,\zeta$), and normalized ($I(1)=1$).

\subsection{The Savage setting}

\label{sec:sav}

Throughout the paper, we consider a measurable space $(\Omega, \mathcal{F})$ representing the set of possible future states of the world. An element of the set $\mathcal{X}(\Omega,\mathcal{F})$ is called an \emph{act} and, for notational convenience, we will simply denote this set as $\mathcal{X}$. Also for notational convenience, we will denote the set  $\mathcal{M}_1(\Omega, \mathcal{F})$, also called models, by $\Delta$. Given $\mu\in \Delta$ and $X,Y\in \mathcal{X}$, we say that $X\geq_{\mu}^{\mathrm{as}} Y$ if $\mu(X\geq Y)=1$.  Given $X\in \mathcal{X}$, we write $\|X\|=\sup_{\omega\in \Omega}|X(\omega)|$. A set $\mathcal{Y}\subseteq \mathcal{X}$ is said to be \emph{bounded} if $\sup_{X\in \mathcal{Y}}\|X\|<\infty$ and \emph{B-closed} if, for all bounded $(X_n)_{n\in N}\subseteq\mathcal{Y}$ satisfying $X_n\to X$ pointwise, it follows that $X\in \mathcal{Y}$.

Given a set $\mathfrak{S}$, a \emph{preference relation on $\mathfrak{S}$} is a total preorder; that is, (i) for all $\zeta,\eta\in \mathfrak{S}$, $\zeta\succsim \eta$ or $\eta\succsim \zeta$; (ii) for all $\zeta,\eta,\xi\in \mathcal{X}$, if $\zeta\succsim \eta$ and $\eta\succsim \xi$ then $\zeta\succsim \eta$. The Savage setting studies preference relations on $\mathcal{X}$ that represent a decision-maker's (DM) preferences. To ease later exposition, when we only say ``preference relation'', this is to be interpreted as a preference relation on $\mathcal{X}$.

In what follows, we introduce an axiom that will later be used in the axiomatization of a decision model under ambiguity in Section \ref{sec:Axioms}. We present it now because it will also be referenced in Section \ref{sec:two}.
\axiomplain{M}{For all $X,Y\in \mathcal{X}$, if $X\geq Y$ then $X\succsim Y$.}
Axiom (M) is standard for preference relations in the Savage setting. If the act $X$ pays more than the act $Y$ for every state of the world, then it is natural that the DM would prefer $X$ to $Y$.

\subsection{The pure-risk setting, distributions, and statistics}

\label{sec:risk}

Let $\mathcal{D}$ denote the set of compactly supported Borel probability measures on $\mathbb{R}$. We will refer to  $\mathcal{D}$ as the set of distributions. In decision theory, the study of preference relations on distributions is known as the pure-risk setting.

Many decision models in the Savage setting begin by reducing the problem to the pure-risk setting, e.g., the SEU model. This reduction is carried out as follows.
Suppose the DM does not perceive any ambiguity and describes their uncertainty with the unique probability measure $\mu\in \Delta$. Given any act $X\in\mathcal{X}$, the DM may represent this act with the distribution given by the pushforward probability measure $X_{\#}\mu$ defined by 
$$X_{\#}\mu(B)=\mu(X^{-1}(B)),~~B\in \mathcal{B}(\mathbb{R}),$$
where $\mathcal{B}(\mathbb{R})$ denotes the Borel-$\sigma$-algebra on $\mathbb{R}$ and $X^{-1}$ denote the preimage of $X$. This process allows the DM to move from the Savage setting to the pure-risk setting. Furthermore, if the DM has a preference relation on distributions $\succsim^{\ell}$, the DM is able to induce a preference relation $\succsim$ by
\begin{equation}
    \label{eq:ps}
    X\succsim Y\iff X_{\#}\mu \succsim^{\ell} Y_{\#}\mu,~~ X,Y\in \mathcal{X}.
\end{equation}
We call preference relations satisfying \eqref{eq:ps}  \emph{probabilistically sophisticated}, slightly relaxing the original definition of \cite{MS92} via stochastic dominance.

Given $Q\in \mathcal{D}$, the \emph{support of $Q$}, denote by $\supp(Q)$, is the smallest closed set $C\subseteq\mathbb{R}$ satisfying $Q(\mathbb{R}\backslash C)=0$. A collection $\mathcal{C}\subseteq \mathcal{D}$ is said to have \emph{uniformly bounded support} if there exists a compact interval that contains the support of each $Q\in \mathcal{C}$. Given $Q\in \mathcal{D}$, we define the \emph{survival function of $Q$} by $S_Q(x)=Q((x,\infty))$ for all $x\in \mathbb{R}$ and the \emph{quantile function of $Q$}
by $$q_Q(\alpha)=\inf\{x\in \mathbb{R}:S_Q(x)< 1-\alpha\},~~\alpha\in (0,1);$$ 
the mapping
$\VaR_\alpha: Q\mapsto q_Q(\alpha) $ is  called the \emph{Value-at-Risk of $Q$} at level $\alpha$ in finance.
Given an increasing (always in the non-strict sense) function $f:\mathbb{R}\to\mathbb{R}$ and $Q\in \mathcal{D}$, we define the distribution $f_{\#}Q$ by
$$f_{\#}Q(B)=Q(f^{-1}(B)),~~B\in \mathcal{B}(\mathbb{R}),$$
where $f^{-1}$ denote the preimage of $f$. It holds that $f_{\#}Q\in \mathcal{D}$ since $f$ is increasing. Given $(Q_n)_{n\in \N}\subseteq \mathcal{D}$ and $Q\in \mathcal{D}$, we say  $Q_n\to Q$ \emph{in distribution} if $\lim_{n\to\infty}\int_{\mathbb{R}} f\d Q_n= \int_{\mathbb{R}} f\d Q$ for all continuous and bounded $f:\mathbb{R}\to\mathbb{R}$.

For $Q,P\in \mathcal{D}$, we write  $Q\geq_{\mathrm{fsd}}P$ if $q_Q(\alpha)\geq q_P(\alpha)$ for all $\alpha\in (0,1)$, $Q\geq_{\mathrm{ssd}}P$ if
    \begin{equation}
        \int_{\mathbb{R}}f \d Q \geq \int_{\mathbb{R}}f \d P\label{eq:partial}
    \end{equation}
for all increasing concave functions $f:\mathbb{R}\to\mathbb{R}$, and 
$Q\geq_{\mathrm{cv}}P$ if \eqref{eq:partial} holds for all concave functions $f:\mathbb{R}\to\mathbb{R}$. Each of $\geq_{\mathrm{fsd}}$, $\geq_{\mathrm{ssd}}$, and $\geq_{\mathrm{cv}}$ defines a partial order on $\mathcal{D}$. Given $\mu\in \Delta$ and a partial order $\geq$ on $\mathcal{D}$, we can induce a partial order $\geq^{\mu}$ on $\mathcal{X}$ by
$$X\geq^{\mu}Y\iff X_{\#}\mu\geq Y_{\#}\mu,~~X,Y\in \mathcal{X}.$$
For a partial order $\geq$ on $\mathcal{D}$,  $\mathcal{C}\subseteq \mathcal{D}$, and $Q\in \mathcal{D}$, we say $\mathcal{C}\geq Q$ if for all $P\in \mathcal{C}$, $P\geq Q$. For $\mathcal{C}\subseteq \mathcal{D}$, the \emph{first-order infimum of $\mathcal{C}$}, denoted by $\bigwedge_{\mathrm{fsd}}\mathcal{C}$, is defined as the distribution $Q$ satisfying $P\geq_{\mathrm{fsd}}Q\geq_{\mathrm{fsd}}P'$ for all $P\in \mathcal{C}$ and $P'\in \mathcal{D}$ satisfying $\mathcal{C}\geq_{\mathrm{fsd}} P'$. Remark that if $Q$ exists, it is unique. If $\mathcal{C}\subseteq \mathcal{D}$ has uniformly bounded support, then $\bigwedge_{\mathrm{fsd}}\mathcal{C}$ exists (see e.g., \citealp{MWW25}).

A \emph{statistic} is a mapping $\gamma:\mathcal{D}\to\mathbb{R}$ such that for all $Q,P\in \mathcal{D}$,
    $$Q\geq_{\mathrm{fsd}}P ~(\text{resp.}~ Q>_\mathrm{fsd}P)\implies \gamma(Q)\geq \gamma(P)~(\text{resp.}~ \gamma(Q)>\gamma(P)),$$
    and for all sequences $(Q_n)_{n\in \N}\subseteq \mathcal{D}$ with uniformly bounded support converging in distribution to some $Q\in \mathcal{D}$, it holds that
    $\gamma(Q_n)\to\gamma(Q).$ A statistic $\gamma$ is said to be a \emph{certainty-equivalent statistic} if $\gamma(\delta_c)=c$ for all $c\in \mathbb{R}$, where $\delta_c$ denotes the Dirac measure at $c$.
    Statistics can be used to represent preference relations over distributions. That is, given the statistic $\gamma$, we can define a preference relation on distributions $\succsim^{\ell}$ by 
    \begin{equation}
        \label{eq:repLot}
        Q\succsim^{\ell}P\iff \gamma(Q)\geq \gamma(P),~~Q,P\in \mathcal{D}.
    \end{equation}
    \begin{proposition}\label{prop:riskfun}
    For a given $\succsim^{\ell}$ in \eqref{eq:repLot},  
        the statistic $\gamma$ is ordinally unique, 
        and there exists a unique certainty-equivalent statistic $\gamma$ satisfying \eqref{eq:repLot}.
    \end{proposition}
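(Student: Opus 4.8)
The plan is to prove the two assertions in order, deriving the uniqueness of the certainty-equivalent statistic from ordinal uniqueness. For \textbf{ordinal uniqueness}, suppose $\gamma_1$ and $\gamma_2$ are two statistics both representing $\succsim^{\ell}$ via \eqref{eq:repLot}. I would define $\phi$ on $\im(\gamma_1)$ by setting $\phi(\gamma_1(Q)):=\gamma_2(Q)$. Well-definedness is immediate: if $\gamma_1(Q)=\gamma_1(P)$ then $Q\succsim^{\ell}P$ and $P\succsim^{\ell}Q$ by \eqref{eq:repLot}, so $\gamma_2(Q)=\gamma_2(P)$. The same chain of equivalences with strict inequalities shows $\phi$ is strictly increasing on $\im(\gamma_1)$. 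Hence $\gamma_2=\phi\circ\gamma_1$ for a strictly increasing $\phi$, which is exactly ordinal uniqueness. Note this argument uses only that both maps represent $\succsim^{\ell}$, not that they are statistics.

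For \textbf{existence} of a certainty-equivalent statistic, fix a statistic $\gamma_0$ representing $\succsim^{\ell}$ (one exists by hypothesis) and define $g(c)=\gamma_0(\delta_c)$ for $c\in\mathbb{R}$. Since $\delta_c>_{\mathrm{fsd}}\delta_{c'}$ whenever $c>c'$, strict fsd-monotonicity of $\gamma_0$ makes $g$ strictly increasing; applying the continuity requirement in the definition of a statistic to $\delta_{c_n}\to\delta_c$ (whose supports are uniformly bounded once $c_n$ lies in a bounded interval) makes $g$ continuous. Thus $g$ is a homeomorphism from $\mathbb{R}$ onto the open interval $\im(g)$, with continuous strictly increasing inverse $g^{-1}$.

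The key step is to show $\gamma_0(Q)\in\im(g)$ for every $Q\in\mathcal{D}$, so that $\gamma:=g^{-1}\circ\gamma_0$ is well-defined. Since $Q$ is compactly supported, pick $[a,b]\supseteq\supp(Q)$; comparing quantile functions gives $\delta_a\leq_{\mathrm{fsd}}Q\leq_{\mathrm{fsd}}\delta_b$, so monotonicity of $\gamma_0$ yields $g(a)\leq\gamma_0(Q)\leq g(b)$, and the intermediate value theorem places $\gamma_0(Q)$ in $\im(g)$. Setting $\gamma=g^{-1}\circ\gamma_0$, I would then verify $\gamma$ is a statistic: composing with the strictly increasing continuous $g^{-1}$ preserves fsd-monotonicity (both strict and non-strict), continuity along uniformly-bounded sequences converging in distribution, and the representation \eqref{eq:repLot}; moreover $\gamma(\delta_c)=g^{-1}(g(c))=c$, so $\gamma$ is certainty-equivalent.

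Finally, \textbf{uniqueness} of the certainty-equivalent statistic follows from ordinal uniqueness: if $\gamma_1,\gamma_2$ are both certainty-equivalent and represent $\succsim^{\ell}$, write $\gamma_2=\phi\circ\gamma_1$ with $\phi$ strictly increasing. Because $\gamma_1(\delta_c)=c$ for all $c$, we have $\im(\gamma_1)=\mathbb{R}$, and evaluating at Dirac measures gives $c=\gamma_2(\delta_c)=\phi(c)$ for every $c$, i.e.\ $\phi=\mathrm{id}$ and $\gamma_1=\gamma_2$. I expect the main obstacle to be the range argument $\gamma_0(Q)\in\im(g)$ together with confirming that $g^{-1}$ inherits the correct continuity, since the statistic axiom only guarantees continuity along sequences with uniformly bounded support; it is precisely the compact-support assumption built into $\mathcal{D}$ that makes the fsd-sandwiching and the intermediate value theorem go through cleanly.
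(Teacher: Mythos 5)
Your proposal is correct and follows essentially the same route as the paper: the fsd-sandwiching of a compactly supported $Q$ between Dirac measures plus the intermediate value theorem to place $\gamma_0(Q)$ in $\im(g)$, the definition $\gamma=g^{-1}\circ\gamma_0$, and the evaluation at Dirac measures to pin down uniqueness all mirror the paper's argument. The only (harmless) variation is that you derive ordinal uniqueness directly from the representation, obtaining a strictly increasing $\phi$ with $\gamma_2=\phi\circ\gamma_1$ (which is in fact automatically continuous, since it maps the interval $\im(\gamma_1)$ onto the interval $\im(\gamma_2)$), whereas the paper routes this step through the certainty-equivalent construction to exhibit the transformation explicitly as $\tilde f\circ f^{-1}$.
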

    
    The two defining properties of a statistic are natural, being nothing more than monotonicity and a mild continuity property, called compact continuity, studied by \cite{CM95}. 

    A \emph{von Neumann--Morgenstern (vNM) utility function} is a strictly increasing continuous function $u:\mathbb{R}\to\mathbb{R}$. A \emph{distortion function} is a strictly  increasing continuous function $g:[0,1]\to[0,1]$ with $g(0)=0$ and $g(1)=1$. The following are examples of statistics commonly used in decision theory. (a) The expected value statistic
    $$\mathbb{E}(Q)=\int_{\mathbb{R}}xQ(\d x),~~Q\in \mathcal{D}.$$ (b) The expected utility model of \cite{vNM47} corresponds to the statistic
    $$\gamma_{u}^{\mathrm{EU}}(Q)=\int_{\mathbb{R}}u\d Q=\mathbb{E}(u_{\#}Q),~~Q\in \mathcal{D},$$
    where $u$ is a vNM utility function. (c) The dual utility model of \cite{Y87} corresponds to the statistic
    $$\gamma_{g}^{\mathrm{DU}}(Q)=\int_0^\infty g(S_Q(x))\d x+\int_{-\infty}^0 (g(S_Q(x))-1)\d x,~~Q\in \mathcal{D},$$
    where $g$ is a distortion function. (d) The rank-dependent utility model of \cite{Q82} corresponds to the statistic
    $$\gamma_{u,g}^{\mathrm{RDU}}(Q)=\gamma^{\mathrm{DU}}_{g}\left(u_{\#}Q\right),~~Q\in \mathcal{D},$$
    where $u$ is a vNM utility function and $g$ is a distortion function.
Note that for $\mu\in \Delta$ and $X\in \mathcal X$,   $$\begin{aligned}
        \gamma^{\mathrm{RDU}}_{u,g}(X_{\#}\mu) =\int_0^\infty g(\mu(u(X)>x))\d x+\int_{-\infty}^0 (g(\mu(u(X)>x))-1)\d x =\int_{\Omega}u(X)\d(g\circ \mu),
    \end{aligned}$$
    which is how rank-dependent utility models are usually formulated.

    \begin{remark}
        Since, in general, the above statistics are not continuous with respect to convergence in distribution, the weaker notion of compact continuity is essential to our study.
    \end{remark}

\section{Two frameworks for separating pure risk and ambiguity}

\label{sec:two}

Both frameworks we discuss provide preference relations that clearly distinguish between pure risk and ambiguity. 
In these frameworks, we will assume that the DM represents the risk component of their preferences with a preference relation on distributions $\succsim^{\ell}$. Additionally, we will assume that $\succsim^{\ell}$ can be represented by a statistic as in \eqref{eq:repLot}. Where the frameworks differ is how they define and incorporate the ambiguity component of the preferences. 

\subsection{The evaluate-then-aggregate framework}

\label{sec:EA}

We equip the set of models $\Delta$ with the weak* topology, defined as the coarsest topology under which every map $\mu \mapsto \mu(A)$ is continuous for all $A \in \mathcal{F}$. It is common in decision theory to equip the space of models with the weak* topology, and no familiarity with this topology is needed for the discussion that follows. We denote by $\Sigma$ the corresponding Borel $\sigma$-algebra. 

In this framework, to describe the ambiguity component of the preferences, we will use the formulation proposed in \cite{KMM05} to axiomatize the smooth ambiguity model. In this framework, the DM also has a preference relation $\succsim^{m}$ on $\mathcal{X}(\Delta,\Sigma)$, which is hereby called a preference relation on model-based acts. We will assume that $\succsim^{m}$ satisfies the following regularity properties.
\axiomplainb{M1}{For all $\eta,\zeta\in \mathcal{X}(\Delta,\Sigma)$, if $\eta\geq\zeta$ then $\eta\succsim^{m}\zeta$.}

\axiomplainc{M2}{For all $\eta\in \mathcal{X}(\Delta,\Sigma)$, $\{c\in \mathbb{R}:\eta\succsim^{m} c\}$ and $\{c\in \mathbb{R}:c\succsim^{m} \eta\}$ are closed.}

\axiomplaina{M3}{For $a,b\in \mathbb{R}$, if $a>b$ then $a\succ^{m} b.$ }
\begin{proposition}\label{prop:labelGood1}
    Under Properties (M1)--(M3), there exists $\rho:\mathcal{X}(\Delta,\Sigma)\to\mathbb{R}$ with $\rho(\eta)\geq \rho(\zeta)$ for all $\eta,\zeta\in \mathcal{X}(\Delta,\Sigma)$
    satisfying $\eta\geq\zeta$, and $\rho(a)>\rho(b)$ for all $a,b\in \mathbb{R}$ satisfying $a>b$, such that
    \begin{equation}\label{eq:amPr}
    \eta\succsim^{m}\zeta\iff \rho(\eta)\geq \rho(\zeta),~~\eta,\zeta\in \mathcal{X}(\Delta,\Sigma).
\end{equation}
\end{proposition}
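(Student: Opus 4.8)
The plan is to realize $\rho$ as a \emph{certainty-equivalent} functional: for each $\eta\in\mathcal{X}(\Delta,\Sigma)$ I would set $\rho(\eta)$ to be the unique constant $c\in\mathbb{R}$ (viewed as a constant model-based act) satisfying $c\sim^{m}\eta$, where $\sim^{m}$ denotes the symmetric part of $\succsim^{m}$. Once such a $c$ is shown to exist and be unique, the three asserted properties fall out almost immediately, so the real content is the well-definedness of this constant.

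A preliminary step I would record first is that $\succsim^{m}$ restricted to constants coincides with the usual order on $\mathbb{R}$, i.e.\ $a\succsim^{m}b\iff a\ge b$ for $a,b\in\mathbb{R}$: the implication $a>b\Rightarrow a\succ^{m}b$ is exactly (M3), $a=b$ is reflexivity, and the converse follows by contraposition from (M3) together with completeness of the preorder. This identification will be used throughout. Next, fixing $\eta$ and writing $A=\{c\in\mathbb{R}:c\succsim^{m}\eta\}$ and $B=\{c\in\mathbb{R}:\eta\succsim^{m}c\}$, I would check that $A$ is an up-closed and $B$ a down-closed subset of $\mathbb{R}$ (this uses transitivity and the strict ordering of constants from (M3)), that both are closed (this is precisely (M2)), and that $A\cup B=\mathbb{R}$ (completeness). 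Since $\eta$ is bounded, say $m\le\eta\le M$ pointwise, (M1) gives $M\succsim^{m}\eta$ and $\eta\succsim^{m}m$, so $M\in A$ and $m\in B$; a short argument using (M3) rules out $A=\mathbb{R}$ or $B=\mathbb{R}$, so $A=[\alpha,\infty)$ and $B=(-\infty,\beta]$ with finite endpoints. The covering condition $A\cup B=\mathbb{R}$ forces $\alpha\le\beta$, and then $A\cap B=[\alpha,\beta]$ is exactly the set of constants indifferent to $\eta$.

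The crux is to rule out $\alpha<\beta$: if there were two distinct constants $\alpha<\beta$ with $\alpha\sim^{m}\eta\sim^{m}\beta$, transitivity would give $\alpha\sim^{m}\beta$, contradicting $\beta\succ^{m}\alpha$ from (M3). Hence $\alpha=\beta$, and I set $\rho(\eta):=\alpha$, the unique constant with $\rho(\eta)\sim^{m}\eta$; for a constant act $a$ this yields $\rho(a)=a$. To finish, I would verify the representation \eqref{eq:amPr}: if $\eta\succsim^{m}\zeta$, then chaining $\rho(\eta)\sim^{m}\eta\succsim^{m}\zeta\sim^{m}\rho(\zeta)$ gives $\rho(\eta)\succsim^{m}\rho(\zeta)$, which by the preliminary step is $\rho(\eta)\ge\rho(\zeta)$; reading the same chain backwards gives the converse. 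Monotonicity ($\eta\ge\zeta\Rightarrow\rho(\eta)\ge\rho(\zeta)$) then follows from (M1) and this representation, and strict monotonicity on constants is immediate from $\rho(a)=a$.

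I expect the main obstacle to be precisely the well-definedness of the certainty equivalent, namely showing $A\cap B$ is a single point. Everything else is routine order-theoretic bookkeeping; the delicate part is that the argument genuinely needs both (M2) and (M3) acting together, since without the closedness in (M2) the intersection $A\cap B$ could be empty, and without the strict ordering of constants in (M3) it could be a nondegenerate interval, in either case obstructing a single-valued $\rho$.
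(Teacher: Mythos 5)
Your proof is correct and follows essentially the same route as the paper's: both construct $\rho(\eta)$ as a certainty equivalent by showing the sets $\{c:\eta\succsim^{m}c\}$ and $\{c:c\succsim^{m}\eta\}$ are closed half-lines (via (M1)--(M3)) whose intersection contains a constant indifferent to $\eta$, and then read off the representation from the strict ordering of constants. Your extra step establishing that this constant is unique (ruling out $\alpha<\beta$ by transitivity against (M3)) is a harmless refinement of the paper's argument, which simply takes the supremum of the lower set without needing uniqueness.
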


Using the preference relation on distributions $\succsim^{\ell}$, acts can be naturally transformed into model-based acts, as described below. Start by fixing the unique certainty-equivalent statistic $\gamma$ that represents $\succsim^{\ell}$ (Proposition \ref{prop:riskfun}). Given $X\in \mathcal{X}$, consider the mapping $\mathfrak{R}_X:\Delta\to\mathcal{D}$ defined by
$$\mathfrak{R}_X(\mu)=X_{\#}\mu,~~\mu\in \Delta.$$
The mapping $\mathfrak{R}_X$ allows us to define the model-based act $\gamma\circ \mathfrak{R}_X$. We show in Appendix \ref{app:two} that $\gamma\circ \mathfrak{R}_X\in \mathcal{X}(\Delta,\Sigma)$ for each $X\in \mathcal{X}$. Intuitively, given $X\in \mathcal{X}$ and $\mu\in\Delta$, the value $\gamma\circ \mathfrak{R}_X(\mu)$ can be seen as an evaluation, determined by $\succsim^{\ell}$, of the distribution $X_{\#}\mu$.

To couple the risk component $\succsim^{\ell}$ and the ambiguity component $\succsim^{m}$ into a preference relation $\succsim$, we assume that the DM's preference relation satisfies 
$$X\succsim Y\iff \gamma\circ \mathfrak{R}_X\succsim^{m}\gamma\circ \mathfrak{R}_Y,~~X,Y\in \mathcal{X}.$$
Therefore, the representation functional $\rho$ in \eqref{eq:amPr} aggregates the evaluation maps as
\begin{equation}\label{eq:EA}
    X\succsim Y\iff \rho(\gamma\circ \mathfrak{R}_X)\geq \rho(\gamma\circ \mathfrak{R}_Y),~~X,Y\in \mathcal{X}.
\end{equation}
We refer to any preference relation $\succsim$ satisfying \eqref{eq:EA} as an \emph{evaluate-then-aggregate (ETA) preference relation}. It is clear that any ETA preference relation satisfies Axiom (M). Additionally, if we take 
$$\rho(\eta)=\eta(\mu),~~\eta\in \mathcal{X}(\Delta,\Sigma),$$
for some $\mu\in \Delta$, then the ETA preference relation in \eqref{eq:EA} is probabilistically sophisticated.

Many standard decision models under ambiguity
can be formulated as ETA preference relations, and we list four below. Most of these models were first developed and axiomatized in the setting of \cite{AA63}; here we translate them to the Savage setting. In each example, we use the preference relation on distributions derived from the statistic $\gamma_{u}^{\mathrm{EU}}$ 
for some vNM utility $u$, 
which corresponds to the expected utility model of \cite{vNM47}. Remark that the certainty-equivalent statistic associated with $\gamma_{u}^{\mathrm{EU}}$ is the functional $u^{-1}\circ\gamma_{u}^{\mathrm{EU}}$, where $u^{-1}:\im(u)\to\mathbb{R}$ is the inverse of $u$.
(a) The maxmin expected utility model
of \cite{GS89} corresponds to
$$\rho(\eta)=\inf_{\mu\in \mathcal{Q}}u(\eta(\mu)),~~\eta\in \mathcal{X}(\Delta,\Sigma),$$
where $\mathcal{Q}\subseteq \Delta$. (b) The smooth ambiguity model of \cite{KMM05} corresponds to 
$$\rho(\eta)=\int_{\Delta}v(\eta)\d\pi,~~\eta\in \mathcal{X}(\Delta,\Sigma),$$
where $v$ is a vNM utility, and $\pi\in \mathcal{M}_1(\Delta,\Sigma)$. (c) The variational preferences model of \cite{MMR06} corresponds to
$$\rho(\eta)=\inf_{\mu\in \Delta}\left(u(\eta(\mu))+c(\mu)\right),~~\eta\in \mathcal{X}(\Delta,\Sigma),$$
where $c:\Delta\to[0,\infty]$ satisfies $c(\mu)<\infty$ for some $\mu\in \Delta$. (d) The $\alpha$-maxmin model of \cite{GMM04} corresponds to
$$\rho(\eta)=\alpha\inf_{\mu\in \mathcal{Q}}u(\eta(\mu))+(1-\alpha)\sup_{\mu\in \mathcal{Q}}u(\eta(\mu)),~~\eta\in \mathcal{X}(\Delta,\Sigma),$$
where $\mathcal{Q}\subseteq \Delta$ and $\alpha\in [0,1]$.

In the following section, we propose an alternative to the evaluate-then-aggregate framework.

\subsection{The aggregate-then-evaluate framework}

\label{sec:AE}

In the ATE framework, to describe the ambiguity component of the preferences, the DM fixes a mapping $\mathfrak{D}:\mathcal{X}\to\mathcal{D}$, which we call an \emph{act-to-distribution mapping}. Intuitively, the DM uses an act-to-distribution mapping to aggregate their ambiguity regarding an act into a single distribution. Given $\mu\in \Delta$, the \emph{probabilistic act-to-distribution mapping}, denoted by $\mathfrak{D}_{\mu}$, is defined by $\mathfrak{D}_{\mu}(X)=X_{\#}\mu$ for all $X\in \mathcal{X}.$ How much the DM perceives and reacts to ambiguity is determined by how much their act-to-distribution mapping $\mathfrak{D}$ differs from a probabilistic act-to-distribution mapping. 
\begin{example}
    \label{example:atd}
    Let $\mathcal{Q}\subseteq \Delta$, we can define the act-to-distribution mapping $\mathfrak{D}_{\mathcal{Q}}$ by
    $$\mathfrak{D}_{\mathcal{Q}}(X)=\bigwedge_{\mathrm{fsd}}X_{\#}\mathcal{Q},~~X\in \mathcal{X},$$
    where $X_{\#}\mathcal{Q}=\{X_{\#}\mu:\mu\in \mathcal{Q}\}$.
    To see this mapping is well-defined, given $X\in \mathcal{X}$, find $a,b\in \mathbb{R}$ such that $a\geq X\geq b$. For all $\mu \in \Delta$, since $\mu(a\geq X\geq b)=1$, we have $\supp(X_{\#}\mu)\subset [b,a]$. Therefore, the set $\mathcal{X}_{\#}\mathcal{Q}$ has uniformly bounded support and  $\bigwedge_{\mathrm{fsd}}X_{\#}\mathcal{Q}$ exists. 
\end{example}

The following are natural properties for act-to-distribution mappings, and they are all satisfied by probabilistic act-to-distribution mappings. The first two properties are straightforward and need no further motivation.
\axiomplainb{R1}{For all $X,Y\in \mathcal{X}$, if $X\geq Y$ then $\mathfrak{D}(X)\geq_{\mathrm{fsd}}\mathfrak{D}(Y)$.}

\axiomplaina{R2}{For all bounded $(X_n)_{n\in \N}\subseteq \mathcal{X}$, if $X_n\to X$ pointwise then $\mathfrak{D}(X_n)\to \mathfrak{D}(X)$ in distribution.}

To motivate the final property, consider a decision maker with a vNM utility function $u$ and an act-to-distribution mapping $\mathfrak{D}$. The function $u$ converts monetary outcomes into their utility scale that reflects how appealing each outcome is to the decision maker. Given an act $X \in \mathcal{X}$, there are two ways the decision maker can form a distribution on the utility scale. First, they can apply $\mathfrak{D}$ to the act to obtain a monetary distribution, that is, $\mathfrak{D}(X)$, and then apply $u$ to this distribution, that is, $u_{\#}(\mathfrak{D}(X))$. Second, they can apply $u$ directly to the act $X$ and then apply $\mathfrak{D}$ to the resulting utility-valued act, that is, $\mathfrak{D}(u(X))$. The following property says that these two procedures should obtain the same distribution.

\axiomplain{R3}{For all $X\in \mathcal{X}$ and vNM utility functions $u$, it holds that
    $\mathfrak{D}(u(X))=u_{\#}(\mathfrak{D}(X)).$}

To couple the risk component $\succsim^{\ell}$ and the ambiguity component $\mathfrak{D}$ into a preference relation $\succsim$, the DM starts by fixing a statistic $\gamma$ that represents $\succsim^{\ell}$. The preference relation $\succsim$ of the DM can be represented by
\begin{equation}\label{eq:AE}
    X\succsim Y\iff \gamma\circ\mathfrak{D}(X)\geq \gamma\circ\mathfrak{D}(Y),~~X,Y\in \mathcal{X}.
\end{equation}
We refer to any preference relation $\succsim$ given by \eqref{eq:AE} for any statistic $\gamma:\mathcal D\to \R$ and any act-to-distribution mapping $\mathfrak{D}:\X\to \mathcal D$ as an \emph{aggregate-then-evaluate  (ATE) preference relation}. It is clear that any preference relation $\succsim$ satisfying \eqref{eq:AE}, with $\mathfrak{D}$ satisfying Property (R1), must also satisfy Axiom (M). Additionally, if we take $\mathfrak{D}$ to be a probabilistic act-to-distribution mapping, then the preference relation $\succsim$ is probabilistically sophisticated. 

In what follows, we present a method for constructing act-to-distribution mappings inspired by \cite{SW94}. The key distinction is that our definition applies to all acts, rather than being restricted to those with a finite range.
Given a continuous capacity $\nu$, the \emph{Choquet act-to-distribution mapping}, denoted by $\mathfrak{D}_{\nu}$, is defined by
 $$S_{\mathfrak{D}_{\nu}(X)}(x)=\nu(X>x)~\text{for all}~x\in \mathbb{R}.$$ The mapping $\mathfrak{D}_{\nu}$ is well-defined because $\nu$ is continuous and survival functions uniquely characterize elements of $\mathcal{D}$. It is straightforward to show that if $\nu$ is a probability measure, then $\mathfrak{D}_\nu$ is a probabilistic act-to-distribution mapping. Also, by the well-known layer-cake representation, we have
 $$\begin{aligned}
     \mathbb{E}\circ \mathfrak{D}_\nu(X)&=\int_0^\infty S_{\mathfrak{D}_\nu(X)}(x)\d x+\int_{-\infty}^0 (S_{\mathfrak{D}_\nu(X)}(x)-1)\d x\\&=\int_0^\infty \nu(X>x)\d x+\int_{-\infty}^0 (\nu(X>x)-1)=\int_{\Omega} X\d\nu,~~X\in \mathcal{X}.
 \end{aligned}$$

 Preference relations $\succsim$ satisfying \eqref{eq:AE} with $\mathfrak{D}$ given by a Choquet act-to-distribution mapping are said to be \emph{Choquet ATE} preference relations. We have the following theorem characterizing all Choquet act-to-distribution mappings.

\begin{theorem}
    \label{theo:riskEq}
    Let $\mathfrak{D}$ be an act-to-distribution mapping. Then $\mathfrak{D}$ satisfies Properties (R1)--(R3) if and only if $\mathfrak{D}$ is a Choquet act-to-distribution mapping.
\end{theorem}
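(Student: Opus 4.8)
The plan is to prove both implications, with the ``only if'' direction carrying essentially all of the work. For the ``if'' direction, I would fix a continuous capacity $\nu$ and verify each property of $\mathfrak{D}_\nu$ directly from its defining relation $S_{\mathfrak{D}_\nu(X)}(x)=\nu(X>x)$. Property (R1) is immediate, since $X\geq Y$ gives $\{X>x\}\supseteq\{Y>x\}$, hence $\nu(X>x)\geq\nu(Y>x)$, which is exactly $\mathfrak{D}_\nu(X)\geq_{\mathrm{fsd}}\mathfrak{D}_\nu(Y)$. Property (R3) is the change-of-variables identity $\nu(u(X)>x)=\nu(X>u^{-1}(x))$, valid because a vNM utility $u$ is a homeomorphism onto its range; comparing survival functions yields $\mathfrak{D}_\nu(u(X))=u_\#\mathfrak{D}_\nu(X)$. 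Property (R2) is the only one needing the continuity of $\nu$: for a bounded pointwise limit $X_n\to X$ and a continuity point $x$ of $S_{\mathfrak{D}_\nu(X)}$, a Fatou-type inequality for continuous capacities (obtained from upward and downward continuity applied to $\liminf$ and $\limsup$ of the sets $\{X_n>x\}$, together with $\nu(X>x)=\nu(X\geq x)$ at continuity points) gives $\nu(X_n>x)\to\nu(X>x)$, i.e. convergence in distribution.

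For the ``only if'' direction, assume (R1)--(R3). The first step pins down constants: applying (R3) with the scaling utility $u(t)=\lambda t$ (so $u(0)=0$) gives $\mathfrak{D}(0)=(\lambda\cdot)_\#\mathfrak{D}(0)$ for every $\lambda>0$, so $\mathfrak{D}(0)$ is invariant under all positive scalings and must equal $\delta_0$; applying (R3) with a translation then yields $\mathfrak{D}(c)=\delta_c$ for every constant $c$. The second step defines the candidate capacity. By (R1) the distribution $\mathfrak{D}(\mathds{1}_A)$ is supported on $[0,1]$, and applying (R3) with any vNM utility $u$ fixing $0$ and $1$ (so that $u(\mathds{1}_A)=\mathds{1}_A$) shows $\mathfrak{D}(\mathds{1}_A)$ is invariant under all such $u$; since such utilities can send any interior point to any other, $\mathfrak{D}(\mathds{1}_A)$ carries no mass in $(0,1)$ and is a two-point distribution on $\{0,1\}$. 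I would set $\nu(A)=\mathfrak{D}(\mathds{1}_A)(\{1\})$. Monotonicity and normalization of $\nu$ follow from (R1) and the constant computation, and continuity of $\nu$ follows from (R2) applied to $\mathds{1}_{A_n}\to\mathds{1}_A$ along monotone sequences of sets.

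The heart of the proof, and the step I expect to be the main obstacle, is showing $\mathfrak{D}(X)=\mathfrak{D}_\nu(X)$ on simple acts. The invariance argument above (now using utilities fixing the finite value set of $X$) shows that for a simple act $X=\sum_{i=1}^n c_i\mathds{1}_{A_i}$ with $c_1>\dots>c_n$ on a partition, $\mathfrak{D}(X)$ is supported on $\{c_1,\dots,c_n\}$. To identify its atoms I would, for each $j$, construct a family of vNM utilities $u_\varepsilon$ that collapse the values $c_1,\dots,c_j$ toward $1$ and $c_{j+1},\dots,c_n$ toward $0$, so that $u_\varepsilon(X)\to\mathds{1}_{B_j}$ pointwise and boundedly, where $B_j=\{X\geq c_j\}=A_1\cup\dots\cup A_j$. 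Then (R3) gives $\mathfrak{D}(u_\varepsilon(X))=(u_\varepsilon)_\#\mathfrak{D}(X)$, whose distributional limit is the two-point distribution placing the cumulative mass of $\mathfrak{D}(X)$ on $\{c_1,\dots,c_j\}$ at $1$; meanwhile (R2) gives $\mathfrak{D}(u_\varepsilon(X))\to\mathfrak{D}(\mathds{1}_{B_j})$. Uniqueness of distributional limits forces this cumulative mass to equal $\nu(B_j)$, and letting $j$ range over $1,\dots,n-1$ recovers every atom, so that $S_{\mathfrak{D}(X)}(x)=\nu(X>x)$ for all $x$. The delicate points are the explicit construction of $u_\varepsilon$ (piecewise-linear interpolations extended to strictly increasing continuous functions) and checking that convergence in distribution of two-point distributions on $\{0,1\}$ lets one read off the masses.

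Finally I would remove the simple-act restriction by a density argument: every bounded $X\in\mathcal{X}$ is a uniform, hence bounded pointwise, limit of simple acts $X_n$. Property (R2) gives $\mathfrak{D}(X_n)\to\mathfrak{D}(X)$, and since $\mathfrak{D}_\nu$ was shown to satisfy (R2) in the ``if'' direction, also $\mathfrak{D}_\nu(X_n)\to\mathfrak{D}_\nu(X)$. Because $\mathfrak{D}(X_n)=\mathfrak{D}_\nu(X_n)$ for every $n$, uniqueness of the limit yields $\mathfrak{D}(X)=\mathfrak{D}_\nu(X)$, completing the identification $\mathfrak{D}=\mathfrak{D}_\nu$.
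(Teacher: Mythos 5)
Your proof is correct, and in the ``only if'' direction it takes a genuinely different route from the paper's. The paper first extends (R3) from vNM utilities to all increasing continuous $f$ (approximating $f$ by the strictly increasing functions $f_n(x)=f(x)+e^{x-n}$ and passing to the limit with (R2)), uses this to prove the support inclusion $\supp(\mathfrak{D}(X))\subseteq\overline{\im(X)}$, and then---via Denneberg's comonotonic decomposition $X=f(X+Y)$, $Y=g(X+Y)$ together with an external quantile-commutation result---derives the structural identities $\mathfrak{D}(X+Y)=\mathfrak{D}(X)\oplus\mathfrak{D}(Y)$ for comonotonic $X,Y$ and $\mathfrak{D}(aX)=a\otimes\mathfrak{D}(X)$; nonnegative simple acts are then handled through their comonotonic layer decomposition $X=\sum_{n}(a_n-a_{n-1})\id_{A_n}$, and general acts by monotone approximation plus a constant shift. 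You bypass the entire $\oplus$-calculus: your invariance argument under utilities fixing the value set is a self-contained substitute for the paper's support lemma, and your collapsing families $u_\varepsilon$, combined with (R2) applied to $u_\varepsilon(X)\to\id_{B_j}$ and uniqueness of weak limits, read off the cumulative masses $\nu(B_j)$ directly on simple acts; your closing uniform-approximation step is also somewhat more streamlined than the paper's two-stage argument (monotone limits for nonnegative acts, then translation). What your route buys is elementarity---no appeal to the comonotonic decomposition or to quantile commutation, and (R3) is only ever invoked for genuine vNM utilities, with (R2) absorbing all limits; what the paper's route buys is the comonotonic additivity of a general $\mathfrak{D}$ satisfying (R1)--(R3) as an explicit intermediate fact, parallel to Proposition \ref{prop:oper} for $\mathfrak{D}_\nu$, which is structurally useful elsewhere. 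Your ``if'' direction matches the paper in substance: the Fatou-type sandwich $\nu(X>x)\le\nu\left(\liminf_n\{X_n>x\}\right)$ and $\nu\left(\limsup_n\{X_n>x\}\right)\le\nu(X\ge x)$, with $\nu(X\ge x)=\nu(X>x)$ at continuity points of the survival function by downward continuity, is exactly the content of Lemma \ref{lem:con}, organized through set-theoretic limits rather than the monotone envelopes $\sup_{m\ge n}Z_m$ and $\inf_{m\ge n}Z_m$.
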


Theorem \ref{theo:riskEq} is related to \citet[Theorem 1]{LSW21}, which characterizes transformations $T:\mathcal{D}\to\mathcal{D}$   satisfying
$$S_{T(Q)} = g \circ S_Q,~~ Q \in \mathcal{D} $$
for a distortion function $g$, with a property similar to (R3). The main difference between the two results is that we consider mappings defined on acts, whereas \citet{LSW21} studied transformations defined on distributions.

Given a continuous capacity $\nu$ and a vNM utility $u$, by Theorem \ref{theo:riskEq}, we have 
 $$\gamma^{\mathrm{EU}}_u\circ\mathfrak{D}_{\nu}(X)=\mathbb{E}(u_{\#}\mathfrak{D}_{\nu}(X))=\mathbb{E}\circ\mathfrak{D}_{\nu}(u(X))=\int_{\Omega}u(X)\d\nu,~~X\in \mathcal{X}.$$ Therefore, preference relations given by the Choquet expected utility model of \cite{S89}  with a continuous capacity  are examples of Choquet ATE preference relations.

\section{Axioms and representation for the Choquet ATE model}

\label{sec:Axioms}

We next state and discuss several additional axioms for preference relations $\succsim$, which 
lead to a main representation result in 
Theorem \ref{theo:mainrep}. 

For the rest of this paper, we will work in a Savage setting with  an exogenous source of unambiguous events.
The source of unambiguous events is modeled by a fixed sub-$\sigma$-algebra $\mathcal{G}$. 
Since the events in $\mathcal{G}$ are unambiguous, the DM can specify a probability measure $\mathbb{P}\in \mathcal{M}_1(\Omega,\mathcal{G})$ 
that assigns likelihoods to the unambiguous events. As is common in risk management, we will assume that $\mathbb{P}$ is atomless; that is, $\{\mathbb{P}(B) : B \in \mathcal{G},~B \subseteq A\}=[0, \mathbb{P}(A)] $ for every $A \in \mathcal{G}$. We will refer to this setting as the \emph{Savage setting with pure risk}. We call $\mathbb{P}$ the \emph{exogenous probability measure}.

We write $\mathcal{X}(\mathcal{G})$ for the subspace of $\mathcal{X}$ consisting of $\mathcal{G}$-measurable acts. Under our setting, since acts $X\in \mathcal{X}(\mathcal{G})$ exhibit no distributional ambiguity, the preference relation $\succsim$ restricted to $\mathcal{X}(\mathcal{G})$ should be determined solely by the distributions of these acts under the exogenous probability measure $\mathbb{P}$. This brings us to our second axiom of risk conformity. 
\axiomplain{RC}{Risk conformity: For all $X,Y\in \mathcal{X}(\mathcal{G})$ satisfying $X=^{\mathbb{P}}_{\mathrm{fsd}}Y$, $X\simeq Y$.}
For further discussions of Axiom (RC) in general decision models, see \cite{BBLW25}, and for its role in risk assessment, see \cite{SVW25}. If the preference relation $\succsim$ satisfies Axiom (RC), then the restriction of $\succsim$ to $\mathcal{X}(\mathcal{G})$, called the \emph{pure-risk preferences} of $\succsim$, corresponds to a preference relation over distributions. 

We say that an act-to-distribution mapping $\mathfrak{D}$ is \emph{risk conforming} if $\mathfrak{D}(X)=X_{\#}\mathbb{P}$ for all $X\in \mathcal{X}(\mathcal{G}).$ Let $\succsim$ be an ATE preference relation. In the Savage setting with pure risk, to remain consistent with the interpretation of an act-to-distribution mapping, $\mathfrak{D}$ in \eqref{eq:AE} should be risk conforming. This follows as the $\mathcal{G}$-measurable acts have no distributional ambiguity. A capacity $\nu$
is \emph{risk conforming} if $\nu|_{\mathcal{G}}=\mathbb{P}$.
In this case, risk conformity means that, for unambiguous events, the capacity coincides with the exogenous probability measure $\p$. When $\nu$ describes the DM's subjective evaluation regarding the likelihoods of events, risk conformity is essential, as the DM believes that $\mathbb{P}$ correctly specifies the likelihood of unambiguous events. Let $\nu$ be a continuous capacity. It is straightforward to show that $\nu$ is risk conforming if and only if the Choquet act-to-distribution mapping $\mathfrak{D}_\nu$ is risk conforming.

The next two axioms are regularity requirements.  
\axiomplainb{SRM}{Strict risk monotonicity: For all $X,Y\in \mathcal{X}(\mathcal{G})$ satisfying $X>^{\mathbb{P}}_{\mathrm{as}} Y$, $X\succ Y$.}

\axiomplaina{C}{Continuity: For all $X\in \mathcal{X}$, $\{Y\in \mathcal{X}:X\succsim Y\}$ and $\{Y\in \mathcal{X}:Y\succsim X\}$ are B-closed.}

Axiom (SRM) reflects the fact that the DM believes that the probability measure $\mathbb{P}$ is well-specified on $\mathcal{G}$. Thus, a stronger form of monotonicity, using distributions, holds for acts in $\mathcal{X}(\mathcal{G})$. 
Axiom (C) is a reformulation of the compact continuity in \cite{CM95} on acts, and it is commonly used in the literature on risk measures; see \cite{FS16}.

The axioms discussed thus far concern only certain notions of regularity and essentially impose no specific functional forms on the preference relations. These regularity axioms imply the existence of matching probabilities, as defined below. For notational simplicity, we will write $A\succsim B$ for $A,B\in \mathcal{F}$ to mean $\id_A\succsim\id_B$, where $\id_A$ denotes the binary act which yields $1$ if $\omega\in A$ and $0$ otherwise.
\begin{definition}
Given a preference relation $\succsim$, a function $\nu:\mathcal{F}\to [0,1]$ is a \emph{$\succsim$-matching probability} if for all $A\in \mathcal{F}$, there exists $R_A\in \mathcal{G}$ such that $\nu(A)=\mathbb{P}(R_A)$ and $A\simeq R_A$.    
\end{definition}

Intuitively, a $\succsim$-matching probability uses the restriction of the preference relation $\succsim$ on $\{\id_A: A\in \mathcal{F}\}$ to describe  ambiguity on the level of events, by equalizing each ambiguous event to an unambiguous one.  This idea requires the Savage setting with pure risk, as $\mathcal{G}$ needs to contain the unambiguous events.  The following result is taken from \citet[Proposition 2]{VW25}.

\begin{proposition}
    \label{lemma:match}
    For every preference relation $\succsim$ satisfying Axioms (M), (RC), (SRM), and (C), there exists a unique $\succsim$-matching probability $\nu$. Furthermore, $\nu$ is a continuous risk-conforming capacity, and 
    $$A\succsim B\iff \nu(A)\geq \nu(B),~~A,B\in \mathcal{F}.$$
\end{proposition}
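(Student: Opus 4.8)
The plan is to build a ``likelihood ruler'' of unambiguous events and match every event to a point on it. Using atomlessness of $\mathbb{P}$ on $\mathcal{G}$, I would first fix a $\mathcal{G}$-measurable $U:\Omega\to[0,1]$ whose pushforward $U_{\#}\mathbb{P}$ is the uniform law on $[0,1]$ (a standard consequence of atomlessness), and set $R_p=\{U<p\}\in\mathcal{G}$ for $p\in[0,1)$ together with $R_1=\Omega$, so that $\mathbb{P}(R_p)=p$ and $R_p\subseteq R_{p'}$ whenever $p\le p'$. For $p'<p$ we have $\id_{R_p}>^{\mathbb{P}}_{\mathrm{as}}\id_{R_{p'}}$, so Axiom (SRM) gives $\id_{R_p}\succ\id_{R_{p'}}$; thus $p\mapsto\id_{R_p}$ is a strictly $\succsim$-increasing scale. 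For any $A\in\mathcal{F}$, Axiom (M) sandwiches $\id_{R_0}=\id_\varnothing\precsim\id_A\precsim\id_\Omega=\id_{R_1}$.

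Next I would locate the matching point. Set $L=\{p:\id_A\succsim\id_{R_p}\}$ and $V=\{p:\id_{R_p}\succsim\id_A\}$, which are nonempty ($0\in L$, $1\in V$) and cover $[0,1]$ by totality. Since $p_n\uparrow p$ forces $\id_{R_{p_n}}\to\id_{R_p}$ pointwise while $p_n\downarrow p$ forces $\id_{R_{p_n}}\to\id_{\{U\le p\}}$ pointwise, the B-closedness of the contour sets in Axiom (C) shows $L$ and $V$ are closed, once I note that $\{U\le p\}$ and $R_p$ differ by the $\mathbb{P}$-null set $\{U=p\}$ and hence are $\succsim$-equivalent by Axiom (RC). As $[0,1]$ is connected, $L\cap V\neq\varnothing$, and any $p$ in the intersection satisfies $\id_{R_p}\simeq\id_A$; Axiom (SRM) forces this $p$ to be unique. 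I then define $\nu(A)=p$ and $R_A=R_{\nu(A)}$, which makes $\nu$ a $\succsim$-matching probability and yields $A\succsim B\iff\id_{R_{\nu(A)}}\succsim\id_{R_{\nu(B)}}\iff\nu(A)\ge\nu(B)$ by the strict monotonicity of the scale.

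It remains to verify the stated properties. That $\nu$ is a capacity is read off from the representation: $\nu(\varnothing)=0$, $\nu(\Omega)=1$, and $A\subseteq B\Rightarrow\id_A\precsim\id_B$ (Axiom (M)) $\Rightarrow\nu(A)\le\nu(B)$. Risk conformity $\nu|_{\mathcal{G}}=\mathbb{P}$ follows because for $A\in\mathcal{G}$ the acts $\id_A$ and $\id_{R_{\mathbb{P}(A)}}$ share the same $\mathbb{P}$-distribution, so Axiom (RC) gives $\id_A\simeq\id_{R_{\mathbb{P}(A)}}$ and hence $\nu(A)=\mathbb{P}(A)$. For continuity, take $A_n\uparrow A$; monotonicity gives $\nu(A_n)\uparrow\ell\le\nu(A)$, and since $\id_{A_n}\precsim\id_{R_\ell}$ for all $n$ while $\id_{A_n}\to\id_A$ pointwise, the B-closed lower contour set of $\id_{R_\ell}$ contains $\id_A$, so $\nu(A)\le\nu(R_\ell)=\ell$; downward continuity is symmetric. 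Finally, uniqueness of $\nu$ follows from uniqueness of the matching point: any other matching probability $\nu'$ yields $\id_{R_A}\simeq\id_A\simeq\id_{R'_A}$ with $R_A,R'_A\in\mathcal{G}$, and the representation restricted to $\mathcal{G}$ forces $\mathbb{P}(R_A)=\mathbb{P}(R'_A)$, i.e.\ $\nu=\nu'$.

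The main obstacle I expect is the attainment-and-continuity step: pinning down the exact matching point and proving that $\nu$ is upward and downward continuous both hinge on transferring the B-closedness of Axiom (C) along the ruler, where the right-hand limits of $\id_{R_p}$ overshoot $R_p$ by the null set $\{U=p\}$; reconciling this through Axiom (RC) is the delicate part, whereas the capacity and representation properties are routine bookkeeping.
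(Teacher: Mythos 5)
Your proof is correct, and there is nothing in this paper to compare it against: the authors do not prove Proposition \ref{lemma:match} here but import it verbatim from \citet[Proposition 2]{VW25}. Your argument is a complete, self-contained derivation that fits the paper's own toolkit --- the uniform $\mathcal{G}$-measurable ruler $U$ with nested events $R_p=\{U<p\}$ is exactly the device the authors deploy in their proof of Theorem \ref{theo:mainrep}, and your reading of $>^{\mathbb{P}}_{\mathrm{as}}$ (almost-sure $\geq$ plus strict inequality with positive probability) is the one the paper implicitly uses there, so (SRM) applies along the ruler as you claim. The delicate step you flagged is handled properly: for $p_n\downarrow p$ the pointwise limit of $\id_{R_{p_n}}$ is $\id_{\{U\leq p\}}$, which has the same $\mathbb{P}$-distribution as $\id_{R_p}$, so Axiom (RC) transfers membership in the B-closed contour sets back to $\id_{R_p}$; combined with connectedness of $[0,1]$ and the (SRM)-uniqueness of the matching point, this gives existence, the representation $A\succsim B\iff\nu(A)\geq\nu(B)$, and both directions of continuity. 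Two small points deserve one line each in a polished write-up: the endpoint case ($R_1=\Omega$ versus the pointwise limit $\{U<1\}$) should be reconciled by the same null-set/(RC) remark you make for interior $p$; and in the uniqueness step, ``the representation restricted to $\mathcal{G}$'' should be spelled out as: risk conformity gives $\nu(R_A)=\mathbb{P}(R_A)$ and $\nu(R'_A)=\mathbb{P}(R'_A)$, so $\id_{R_A}\simeq\id_{R'_A}$ forces $\mathbb{P}(R_A)=\mathbb{P}(R'_A)$ via the already-established representation --- which is evidently what you intend. Neither point is a gap.
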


As we will see in Theorem \ref{theo:mainrep}, matching probabilities will fully determine the ambiguity component of the preferences. To discuss the risk component of the preferences, we have that these regularity axioms are sufficient to represent the pure-risk preferences via a statistic.

\begin{proposition}
    \label{prop:need1}
    If the preference relation $\succsim$ satisfies Axioms (M), (RC), (SRM), and (C), then there exists a statistic $\gamma$ such that
    \begin{equation}\label{eq:risk}
        X\succsim Y \iff \gamma(X_{\#}\mathbb{P})\geq \gamma(Y_{\#}\mathbb{P}),~~X,Y\in \mathcal{X}(\mathcal{G}).
    \end{equation}
\end{proposition}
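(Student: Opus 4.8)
The plan is to reduce the restriction $\succsim|_{\mathcal{X}(\mathcal{G})}$ to a preference relation on distributions and then represent that relation by its certainty equivalent. First I would use the atomlessness of $\mathbb{P}$ on $\mathcal{G}$ to fix a $\mathcal{G}$-measurable random variable $U$ uniform on $(0,1)$; for any $Q\in\mathcal{D}$ with $\supp(Q)\subseteq[a,b]$ the act $q_Q(U)$ lies in $\mathcal{X}(\mathcal{G})$ and has law $Q$, so $\{X_{\#}\mathbb{P}:X\in\mathcal{X}(\mathcal{G})\}=\mathcal{D}$. By Axiom (RC), $X\succsim Y$ depends on $X,Y\in\mathcal{X}(\mathcal{G})$ only through $X_{\#}\mathbb{P}$ and $Y_{\#}\mathbb{P}$, so the prescription $Q\succsim^{\ell}P\iff X\succsim Y$ (for any representatives) yields a well-defined total preorder $\succsim^{\ell}$ on $\mathcal{D}$. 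The desired conclusion is then equivalent to representing $\succsim^{\ell}$ by a statistic.

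Next I would record the monotonicity of $\succsim^{\ell}$. If $Q\geq_{\mathrm{fsd}}P$, the comonotone coupling $X=q_Q(U)\geq q_P(U)=Y$ holds pointwise, so $X\succsim Y$ by Axiom (M) and hence $Q\succsim^{\ell}P$; if $Q>_{\mathrm{fsd}}P$ the same coupling gives $\mathbb{P}(X>Y)=1$, so $X\succ Y$ by Axiom (SRM) and $Q\succ^{\ell}P$. In particular $\delta_c\succ^{\ell}\delta_{c'}$ whenever $c>c'$. To build $\gamma$, fix $Q$ with $\supp(Q)\subseteq[a,b]$; the contour sets $\{c:Q\succsim^{\ell}\delta_c\}$ and $\{c:\delta_c\succsim^{\ell}Q\}$ are closed, since pulling back Axiom (C) along a bounded pointwise-convergent sequence of constant acts keeps them closed, they cover $\mathbb{R}$ by totality, and they contain $a$ and $b$ respectively by the previous step. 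As $[a,b]$ is connected, the two closed sets intersect, producing $c^{\ast}$ with $Q\simeq^{\ell}\delta_{c^{\ast}}$, unique by strict monotonicity on constants. Setting $\gamma(Q)=c^{\ast}$ defines a certainty-equivalent map on all of $\mathcal{D}$, and $Q\succsim^{\ell}P\iff\delta_{\gamma(Q)}\succsim^{\ell}\delta_{\gamma(P)}\iff\gamma(Q)\geq\gamma(P)$, so $\gamma$ represents $\succsim^{\ell}$ and inherits weak and strict $\mathrm{fsd}$-monotonicity from the coupling arguments.

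The delicate step, which I expect to be the main obstacle, is compact continuity of $\gamma$. For $(Q_n)$ of uniformly bounded support converging in distribution to $Q$, the certainty equivalents $c_n=\gamma(Q_n)$ lie in a fixed compact interval, so it suffices to show every convergent subsequence $c_{n_k}\to c^{\ast}$ satisfies $c^{\ast}=\gamma(Q)$. Here I would couple through $U$: the acts $X_n=q_{Q_n}(U)$ and $X=q_Q(U)$ satisfy $X_n\to X$ at every $\omega$ for which $U(\omega)$ is a continuity point of $q_Q$, a $\mathbb{P}$-full event since the discontinuity set is countable and $\mathbb{P}$ is atomless. Because Axiom (C) is stated for \emph{everywhere} pointwise convergence, I would redefine $X_n$ and $X$ on the $\mathcal{G}$-measurable null set where $U$ hits a discontinuity—this changes neither their laws nor their $\gamma$-values, by (RC)—to obtain bounded acts converging everywhere.

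Finally, I would sandwich with constants. Using $c_{n_k}\to c^{\ast}$, for each $\varepsilon>0$ the constant act $c^{\ast}+\varepsilon$ eventually dominates, and $c^{\ast}-\varepsilon$ is eventually dominated by, $X_{n_k}\simeq c_{n_k}$ via Axiom (M); the B-closedness from Axiom (C) transfers these relations to the pointwise limit $X$, and letting $\varepsilon\downarrow0$ yields $c^{\ast}\succsim X$ and $X\succsim c^{\ast}$, i.e. $X\simeq c^{\ast}$ and $\gamma(Q)=c^{\ast}$. Thus every subsequential limit equals $\gamma(Q)$, forcing $\gamma(Q_n)\to\gamma(Q)$, so $\gamma$ is a statistic and the representation $X\succsim Y\iff\gamma(X_{\#}\mathbb{P})\geq\gamma(Y_{\#}\mathbb{P})$ holds for all $X,Y\in\mathcal{X}(\mathcal{G})$. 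The two places demanding care are the existence of certainty equivalents (the continuity-plus-connectedness argument) and, above all, the null-set modification that lets Axiom (C) apply despite the coupling producing only almost-sure convergence.
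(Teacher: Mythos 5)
Your proposal is correct and takes essentially the same approach as the paper, whose argument for this proposition is embedded in the proof of Theorem \ref{theo:mainrep}: certainty equivalents for acts are extracted from Axioms (M), (SRM), and (C), well-definedness on laws comes from (RC), every $Q\in\mathcal{D}$ is realized as $q_Q(U)$ for a $\mathcal{G}$-measurable uniform $U$ (with strict fsd-monotonicity via right-continuity of quantiles and (SRM)), and compact continuity is handled by exactly your null-set modification $\id_{A^c}q_{Q_n}(U)$ followed by a constant-sandwich using B-closedness. Your subsequence formulation of the continuity step differs only cosmetically from the paper's $\liminf/\limsup$ sandwich with $Z_n=\inf_{m\geq n}X_m$.
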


By Proposition \ref{prop:riskfun}, there exists a unique certainty-equivalent statistic $\gamma$ that satisfies \eqref{eq:risk}. For any preference relation $\succsim$ satisfying Axioms (M), (RC), (SRM), and (C), we refer to this unique certainty-equivalent statistic as the \emph{$\succsim$-CES}.

The final axiom is from \cite{SW94} and is used to couple the risk and ambiguity components of the preferences. Before introducing it, we require some preliminary definitions. Given a preference relation $\succsim$, we can define the survival relation $\succsim^*$ given $$X\succsim^* Y\iff \{X>x\}\succsim\{Y>x\} ~\text{for all}~x\in\mathbb{R},~~X,Y\in \mathcal{X}.$$
This allows us to define our final axiom.
\axiomplain{CD}{Cumulative dominance: For all $X,Y\in \mathcal{X}$, if $X\simeq^*Y$ then $X\simeq Y$.}
The following result shows that the above axioms fully characterize Choquet ATE preference relations in the Savage setting with pure risk.

\begin{theorem}
    \label{theo:mainrep}
    The preference relation $\succsim$ satisfies Axioms (M), (RC), (SRM), (C), and (CD) if and only if $\succsim$ is a Choquet ATE preference relation, that is, there exists a statistic $\gamma$ and a risk-conforming capacity $\nu$ such that
    \begin{equation}
        \label{eq:mainrep}
        X\succsim Y\iff \gamma\circ\mathfrak{D}_{\nu}(X)\geq \gamma\circ\mathfrak{D}_{\nu}(Y).
    \end{equation}
    Moreover, in \eqref{eq:mainrep}, the statistics $\gamma$ is ordinally unique, and the capacity $\nu$ is unique and coincides with the $\succsim$-matching probability.
\end{theorem}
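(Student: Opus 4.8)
The plan is to prove both implications, starting with the easy ``only if'' direction (representation $\Rightarrow$ axioms). Assuming $\succsim$ is Choquet ATE with statistic $\gamma$ and risk-conforming capacity $\nu$, Theorem \ref{theo:riskEq} tells us that $\mathfrak{D}_\nu$ satisfies Properties (R1)--(R3), and combining this with the monotonicity and compact continuity built into the definition of a statistic yields Axioms (M) and (C) at once. Axioms (RC) and (SRM) follow from risk conformity of $\mathfrak{D}_\nu$ (so that $\mathfrak{D}_\nu(X)=X_{\#}\mathbb{P}$ for $X\in\mathcal{X}(\mathcal{G})$) together with, respectively, the fsd-monotonicity and strict fsd-monotonicity of $\gamma$. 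The only point needing a short computation is (CD): applying $\mathfrak{D}_\nu$ to the indicator act $\id_A$ produces the two-point law $\nu(A)\delta_1+(1-\nu(A))\delta_0$, so strict fsd-monotonicity of $\gamma$ gives $A\simeq B\iff\nu(A)=\nu(B)$. Consequently $X\simeq^*Y$ says exactly that $\nu(X>x)=\nu(Y>x)$ for every $x$, i.e. $S_{\mathfrak{D}_\nu(X)}=S_{\mathfrak{D}_\nu(Y)}$, hence $\mathfrak{D}_\nu(X)=\mathfrak{D}_\nu(Y)$; applying $\gamma$ delivers $X\simeq Y$.

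For the ``if'' direction I would first extract the two ingredients of the representation. Proposition \ref{lemma:match} supplies the unique $\succsim$-matching probability $\nu$, a continuous risk-conforming capacity with $A\succsim B\iff\nu(A)\geq\nu(B)$ (so that $A\simeq B\iff\nu(A)=\nu(B)$), and Proposition \ref{prop:need1} supplies a statistic $\gamma$ representing the pure-risk preferences, that is $X\succsim Y\iff\gamma(X_{\#}\mathbb{P})\geq\gamma(Y_{\#}\mathbb{P})$ on $\mathcal{X}(\mathcal{G})$. The crux is then to bootstrap the representation from $\mathcal{G}$-measurable acts to all of $\mathcal{X}$, and Axiom (CD) is precisely the tool for this.

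The key construction: given any $X\in\mathcal{X}$, set $Q=\mathfrak{D}_\nu(X)$, which is a compactly supported law. Because $\mathbb{P}$ is atomless on $\mathcal{G}$, the space $(\Omega,\mathcal{G},\mathbb{P})$ carries a uniform random variable, and hence there is a bounded $\mathcal{G}$-measurable act $\tilde X$ with $\tilde X_{\#}\mathbb{P}=Q$ (for instance $\tilde X=q_Q(U)$). For such $\tilde X$, risk conformity ($\nu|_{\mathcal{G}}=\mathbb{P}$) and the definition of the Choquet mapping give, for every $x$,
$$\nu(\tilde X>x)=\mathbb{P}(\tilde X>x)=S_{Q}(x)=S_{\mathfrak{D}_\nu(X)}(x)=\nu(X>x),$$
so $\{X>x\}\simeq\{\tilde X>x\}$ for all $x$, i.e. $X\simeq^*\tilde X$, and Axiom (CD) forces $X\simeq\tilde X$. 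Performing the same reduction for $Y$ with some surrogate $\tilde Y$, transitivity of $\succsim$ together with Proposition \ref{prop:need1} yields
$$X\succsim Y\iff\tilde X\succsim\tilde Y\iff\gamma(\tilde X_{\#}\mathbb{P})\geq\gamma(\tilde Y_{\#}\mathbb{P})\iff\gamma\circ\mathfrak{D}_\nu(X)\geq\gamma\circ\mathfrak{D}_\nu(Y),$$
which is exactly \eqref{eq:mainrep}.

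I expect the main obstacle to be this reduction step: recognizing that (CD) combined with risk conformity lets one replace an arbitrary act $X$ by a pure-risk surrogate $\tilde X\in\mathcal{X}(\mathcal{G})$ sharing the same $\nu$-survival profile, and verifying that such a surrogate with the prescribed distribution $\mathfrak{D}_\nu(X)$ genuinely exists (this is where atomlessness of $\mathbb{P}$ enters). For the uniqueness claims, restricting any valid representation to indicator acts shows, as in the (CD) computation above, that its capacity $\nu'$ must satisfy $A\succsim B\iff\nu'(A)\geq\nu'(B)$ and be risk-conforming; hence $\nu'$ is a $\succsim$-matching probability and equals $\nu$ by the uniqueness part of Proposition \ref{lemma:match}. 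Finally, the statistic in any representation represents the same pure-risk preference relation, so it is ordinally unique by Proposition \ref{prop:riskfun}.
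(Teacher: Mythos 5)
Your proposal is correct, and at its core it follows the same route as the paper: both proofs pivot on the observation that Axiom (CD), combined with the $\succsim$-matching probability from Proposition \ref{lemma:match} and atomlessness of $\mathbb{P}$, lets one replace an arbitrary act $X$ by a $\mathcal{G}$-measurable surrogate $\tilde X=q_{\mathfrak{D}_\nu(X)}(U)$ with the same $\nu$-survival profile, hence $X\simeq \tilde X$; and the converse direction (via Lemmas \ref{lem:FSD} and \ref{lem:con} and the distortion $\varphi(x)=\gamma(x\delta_1+(1-x)\delta_0)$) and the uniqueness arguments match the paper step for step. The one genuine organizational difference is how the statistic is obtained. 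You invoke Proposition \ref{prop:need1} as a black box to supply a statistic $\gamma$ representing the pure-risk preferences and then chain $X\simeq\tilde X$, $Y\simeq\tilde Y$ through transitivity; the paper instead builds $\gamma$ from scratch inside the proof, first establishing certainty equivalents $c_X$ from Axioms (M), (SRM), (C), defining $\gamma(Q)=c_X$ for any $X$ with $\mathfrak{D}_\nu(X)=Q$ (well-defined precisely by the (CD) argument you use for the surrogates), and then verifying by hand that $\gamma$ is a statistic --- fsd-monotonicity via quantile couplings, strictness via (SRM), and, most laboriously, compact continuity via a liminf/limsup envelope argument with $Z_n=\inf_{m\geq n}X_m$ and Axiom (C). Your shortcut is legitimate, since Proposition \ref{prop:need1} is stated before the theorem and does not involve (CD), so there is no circularity; it buys a cleaner, more modular proof, while the paper's version is self-contained (its statistic-verification portion is, in effect, a proof of Proposition \ref{prop:need1} with (RC) replacing (CD) for well-definedness). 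Be aware that the technical content you are presupposing --- in particular the compact-continuity verification --- is exactly where the paper spends most of its effort, so a fully self-contained write-up would still owe that argument somewhere.
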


Let $\succsim$ be a Choquet ATE preference relation in \eqref{eq:mainrep}. Assume that $\gamma$ is given by the statistic $\gamma^{\mathrm{RDU}}_{u,g}$ corresponding to the rank-dependent utility model. We have
$$\gamma^{\mathrm{RDU}}_{u,g}\circ\mathfrak{D}_{\nu}(X)=\gamma^{\mathrm{DU}}_g(u_{\#}\mathfrak{D}_{\nu}(X))=\gamma^{\mathrm{DU}}_g(\mathfrak{D}_{\nu}(u(X)))=\int_{\Omega}u(X)\d(g\circ \nu),~~X\in \mathcal{X}.$$
Therefore, as $\nu$ is risk conforming, we have that $\succsim$ is an example of a Choquet rank-dependent utility preference relation, see \cite{TF95}. This observation will motivate Section \ref{sec:AA}, which discusses ambiguity attitudes.

\begin{example}
    Allowing general statistics in Choquet ATE preference relations leads to new decision models under ambiguity. For example, given $a\in \mathbb{R}\backslash\{0\}$, let $u_a$ be the vNM utility function defined by $u_a(x)=e^{ax}$. For $a\in \mathbb{R}$, define the certainty-equivalent statistic $\gamma_a$ by 
    $$\gamma_a(Q)=\frac{1}{a}\log\left(\gamma_{u_a}^\mathrm{EU}(Q)\right)~\text{if}~a\in \mathbb{R}\backslash\{0\},~\text{and}~\gamma_0(Q)=\mathbb{E}(Q),~~Q\in \mathcal{D}.$$
    The statistic $\gamma_a$ is the entropic risk measure (see \citealp{FS02}). Given a Borel probability measure $\kappa$ on $\mathbb{R}$, \citet{MPST24}  studied a statistic given by
    $$\gamma_{\kappa}(Q)=\int_{\mathbb{R}}\gamma_a(Q)\kappa(\d a),~~Q\in \mathcal{D}.$$
    If $\succsim$ is Choquet ATE preference relation with the $\succsim$-CES   given by $\gamma_{\kappa}$, then  
    \begin{equation}
        \label{eq:amE}
        X\succsim Y\iff \int_{\mathbb{R}}\frac{1}{a}\log\left(\int_\Omega e^{aX}\d\nu\right)\kappa(\d a)\geq \int_{\mathbb{R}}\frac{1}{a}\log\left(\int_\Omega e^{aY}\d\nu\right)\kappa(\d a), ~~X,Y\in \mathcal{X},
    \end{equation}
    where $\nu$ is the $\succsim$-matching probability. The model \eqref{eq:amE}  has a natural interpretation. To account for ambiguity, the DM first evaluates each act using the certainty equivalent under the Choquet expected utility model with utility function $u_a$, for every $a\in \mathbb{R}$. These values are then averaged with respect to the measure $\kappa$. An axiomatization of this model can be obtained by taking the axioms of Theorem \ref{theo:mainrep} and adding the following axiom: for all $X,Y,Z \in \mathcal{X}(\mathcal{G})$ with $Z$ independent of both $X$ and $Y$ under the exogenous probability measure, we have 
    $$X\succsim Y\iff X+Z\succsim Y+Z.$$ This follows directly from \citet[Theorem 1]{MPST24}, which characterizes the class of statistics  $\gamma_\kappa$.
\end{example}

\section{Risk and ambiguity attitudes}
In this section, we study how attitudes toward pure risk and ambiguity can be separated within the ATE framework. We begin by isolating the risk component of the preferences and recalling several classical risk attitudes. We then introduce comparative and absolute ambiguity attitudes and conclude by analyzing how risk and ambiguity attitudes can lead to   preferences for diversification.

\label{sec:RA}

\subsection{Risk attitudes}

\label{sec:RAttitudes}

In this section, we will call the Axioms (M), (RC), (SRM), and (C) the regularity axioms. 
Let $\succsim$ be a preference relation satisfying the regularity axioms. 
Risk attitudes for $\succsim$ correspond to properties of the pure-risk preferences associated with $\succsim$. Intuitively, risk attitudes should be represented by properties of the $\succsim$-CES, which isolates the risk component of $\succsim$. An independent study of risk attitudes is necessary because their interaction with ambiguity attitudes (properties of the act-to-distribution mapping) plays a crucial role in determining the properties of the overall preference relation, as we will see in Section \ref{sec:div}. Additionally, this section is important because the pure-risk preferences of preference relations satisfying the regularity axioms coincide with the preference relations in the pure-risk setting studied in \cite{CL07}, enabling us to draw directly on their results while also establishing new ones. We now recall several standard risk attitudes commonly discussed in the literature.

The first risk attitude we will discuss is the strong risk aversion as studied by \cite{RS70}.
\axiomplain{SRA}{Strong risk aversion:
    For all $X,Y\in \mathcal{X}(\mathcal{G})$, if $X\geq_{\mathrm{ssd}}^{\mathbb{P}} Y$ then $X\succsim Y$.}
Given a preference relation $\succsim$ satisfying the regularity axioms with the $\succsim$-CES denoted by $\gamma$, it is clear that $\succsim$ satisfies Property (SRA) if and only if $\gamma$ is \emph{monotonic with respect to $\geq_{\mathrm{ssd}}$}, i.e., for all $Q,P\in \mathcal{D}$ satisfying $Q\geq_{\mathrm{ssd}}P$,
$\gamma(Q)\geq \gamma(P)$. For example, a direct consequence of \citet[Corollary 3]{SZ08} is that the statistic $\gamma^{\mathrm{RDU}}_{u,g}$ is monotonic with respect to $\geq_{\mathrm{ssd}}$ if and only if $u$ is concave and $g$ is convex.

The next risk attitude we will discuss is risk diversification, which is reformulated from the main property in \cite{D89}. 
\axiomplain{RD}{Risk diversification: For all $X,Y\in \mathcal{X}(\mathcal{G})$ and $\lambda\in [0,1]$, if $X\simeq Y$ then  $\lambda X+(1-\lambda)Y\succsim X.$}
If the preference relation of the DM satisfies Property (RD), then the DM prefers mixtures of equally preferred unambiguous acts. This is consistent with the intuitive notion of diversification. We will discuss the overall preference for diversification in Section \ref{sec:div}.

Given a preference relation $\succsim$ satisfying the regularity axioms, our goal is to characterize Property (RD) through properties of the associated $\succsim$-CES. An important first step in this direction is provided by \citet[Theorem 3]{CL07}, who show that Property (SRA) is equivalent to the following weaker form of Property (RD): For all $X,Y\in \mathcal{X}(\mathcal{G})$ and $\lambda\in [0,1]$, it holds that
 $$
    X=^{\mathbb{P}}_{\mathrm{fsd}}Y \implies \lambda X+(1-\lambda)Y\succsim X.
 $$
Therefore, for a preference relation satisfying the regularity axioms, Property (RD) implies Property (SRA). However, as shown by \cite{D89}, the two properties are not equivalent in general. 

Since Property (RD) is strictly stronger than Property (SRA), it is natural to ask what properties of the $\succsim$-CES, additional to monotonicity with respect to $\geq_{\mathrm{ssd}}$, are needed for Property (RD) to hold. A compelling candidate is a quasiconcavity property. Therefore, we must first introduce an appropriate mixture operation on the set of distributions $\mathcal{D}$. 

Given $Q,P\in \mathcal{D}$, denote by $Q\oplus P$ the element of $\mathcal{D}$ satisfying 
$$\mathrm{VaR}_{\alpha}(Q\oplus P)=\mathrm{VaR}_{\alpha}(Q)+\mathrm{VaR}_{\alpha}(P)~\text{for all}~\alpha\in (0,1).$$
Given $Q\in \mathcal{D}$ and non-negative $a\in \mathbb{R}$ denote by $a\otimes Q$ the element of $\mathcal{D}$ satisfying
$$\mathrm{VaR}_{\alpha}(a\otimes Q)=a \mathrm{VaR}_{\alpha}(Q)~\text{for all}~\alpha\in(0,1).$$
As discussed in \citet[Proposition 3]{LSW21}, the triplet $(\mathcal{D},\oplus,\otimes)$ is a convex cone. The following proposition demonstrates that the above operations naturally appear when considering Choquet act-to-distribution mappings.

\begin{proposition}
    \label{prop:oper}
    Let $\nu$ be a continuous capacity. For comonotonic $X,Y\in \mathcal{X}$ and $\lambda\in [0,1]$, we have
    $$\mathfrak{D}_\nu(X+Y)=\mathfrak{D}_\nu(X)\oplus \mathfrak{D}_\nu(Y)~~\text{and}~~\mathfrak{D}_\nu(\lambda X)=\lambda \otimes \mathfrak{D}_\nu(X).$$
\end{proposition}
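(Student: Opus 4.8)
The plan is to prove both identities at the level of quantile functions. Since $\oplus$ and $\otimes$ are defined through $\VaR$, and since an element of $\mathcal{D}$ is uniquely determined by its survival function (equivalently its quantile function), it suffices to show, for every $\alpha\in(0,1)$, that $\VaR_\alpha(\mathfrak{D}_\nu(X+Y))=\VaR_\alpha(\mathfrak{D}_\nu(X))+\VaR_\alpha(\mathfrak{D}_\nu(Y))$ and $\VaR_\alpha(\mathfrak{D}_\nu(\lambda X))=\lambda\VaR_\alpha(\mathfrak{D}_\nu(X))$. Unwinding the definitions of $\mathfrak{D}_\nu$ and of the quantile function, the single formula I would rely on throughout is
\[ \VaR_\alpha(\mathfrak{D}_\nu(X))=\inf\{x\in\mathbb{R}:\nu(X>x)<\beta\},\qquad \beta:=1-\alpha. \]
Writing $G_X(\beta)$ for the right-hand side, the proposition reduces to showing that $X\mapsto G_X(\beta)$ is additive on comonotonic pairs and positively homogeneous, for each fixed $\beta\in(0,1)$.

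The core of the argument is a nestedness property: if $X,Y$ are comonotonic, then for any $x,y$ the superlevel sets $\{X>x\}$ and $\{Y>y\}$ are nested, i.e.\ one contains the other. This follows by a short contradiction argument straight from the definition of comonotonicity (a point in $\{X>x\}\setminus\{Y>y\}$ together with a point in $\{Y>y\}\setminus\{X>x\}$ would violate $(\,X(s)-X(s')\,)(\,Y(s)-Y(s')\,)\ge0$). Because $\nu$ is monotone, nestedness forces $\nu(\{X>x\}\cup\{Y>y\})=\max(\nu(X>x),\nu(Y>y))$ and $\nu(\{X>x\}\cap\{Y>y\})=\min(\nu(X>x),\nu(Y>y))$; crucially this uses only monotonicity of $\nu$, with no modularity assumption. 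Combining this with the elementary inclusions $\{X+Y>x+y\}\subseteq\{X>x\}\cup\{Y>y\}$ and $\{X>x\}\cap\{Y>y\}\subseteq\{X+Y>x+y\}$, I would prove the two inequalities for $G_{X+Y}(\beta)$ separately. For ``$\le$'', approximate $a:=G_X(\beta)$ and $b:=G_Y(\beta)$ from above by $x_1,y_1$ with $\nu(X>x_1)<\beta$ and $\nu(Y>y_1)<\beta$, and use the union inclusion together with the $\max$ identity to get $\nu(X+Y>x_1+y_1)<\beta$. For ``$\ge$'', note that $\nu(X>x)\ge\beta$ whenever $x<a$ (since $x\mapsto\nu(X>x)$ is non-increasing), decompose any $z<a+b$ as $z=x_0+y_0$ with $x_0<a$ and $y_0<b$, and apply the intersection inclusion with the $\min$ identity to obtain $\nu(X+Y>z)\ge\beta$.

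The scalar identity is then immediate: for $\lambda>0$ the substitution $x=\lambda x'$ yields $\{x:\nu(\lambda X>x)<\beta\}=\lambda\{x':\nu(X>x')<\beta\}$, hence $G_{\lambda X}(\beta)=\lambda G_X(\beta)$, while $\lambda=0$ is handled directly since $\mathfrak{D}_\nu(0)=\delta_0=0\otimes\mathfrak{D}_\nu(X)$. I expect the only genuinely delicate point to be the two-sided infimum argument in the additive part: matching the strict inequalities in the definition of $G_X(\beta)$ against the set inclusions, and in particular the decomposition $z=x_0+y_0$ with $x_0<a$, $y_0<b$, which is solvable precisely because $z<a+b$ (the required range $z-b<x_0<a$ is nonempty iff $z<a+b$). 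The nestedness lemma and the reduction of $\nu$ on unions and intersections to $\max$ and $\min$ are routine once stated.
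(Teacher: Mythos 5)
Your proof is correct, but it takes a genuinely different route from the paper's. The paper's proof is a two-liner that leans on machinery developed in Appendix A: Proposition \ref{prop:quant} identifies $\VaR_\alpha(\mathfrak{D}_\nu(X))$ with the Choquet integral $\mathrm{G}_\alpha^\nu(X)=\int X\,\d(g_\alpha\circ\nu)$ of the distorted capacity $g_\alpha\circ\nu$, after which both identities follow from the standard comonotonic additivity and positive homogeneity of Choquet integrals (in the spirit of \cite{S86}). You instead prove comonotonic additivity of the capacity-based quantile $G_X(\beta)=\inf\{x:\nu(X>x)<\beta\}$ from first principles: the nestedness of the superlevel sets $\{X>x\}$ and $\{Y>y\}$ for comonotonic $X,Y$ (your contradiction argument is valid), the sandwich inclusions $\{X>x\}\cap\{Y>y\}\subseteq\{X+Y>x+y\}\subseteq\{X>x\}\cup\{Y>y\}$, and a two-sided infimum argument whose delicate points — approximating $a,b$ strictly from above for the upper bound, and the decomposition $z=x_0+y_0$ with $x_0<a$, $y_0<b$ (solvable exactly when $z<a+b$) for the lower bound — you handle correctly, including the implicit use that $\{x:\nu(X>x)<\beta\}$ is an up-set. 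In effect you give an elementary, self-contained proof of the comonotonic additivity of Choquet quantiles, using only monotonicity of $\nu$ (continuity enters solely through the well-definedness of $\mathfrak{D}_\nu$), whereas the paper's route buys modularity: Proposition \ref{prop:quant} and the functionals $\mathrm{G}_\alpha^\nu$, $\mathrm{H}_\alpha^\nu$ are reused in the proofs of Theorems \ref{theo:riskEq} and \ref{theo:SAA}, so deriving Proposition \ref{prop:oper} from them comes essentially for free. Your scalar case, including the separate treatment of $\lambda=0$ via $\mathfrak{D}_\nu(0)=\delta_0$, is also fine.
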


We say that a statistic $\gamma$ is \emph{comonotonic quasiconcave} if for all $Q, P\in \mathcal{D}$ and $\lambda\in [0,1]$,
$$\gamma\left(\lambda \otimes Q\oplus (1-\lambda)\otimes P\right)\geq \min\{\gamma(Q),\gamma(P)\}.\footnote{Here $\otimes$ takes precedence over $\oplus$ in the order of operations.}$$
A consequence of \citet[Theorem 3]{WY19} is that the statistic $\gamma^{\mathrm{RDU}}_{u,g}$ is comonotonic quasiconcave if and only if $u$ is concave. Therefore, if $\gamma^{\mathrm{RDU}}_{u,g}$ satisfies monotonicity with respect to $\geq_{\mathrm{ssd}}$, then $\gamma^{\mathrm{RDU}}_{u,g}$ is comonotonic quasiconcave. Therefore, in the case of the statistic $\gamma^{\mathrm{RDU}}_{u,g}$, monotonicity with respect to $\geq_{\mathrm{ssd}}$ implies comonotonic quasiconcavity.

Consider the following risk attitude studied in \cite{CT02}, called comonotonic diversification there.
\axiomplain{WRD}{Weak risk diversification:
    For all comonotonic $X,Y\in \mathcal{X}(\mathcal{G})$ and $\lambda\in [0,1]$, 
    $$X\simeq Y\implies \lambda X+(1-\lambda)Y\succsim X.$$}
It is clear that Property (WRD) is weaker than Property (RD). As it turns out, comonotonic quasiconcavity of the $\succsim$-CES characterizes the Property (WRD).

\begin{proposition}
    \label{prop:coDi}
    Let $\succsim$ be a preference relation satisfying the regularity axioms. Then $\succsim$ satisfies Property (WRD) if and only if 
    the $\succsim$-CES is comonotonic quasiconcave.
\end{proposition}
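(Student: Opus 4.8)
The plan is to pass, via Proposition \ref{prop:need1}, to the $\succsim$-CES $\gamma$, so that on $\mathcal{X}(\mathcal{G})$ we have $X \succsim Y \iff \gamma(X_{\#}\P) \ge \gamma(Y_{\#}\P)$, and then to translate the mixture of comonotonic acts appearing in (WRD) into the $\oplus/\otimes$ mixture of distributions appearing in comonotonic quasiconcavity. The bridge is the following identity: for comonotonic $X, Y \in \mathcal{X}(\mathcal{G})$ and $\lambda \in [0,1]$, the acts $\lambda X$ and $(1-\lambda)Y$ are again comonotonic, so by the comonotonic additivity of the quantile function (equivalently, by Proposition \ref{prop:oper} applied to the probability measure $\P$, viewed as a continuous capacity on $(\Omega,\mathcal{G})$, whose Choquet act-to-distribution mapping on $\mathcal{X}(\mathcal{G})$ is $Z\mapsto Z_{\#}\P$),
\begin{equation}\label{eq:bridge-coDi}
(\lambda X + (1-\lambda) Y)_{\#}\P = \lambda \otimes (X_{\#}\P) \oplus (1-\lambda) \otimes (Y_{\#}\P).
\end{equation}
Writing $Q = X_{\#}\P$ and $P = Y_{\#}\P$, equation \eqref{eq:bridge-coDi} identifies the distribution of the mixed act with $\lambda \otimes Q \oplus (1-\lambda)\otimes P$.

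For the direction ``comonotonic quasiconcave $\Rightarrow$ (WRD)'', I would take comonotonic $X, Y \in \mathcal{X}(\mathcal{G})$ with $X \simeq Y$, so that $\gamma(Q) = \gamma(P)$. Comonotonic quasiconcavity together with \eqref{eq:bridge-coDi} gives
$$\gamma\big((\lambda X + (1-\lambda) Y)_{\#}\P\big) = \gamma(\lambda \otimes Q \oplus (1-\lambda)\otimes P) \ge \min\{\gamma(Q), \gamma(P)\} = \gamma(X_{\#}\P),$$
and since $\lambda X + (1-\lambda)Y \in \mathcal{X}(\mathcal{G})$, this is exactly $\lambda X + (1-\lambda) Y \succsim X$.

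The reverse direction is the main obstacle, because (WRD) only constrains mixtures of \emph{equally preferred} acts, whereas comonotonic quasiconcavity must be verified for arbitrary $Q, P \in \mathcal{D}$, in particular when $\gamma(Q) \ne \gamma(P)$. I would handle this in two steps. First, realize arbitrary distributions as comonotonic acts: since $\P$ is atomless on $\mathcal{G}$, there is a $\mathcal{G}$-measurable $U$ that is uniform under $\P$, and for any $R \in \mathcal{D}$ the act $q_R(U)$ lies in $\mathcal{X}(\mathcal{G})$ (it is bounded because $R$ has compact support) and satisfies $(q_R(U))_{\#}\P = R$; any two such acts $q_Q(U), q_P(U)$ are comonotonic. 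Second, reduce to the equal-value case: assuming without loss of generality $\gamma(Q) \le \gamma(P)$, consider the downward shifts $P_c$ defined by $\VaR_{\alpha}(P_c) = \VaR_{\alpha}(P) - c$ for $c \ge 0$. The map $c \mapsto \gamma(P_c)$ is non-increasing (since $P_{c} \leq_{\mathrm{fsd}} P_{c'}$ whenever $c \ge c'$) and continuous (translation converges in distribution with uniformly bounded support on compact $c$-intervals, so compact continuity of $\gamma$ applies); moreover $\gamma(P_c) \le \gamma(\delta_{M-c}) = M - c \to -\infty$, where $M$ bounds $\supp(P)$ and we use that $\gamma$ is a certainty-equivalent statistic. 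Since $\gamma(P_0) = \gamma(P) \ge \gamma(Q)$, the intermediate value theorem yields $c^{\ast} \ge 0$ with $P' := P_{c^{\ast}}$ satisfying $\gamma(P') = \gamma(Q)$ and $P' \leq_{\mathrm{fsd}} P$.

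To conclude, I would apply (WRD) to the comonotonic acts $q_Q(U) \simeq q_{P'}(U)$: using \eqref{eq:bridge-coDi} with second distribution $P'$,
$$\gamma(\lambda \otimes Q \oplus (1-\lambda) \otimes P') = \gamma\big((\lambda\, q_Q(U) + (1-\lambda)\, q_{P'}(U))_{\#}\P\big) \ge \gamma(Q).$$
Finally, $P' \leq_{\mathrm{fsd}} P$ implies $\lambda \otimes Q \oplus (1-\lambda)\otimes P' \leq_{\mathrm{fsd}} \lambda \otimes Q \oplus (1-\lambda)\otimes P$, since at the level of quantile functions the operations act by $\VaR_{\alpha}(\lambda \otimes Q \oplus (1-\lambda)\otimes P) = \lambda \VaR_{\alpha}(Q) + (1-\lambda)\VaR_{\alpha}(P)$ and are therefore fsd-monotone in each argument. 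Hence fsd-monotonicity of $\gamma$ gives $\gamma(\lambda \otimes Q \oplus (1-\lambda)\otimes P) \ge \gamma(Q) = \min\{\gamma(Q), \gamma(P)\}$, establishing comonotonic quasiconcavity. The delicate point throughout is precisely this fsd-reduction, which is what lets the equal-preference hypothesis of (WRD) cover the strictly ordered case required by quasiconcavity.
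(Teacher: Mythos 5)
Your proof is correct and follows essentially the same route as the paper's: the same bridge between act mixtures and $\oplus/\otimes$ mixtures via comonotonic additivity of quantiles (Proposition \ref{prop:oper}), the same quantile-representation of arbitrary $Q,P$ as comonotonic acts $q_Q(U), q_P(U)$, and the same reduction of the unequal-value case by translating the better alternative downward until indifference, finishing with monotonicity. The only (cosmetic) difference is that you locate the shift constant $c^{\ast}$ by the intermediate value theorem applied to $c\mapsto\gamma(P_c)$ using compact continuity of the statistic, whereas the paper finds $c_0$ at the act level as $\sup\{c\ge 0: X_Q - c\succsim X_P\}$ and invokes Axioms (C) and (M) directly.
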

Therefore, if a preference relation $\succsim$ satisfying the regularity axioms additionally satisfies Property (RD), then the $\succsim$-CES is comonotonic quasiconcave. These observations lead us to the main result
of the section.

\begin{theorem}\label{theo:rD}
    Let $\succsim$ be a preference relation satisfying the regularity axioms. Then $\succsim$ satisfies Property (RD) if and only if $\succsim$ satisfies Properties (SRA) and (WRD). 
\end{theorem}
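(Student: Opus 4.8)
The plan is to prove both directions of the equivalence, using the characterizations of the two weaker properties already established in the excerpt. For the forward direction, suppose $\succsim$ satisfies Property (RD). Since Property (RD) is stated for all $X,Y\in\mathcal{X}(\mathcal{G})$ (not just comonotonic ones), it trivially implies Property (WRD) by restricting to comonotonic pairs. Moreover, the excerpt already records (via \citet[Theorem 3]{CL07}) that Property (RD) implies Property (SRA), since (SRA) is equivalent to the weaker mixing property obtained by replacing the hypothesis $X\simeq Y$ with $X=^{\mathbb{P}}_{\mathrm{fsd}}Y$, and $X=^{\mathbb{P}}_{\mathrm{fsd}}Y$ forces $X\simeq Y$ by Axiom (RC). Thus the forward direction is essentially immediate from results already in hand.

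The substance of the theorem is the converse: assuming both (SRA) and (WRD), I must recover the full (RD). Let $X,Y\in\mathcal{X}(\mathcal{G})$ with $X\simeq Y$ and fix $\lambda\in[0,1]$; the goal is $\lambda X+(1-\lambda)Y\succsim X$. The natural strategy is to pass to the $\succsim$-CES, call it $\gamma$, and work with distributions under $\mathbb{P}$. By Proposition \ref{prop:coDi}, Property (WRD) gives comonotonic quasiconcavity of $\gamma$, and by the discussion following Property (SRA), Property (SRA) gives monotonicity of $\gamma$ with respect to $\geq_{\mathrm{ssd}}$. The key obstacle is that $X$ and $Y$ need not be comonotonic, so I cannot directly apply comonotonic quasiconcavity to them. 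The plan is to introduce a comonotonic coupling: since $\mathbb{P}$ is atomless, I can construct acts $X',Y'\in\mathcal{X}(\mathcal{G})$ that are comonotonic, have the same $\mathbb{P}$-distributions as $X,Y$ respectively (so $X'\simeq X\simeq Y\simeq Y'$ by (RC)), and whose quantile functions add pointwise. Concretely, taking $X'=q_{X_\#\mathbb{P}}(U)$ and $Y'=q_{Y_\#\mathbb{P}}(U)$ for a single $\mathcal{G}$-measurable uniform random variable $U$ makes them comonotonic with the prescribed marginals.

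With this coupling in place, comonotonic quasiconcavity applied to $X',Y'$ yields $\gamma((\lambda X'+(1-\lambda)Y')_\#\mathbb{P})\geq\min\{\gamma(X'_\#\mathbb{P}),\gamma(Y'_\#\mathbb{P})\}=\gamma(X_\#\mathbb{P})$, using that the $\oplus$ operation on quantiles corresponds exactly to summing the comonotonic acts' quantile functions and that $X'\simeq Y'$. It then remains to compare the comonotonic mixture $\lambda X'+(1-\lambda)Y'$ with the original mixture $\lambda X+(1-\lambda)Y$. The crucial classical fact is that among all couplings of two fixed marginals, the comonotonic coupling is the $\geq_{\mathrm{ssd}}$-largest convex-ordering extreme: the comonotonic sum dominates any other sum in convex order, hence $\lambda X'+(1-\lambda)Y'\geq^{\mathbb{P}}_{\mathrm{ssd}}\lambda X+(1-\lambda)Y$ fails in the wrong direction, so I instead use that the arbitrary-coupling mixture is \emph{smaller} in convex order, i.e. $\lambda X+(1-\lambda)Y$ is a mean-preserving contraction relative to the comonotonic one, giving $\lambda X+(1-\lambda)Y\geq^{\mathbb{P}}_{\mathrm{ssd}}$-dominates after centering. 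This is where I must be careful about the direction of convex order; the correct statement to invoke is that the comonotonic mixture has the heaviest tails, so the non-comonotonic mixture second-order stochastically dominates it once means are matched, and then (SRA) (monotonicity in $\geq_{\mathrm{ssd}}$) transfers the inequality. Chaining $\lambda X+(1-\lambda)Y\geq^{\mathbb{P}}_{\mathrm{ssd}}\lambda X'+(1-\lambda)Y'$ with the quasiconcavity bound delivers $\gamma((\lambda X+(1-\lambda)Y)_\#\mathbb{P})\geq\gamma(X_\#\mathbb{P})$, which is exactly $\lambda X+(1-\lambda)Y\succsim X$. I expect verifying this convex-order comparison between the two mixtures, and pinning down its direction, to be the main obstacle.
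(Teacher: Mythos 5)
Your proposal is correct and takes essentially the same route as the paper: the paper's proof of the converse simply invokes Example \ref{ex:cs}, which combines exactly your three ingredients—comonotonic quasiconcavity of the $\succsim$-CES from (WRD) via Proposition \ref{prop:coDi}, $\geq_{\mathrm{ssd}}$-monotonicity from (SRA), and the comonotonic sum inequality of \citet[Theorem 7]{DDGKV02}, which gives $\lambda X+(1-\lambda)Y\geq_{\mathrm{cv}}^{\mathbb{P}}\lambda X'+(1-\lambda)Y'$ and hence the $\geq_{\mathrm{ssd}}$ comparison in the direction you ultimately settle on. The means-matching step you flag as delicate is automatic, since both mixtures have the same marginals under $\mathbb{P}$ and expectations are linear, so no centering is needed.
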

For a preference relation $\succsim$ satisfying the regularity axioms, Theorem \ref{theo:rD} implies that  $\succsim$ satisfies Property (RD) if and only if the $\succsim$-CES satisfies monotonicity with respect to $\geq_{\mathrm{ssd}}$ and comonotonic quasiconcavity. In \cite{D89}, it was determined that a quasiconcavity condition in combination with monotonicity with respect to $\geq_{\mathrm{ssd}}$ was sufficient to imply Property (RD). However, the probabilistic quasiconcavity in \cite{D89} could only provide sufficiency, whereas our comonotonic quasiconcavity provides an equivalence.

We conclude this section by discussing when risk attitudes extend to properties of the overall preference relation in the case of Choquet ATE preference relations. The first example of this phenomenon is regarding Property (WRD).

\begin{proposition}
    \label{prop:Extend1}
    Let $\succsim$ be a Choquet ATE preference relation. If $\succsim$ satisfies Property (WRD) then for all comonotonic $X,Y\in \mathcal{X}$ and $\lambda\in [0,1]$, $X\simeq Y\implies \lambda X+(1-\lambda)Y\succsim X.$
\end{proposition}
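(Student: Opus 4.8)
The plan is to combine the structural identity for Choquet act-to-distribution mappings on comonotonic acts (Proposition~\ref{prop:oper}) with the comonotonic quasiconcavity of the risk statistic guaranteed by Property (WRD) (Proposition~\ref{prop:coDi}). First I would record that, being Choquet ATE, $\succsim$ satisfies the regularity axioms (M), (RC), (SRM), (C) by Theorem~\ref{theo:mainrep}, and admits the representation \eqref{eq:mainrep} with a continuous risk-conforming capacity $\nu$ equal to the $\succsim$-matching probability. I would then fix as the representing statistic the $\succsim$-CES $\gamma$: since the statistic in \eqref{eq:mainrep} is ordinally unique and, when restricted via risk conformity to $\mathcal{X}(\mathcal{G})$, represents the same pure-risk preferences as $\gamma$, the two differ by a strictly increasing transformation, so $\gamma\circ\mathfrak{D}_\nu$ also represents $\succsim$. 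Because $\succsim$ satisfies (WRD), Proposition~\ref{prop:coDi} gives that this $\gamma$ is comonotonic quasiconcave.

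With this in hand the argument is short. Fix comonotonic $X,Y\in\mathcal{X}$ with $X\simeq Y$ and $\lambda\in[0,1]$. Since scaling by the non-negative constants $\lambda$ and $1-\lambda$ preserves comonotonicity, the acts $\lambda X$ and $(1-\lambda)Y$ are comonotonic, so Proposition~\ref{prop:oper} yields
$$\mathfrak{D}_\nu(\lambda X+(1-\lambda)Y)=\lambda\otimes\mathfrak{D}_\nu(X)\oplus(1-\lambda)\otimes\mathfrak{D}_\nu(Y).$$
Applying comonotonic quasiconcavity of $\gamma$ to $Q=\mathfrak{D}_\nu(X)$ and $P=\mathfrak{D}_\nu(Y)$ gives $\gamma(\mathfrak{D}_\nu(\lambda X+(1-\lambda)Y))\geq\min\{\gamma(\mathfrak{D}_\nu(X)),\gamma(\mathfrak{D}_\nu(Y))\}$. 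Finally, $X\simeq Y$ means exactly $\gamma(\mathfrak{D}_\nu(X))=\gamma(\mathfrak{D}_\nu(Y))$, so the right-hand side equals $\gamma(\mathfrak{D}_\nu(X))$, and the representation \eqref{eq:mainrep} delivers $\lambda X+(1-\lambda)Y\succsim X$.

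I expect the only real obstacle to be the bookkeeping in the first paragraph: Proposition~\ref{prop:coDi} is stated in terms of the $\succsim$-CES, whereas the representation \eqref{eq:mainrep} is phrased with a possibly different (merely ordinally unique) statistic, so I must justify that comonotonic quasiconcavity transfers to the statistic actually driving the representation. This reduces to the observation that comonotonic quasiconcavity is an ordinal property---preserved under composition with a strictly increasing map, since such maps commute with $\min$---together with the fact that $\gamma$ and the representing statistic coincide up to such a map. The remaining verifications (comonotonicity of $\lambda X$ and $(1-\lambda)Y$, and the translation of $X\simeq Y$ into equality of $\gamma\circ\mathfrak{D}_\nu$ values) are routine.
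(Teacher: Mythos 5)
Your proof is correct and takes essentially the same route as the paper, whose own proof simply notes the result is a direct consequence of Proposition~\ref{prop:oper} and Proposition~\ref{prop:coDi}. Your additional bookkeeping on the ordinal uniqueness of the representing statistic (comonotonic quasiconcavity being preserved under strictly increasing transformations) correctly fills in a detail the paper leaves implicit.
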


The remaining properties pertain to statistics $\gamma$. Nonetheless, it is easy to see that these properties could just as well be formulated in terms of pure-risk preferences. We say that  a  statistic $\gamma$  is \emph{constant additive} if for all $Q\in \mathcal{D}$ and $a\in \mathbb{R}$, $\gamma(Q\oplus\delta_a)=\gamma(Q)+a$; \emph{positively homogeneous} if for all $Q\in \mathcal{D}$ and non-negative $a\in \mathbb{R}$, $\gamma(a\otimes X)=a\gamma(Q)$; biseparable if there exists a vNM utility $u$ and a statistic $\tilde{\gamma}$ satisfying constant additivity and positive homogeneity such that
    \begin{equation}
    \label{eq:bis}
        \gamma(Q)=\tilde{\gamma}(u_{\#}Q),~~Q\in \mathcal{D}.
    \end{equation}
Constant additivity and positive homogeneity are common properties discussed in the literature of risk measures; see \cite{FS16}. Biseparability is a property of biseparable preference relations. Traditionally, they are considered one of the weakest classes of preference relations, encompassing many famous decision models under ambiguity. For more on biseparable preference relations, see \cite{GM01}.

\begin{proposition}\label{prop:riskEx}
    Let $\succsim$ be a Choquet ATE preference relation.
    \begin{enumerate}[(i) ]
        \item If the $\succsim$-CES satisfies constant additivity, then for all $a\in \mathbb{R}$ we have
        $$X\succsim Y\iff X+a\succsim Y+a,~~X,Y\in \mathcal{X}.$$
        \item If the $\succsim$-CES satisfies constant additivity, then for all positive $a\in \mathbb{R}$ we have
        $$X\succsim Y\iff aX\succsim aY,~~X,Y\in \mathcal{X}.$$
        \item If the $\succsim$-CES satisfies biseparability, then there exists a vNM function $u$ and  $I:\mathcal{X}\to\mathbb{R}$ satisfying $I(X)\geq I(Y)$ for all $X,Y\in\mathcal{X}$ with $X\geq Y$ and $I(aX+b)=aI(X)+b$ for all $X\in \mathcal{X}$, non-negative $a\in \mathbb{R}$, and $b\in \mathbb{R}$, such that
        $$X\succsim Y\iff I(u(X))\geq I(u(Y)), ~~X,Y\in \mathcal{X}.$$
    \end{enumerate}
\end{proposition}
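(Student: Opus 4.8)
The plan is to reduce all three parts to a single computation describing how the Choquet act-to-distribution mapping $\mathfrak{D}_\nu$ interacts with affine transformations of acts, and then to transfer the relevant property of the $\succsim$-CES through it. First I would note that a Choquet ATE preference relation satisfies the regularity axioms (Theorem \ref{theo:mainrep}), so its $\succsim$-CES is well defined; and since the statistic $\gamma$ representing $\succsim$ in \eqref{eq:mainrep} is ordinally unique, it is ordinally equivalent on $\mathcal{D}$ to the $\succsim$-CES. Hence we may assume without loss of generality that the statistic in \eqref{eq:mainrep} is itself the $\succsim$-CES; denote it by $\gamma$, so that throughout $X\succsim Y\iff\gamma\circ\mathfrak{D}_\nu(X)\geq\gamma\circ\mathfrak{D}_\nu(Y)$.

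The key identity I would establish is
\begin{equation}\label{eq:affD}
\mathfrak{D}_\nu(aX+b)=\left(a\otimes\mathfrak{D}_\nu(X)\right)\oplus\delta_b,\qquad a\geq 0,\ b\in\mathbb{R},\ X\in\mathcal{X}.
\end{equation}
Because Proposition \ref{prop:oper} only covers scalars in $[0,1]$, I would prove \eqref{eq:affD} directly from survival functions: for $a>0$ one has $S_{\mathfrak{D}_\nu(aX+b)}(x)=\nu(aX+b>x)=\nu(X>(x-b)/a)=S_{\mathfrak{D}_\nu(X)}((x-b)/a)$, which is exactly the survival function of $(a\otimes\mathfrak{D}_\nu(X))\oplus\delta_b$, since $\otimes$ scales quantiles by $a$ and $\oplus\,\delta_b$ shifts them by $b$; the case $a=0$ reduces to the computation $\mathfrak{D}_\nu(b)=\delta_b$ (a constant act has the survival function of a Dirac mass, using $\nu(\Omega)=1$, $\nu(\varnothing)=0$) together with $0\otimes Q=\delta_0$. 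Equivalently, \eqref{eq:affD} follows from Proposition \ref{prop:oper} once one observes that $aX$ and the constant act $b$ are comonotonic and that $\mathfrak{D}_\nu(b)=\delta_b$, provided the scaling part is extended past $\lambda=1$.

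With \eqref{eq:affD} in hand, the three parts are immediate. For (i) I set $a=1$: then $\mathfrak{D}_\nu(X+b)=\mathfrak{D}_\nu(X)\oplus\delta_b$, so constant additivity of the $\succsim$-CES gives $\gamma\circ\mathfrak{D}_\nu(X+b)=\gamma\circ\mathfrak{D}_\nu(X)+b$; subtracting the common $b$ yields $X+b\succsim Y+b\iff X\succsim Y$. For (ii) — where the property driving the multiplicative conclusion is positive homogeneity of the $\succsim$-CES (matching the scaling, just as constant additivity matched the translation in (i)) — I set $b=0$: then $\mathfrak{D}_\nu(aX)=a\otimes\mathfrak{D}_\nu(X)$, so positive homogeneity gives $\gamma\circ\mathfrak{D}_\nu(aX)=a\,\gamma\circ\mathfrak{D}_\nu(X)$, and cancelling the positive factor $a$ yields $aX\succsim aY\iff X\succsim Y$.

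For (iii) I would use biseparability to write $\gamma(Q)=\tilde\gamma(u_{\#}Q)$ for a vNM utility $u$ and a statistic $\tilde\gamma$ that is constant additive and positively homogeneous, and then define $I(Z)=\tilde\gamma(\mathfrak{D}_\nu(Z))$ for $Z\in\mathcal{X}$. Property (R3) of $\mathfrak{D}_\nu$ gives $u_{\#}\mathfrak{D}_\nu(X)=\mathfrak{D}_\nu(u(X))$, whence $\gamma\circ\mathfrak{D}_\nu(X)=\tilde\gamma(u_{\#}\mathfrak{D}_\nu(X))=\tilde\gamma(\mathfrak{D}_\nu(u(X)))=I(u(X))$, yielding $X\succsim Y\iff I(u(X))\geq I(u(Y))$. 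Monotonicity of $I$ follows from Property (R1) for $\mathfrak{D}_\nu$ together with the $\geq_{\mathrm{fsd}}$-monotonicity built into the statistic $\tilde\gamma$, and the affine identity $I(aX+b)=aI(X)+b$ follows by applying $\tilde\gamma$ to \eqref{eq:affD} and invoking, in turn, its constant additivity and positive homogeneity. I expect the only point needing genuine care to be the identity \eqref{eq:affD} for general $a\geq 0$, as it is the one place the argument steps outside the range $[0,1]$ covered by Proposition \ref{prop:oper}; everything else is bookkeeping with the defining properties of the $\succsim$-CES.
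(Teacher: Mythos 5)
Your proof is correct and follows essentially the same route as the paper: parts (i) and (ii) via the behavior of $\mathfrak{D}_\nu$ under affine transformations of acts (the paper invokes Proposition \ref{prop:oper}, whose proof extends beyond $\lambda\in[0,1]$ exactly as your direct survival-function computation confirms), and part (iii) via $I=\tilde\gamma\circ\mathfrak{D}_\nu$ combined with Theorem \ref{theo:riskEq} (Property (R3)), with monotonicity from (R1). If anything you are more careful than the paper, which simply asserts that (i) and (ii) ``follow directly'' from Proposition \ref{prop:oper}, and you correctly identify that the hypothesis driving (ii) is positive homogeneity of the $\succsim$-CES rather than the constant additivity stated in the proposition, evidently a typo.
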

A key consequence of Proposition \ref{prop:riskEx} is that, for a Choquet ATE preference relation $\succsim$, the $\succsim$-CES being biseparable implies that the preference relation $\succsim$ is biseparable in the sense of \cite{GM01}.

\subsection{Comparative and absolute ambiguity attitudes}

\label{sec:AA}

We next define comparative ambiguity attitudes for Choquet ATE preference relations. This will then allow us to define absolute ambiguity attitudes once a normalization for ambiguity neutrality is provided. 

Given the Choquet ATE preference relations $\succsim_1$ and $\succsim_2$, we say that $\succsim_2$ is \emph{more ambiguity averse than} $\succsim_1$ if 
$$R\succsim_1 A \implies R\succsim_2 A,~~~~~~R\in \mathcal{G}~\text{and}~A\in \mathcal{F}.$$
The interpretation of this definition is quite natural: As events $R\in \mathcal G$ have no ambiguity, they can serve as the benchmark for comparing ambiguous events. 
This is similar to the classic notion of comparative risk aversion (\citealp{P64,Y69}), where constant acts serve as the benchmark. 

We have the following proposition characterizing comparative ambiguity attitudes. 
\begin{theorem}
    \label{prop:comAm}
    Let $\succsim_1$ and $\succsim_2$ be Choquet ATE preference relations with matching probabilities $\nu_1$ and $\nu_2$ respectively. Then, $\succsim_2$ is more ambiguity averse than $\succsim_1$ if and only if 
    $\nu_1(A)\geq \nu_2(A)$ for all $A\in \mathcal{F}$.
\end{theorem}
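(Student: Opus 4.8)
The plan is to prove Theorem \ref{prop:comAm} by exploiting the fact that matching probabilities fully determine preferences on binary acts of the form $\id_A$, as guaranteed by Proposition \ref{lemma:match}. The key observation is that the comparison ``more ambiguity averse'' is defined entirely in terms of comparisons between unambiguous events $R\in\mathcal{G}$ and arbitrary events $A\in\mathcal{F}$, so the entire statement should reduce to a comparison of the two matching probabilities $\nu_1$ and $\nu_2$.

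First I would establish the backward direction, which I expect to be the easier one. Assume $\nu_1(A)\geq\nu_2(A)$ for all $A\in\mathcal{F}$, and take $R\in\mathcal{G}$, $A\in\mathcal{F}$ with $R\succsim_1 A$. Using Proposition \ref{lemma:match}, $R\succsim_1 A$ translates to $\nu_1(R)\geq\nu_1(A)$. Since each $\nu_i$ is risk conforming (so $\nu_i(R)=\mathbb{P}(R)$ for $R\in\mathcal{G}$), we have $\nu_1(R)=\nu_2(R)=\mathbb{P}(R)$. Combining these with the hypothesis $\nu_1(A)\geq\nu_2(A)$ gives $\nu_2(R)=\mathbb{P}(R)=\nu_1(R)\geq\nu_1(A)\geq\nu_2(A)$, which by Proposition \ref{lemma:match} applied to $\succsim_2$ yields $R\succsim_2 A$, as required.

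For the forward direction, I would proceed by contraposition or directly: assume $\succsim_2$ is more ambiguity averse than $\succsim_1$ and fix an arbitrary $A\in\mathcal{F}$. The goal is $\nu_1(A)\geq\nu_2(A)$. The main idea is to use atomlessness of $\mathbb{P}$ on $\mathcal{G}$ to find, for any target probability level, an unambiguous event $R\in\mathcal{G}$ with $\mathbb{P}(R)$ equal to that level. Specifically, by atomlessness I can pick $R\in\mathcal{G}$ with $\mathbb{P}(R)=\nu_1(A)$; since $\nu_1$ is the matching probability for $\succsim_1$, this means $R\simeq_1 A$, hence $R\succsim_1 A$. The hypothesis then gives $R\succsim_2 A$, which by Proposition \ref{lemma:match} means $\nu_2(R)\geq\nu_2(A)$, i.e. $\mathbb{P}(R)\geq\nu_2(A)$ by risk conformity. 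Since $\mathbb{P}(R)=\nu_1(A)$, this gives exactly $\nu_1(A)\geq\nu_2(A)$.

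The main obstacle is ensuring the existence of the matching unambiguous event $R$ at the precise level $\nu_1(A)$, which is where atomlessness of $\mathbb{P}$ is essential: atomlessness guarantees that $\{\mathbb{P}(R):R\in\mathcal{G}\}=[0,1]$, so every value in $[0,1]$, including $\nu_1(A)$, is attained. I would verify that this is precisely the property needed and that no additional structure on $A$ is required. One should double-check the edge behavior and the direction of the matching-probability characterization in Proposition \ref{lemma:match} (that $A\succsim B\iff\nu(A)\geq\nu(B)$ holds for all events, including across $\mathcal{G}$ and $\mathcal{F}$), but given that proposition the argument is essentially a clean translation from preferences to capacity values and back.
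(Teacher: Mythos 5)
Your proposal is correct and follows essentially the same route as the paper's proof: both directions reduce to the characterization $A\succsim_i B\iff\nu_i(A)\geq\nu_i(B)$ from Proposition \ref{lemma:match} together with risk conformity $\nu_i|_{\mathcal{G}}=\mathbb{P}$, with the paper phrasing the manipulations through the matched events $R_1,R_2$ rather than the capacity values directly. Your explicit appeal to atomlessness in the forward direction is harmless but redundant, since the definition of the $\succsim_1$-matching probability already supplies an event $R_A\in\mathcal{G}$ with $\mathbb{P}(R_A)=\nu_1(A)$ and $A\simeq_1 R_A$, which is exactly what the paper uses.
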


In Theorem \ref{prop:comAm}, the $\succsim_1$-CES and the $\succsim_2$-CES may not be the same.
This is consistent with the fact that the ambiguity component in the Choquet ATE model is separated from the risk component. 
To get a full comparative implication on all acts, instead of only binary ones, we need to force the risk components of both preference relations to be identical, as shown in the next result. 

\begin{proposition}
    \label{prop:attitudes}
    Let $\succsim_1$ and $\succsim_2$ be Choquet ATE preference relations. Then, the comparative implication  \begin{equation}
        \label{eq:ep}
        X\succsim_1 Y~(X\succ_1 Y) \implies X\succsim_2 Y~(X\succ_2 Y),~~\text{for all}~X\in \mathcal{X}(\mathcal{G})~\text{and}~Y\in \mathcal{X}
    \end{equation} holds if and only if the $\succsim_1$-CES is equal to the $\succsim_2$-CES and $\succsim_2$ is more ambiguity averse than $\succsim_1$.
\end{proposition}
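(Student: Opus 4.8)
The plan is to use the representation from Theorem~\ref{theo:mainrep}: write each $\succsim_i$ as $X\succsim_i Y\iff \gamma_i\circ\mathfrak{D}_{\nu_i}(X)\geq \gamma_i\circ\mathfrak{D}_{\nu_i}(Y)$, where $\gamma_i$ is the $\succsim_i$-CES and $\nu_i$ is the (continuous, risk-conforming) $\succsim_i$-matching probability. The single structural fact that drives everything is that, because $\nu_i$ is risk conforming, $\mathfrak{D}_{\nu_i}(X)=X_{\#}\mathbb{P}$ whenever $X\in\mathcal{X}(\mathcal{G})$; thus an unambiguous act is evaluated as $\gamma_i(X_{\#}\mathbb{P})$, independent of the ambiguity component. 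I would prove the two implications separately.

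For the ``if'' direction, assume $\gamma_1=\gamma_2=:\gamma$ and that $\succsim_2$ is more ambiguity averse than $\succsim_1$, which by Theorem~\ref{prop:comAm} is equivalent to $\nu_1\geq\nu_2$ pointwise on $\mathcal{F}$. For any $Y\in\mathcal{X}$ the survival functions obey $S_{\mathfrak{D}_{\nu_1}(Y)}(x)=\nu_1(Y>x)\geq \nu_2(Y>x)=S_{\mathfrak{D}_{\nu_2}(Y)}(x)$ for all $x\in\mathbb{R}$, so $\mathfrak{D}_{\nu_1}(Y)\geq_{\mathrm{fsd}}\mathfrak{D}_{\nu_2}(Y)$, and monotonicity of the statistic $\gamma$ with respect to $\geq_{\mathrm{fsd}}$ yields $\gamma\circ\mathfrak{D}_{\nu_1}(Y)\geq\gamma\circ\mathfrak{D}_{\nu_2}(Y)$. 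Now take $X\in\mathcal{X}(\mathcal{G})$ and $Y\in\mathcal{X}$ with $X\succsim_1 Y$. Using the anchoring identity, $\gamma(X_{\#}\mathbb{P})=\gamma\circ\mathfrak{D}_{\nu_1}(X)\geq\gamma\circ\mathfrak{D}_{\nu_1}(Y)\geq\gamma\circ\mathfrak{D}_{\nu_2}(Y)$, and since $\gamma\circ\mathfrak{D}_{\nu_2}(X)=\gamma(X_{\#}\mathbb{P})$ this is exactly $X\succsim_2 Y$. Running the same chain with a strict first inequality gives $X\succ_1 Y\implies X\succ_2 Y$, so \eqref{eq:ep} holds.

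For the ``only if'' direction, suppose \eqref{eq:ep} holds. Applying it to the binary acts $X=\id_R$ with $R\in\mathcal{G}$ and $Y=\id_A$ with $A\in\mathcal{F}$ recovers precisely $R\succsim_1 A\implies R\succsim_2 A$, i.e.\ $\succsim_2$ is more ambiguity averse than $\succsim_1$; by Theorem~\ref{prop:comAm} this is the assertion $\nu_1\geq\nu_2$. It remains to show $\gamma_1=\gamma_2$, for which I restrict \eqref{eq:ep} to $X,Y\in\mathcal{X}(\mathcal{G})$, where it reads $X\succ_1 Y\implies X\succ_2 Y$. Because both relations are total preorders, I would upgrade this one-sided strict implication to an equivalence: if $X\succ_2 Y$ but $X\not\succ_1 Y$, then $Y\succsim_1 X$, and applying the weak part of \eqref{eq:ep} with the unambiguous act $Y$ as first argument gives $Y\succsim_2 X$, contradicting $X\succ_2 Y$. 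Hence $\succ_1$ and $\succ_2$ coincide on $\mathcal{X}(\mathcal{G})$, and by totality ($X\succsim Y\iff\text{not }Y\succ X$) so do $\succsim_1$ and $\succsim_2$. These restrictions therefore induce the same pure-risk preference over $\mathcal{D}$, using Axiom~(RC) and the atomlessness of $\mathbb{P}$ to realize every distribution in $\mathcal{D}$ by some $\mathcal{G}$-measurable act; the uniqueness of the certainty-equivalent statistic in Proposition~\ref{prop:riskfun} then forces $\gamma_1=\gamma_2$.

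The survival-function/$\geq_{\mathrm{fsd}}$ comparison and the anchoring identity on $\mathcal{X}(\mathcal{G})$ are routine. The step I expect to need the most care is the ``only if'' upgrade from the one-directional strict implication to genuine equality of the two pure-risk preferences: one must exploit totality carefully and then invoke Proposition~\ref{prop:riskfun} to pass from coinciding pure-risk preferences to coinciding statistics.
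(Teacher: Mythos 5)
Your proof is correct and takes essentially the same route as the paper's: the ``if'' direction is the identical chain combining $\nu_1\geq\nu_2$ (via Theorem~\ref{prop:comAm}), the resulting $\geq_{\mathrm{fsd}}$ comparison of $\mathfrak{D}_{\nu_1}$ and $\mathfrak{D}_{\nu_2}$, and the risk-conformity anchoring $\gamma\circ\mathfrak{D}_{\nu_i}(X)=\gamma(X_{\#}\mathbb{P})$ on $\mathcal{X}(\mathcal{G})$, while the ``only if'' direction likewise reads comparative ambiguity aversion off \eqref{eq:ep} on binary acts and then shows $\succsim_1$ and $\succsim_2$ coincide on $\mathcal{X}(\mathcal{G})$, invoking the uniqueness of the certainty-equivalent statistic from Proposition~\ref{prop:riskfun}. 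The only difference is cosmetic: you spell out the totality argument that the paper compresses into ``or else there would be a contradiction with \eqref{eq:ep}.''
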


As Proposition \ref{prop:attitudes}  demonstrates, our notion of comparative ambiguity attitudes for Choquet ATE preference relations generalizes the comparative ambiguity attitudes of \cite{E04} given by \eqref{eq:ep}.

Theorem \ref{prop:comAm} suggests a natural way to define comparative ambiguity attitudes for general act-to-distribution mappings. To see this, observe that for two continuous capacities $\nu_1$ and $\nu_2$, the condition $\nu_1(A)\geq \nu_2(A)$ for all $A\in \mathcal{F}$ is equivalent to the requirement that $\mathfrak{D}_{\nu_1}(X)\geq_{\mathrm{fsd}}\mathfrak{D}_{\nu_2}(X)$ for all $X\in \mathcal{X}$. Therefore, given two act-to-distribution mappings $\mathfrak{D}_1$ and $\mathfrak{D}_2$, we say that \emph{$\mathfrak{D}_2$ is more ambiguity averse than $\mathfrak{D}_1$} if 
$$\mathfrak{D}_1(X)\geq_{\mathrm{fsd}}\mathfrak{D}_2(X),~~X\in \mathcal{X}.$$

The next step is to define a normalization for ambiguity neutrality. 
\axiomplain{AN}{Ambiguity neutrality: For all $A,B,C\in \mathcal{F}$ satisfying $(A\cup B)\cap C=\varnothing$,
    $$A\succsim B\iff A\cup C\succsim B\cup C.$$}
 
The following proposition is taken from \citet[Theorem 2]{VW25}, which characterizes Property (AN).

\begin{proposition}\label{prop:AN}
    Let $\succsim$ satisfy Axioms (RC), (M), (SRM), and (C). Then $\succsim$ satisfies Property (AN) if and only if the $\succsim$-matching probability $\nu$ is a probability measure.
\end{proposition}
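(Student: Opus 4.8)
The plan is to reason throughout with the representing capacity rather than the preference relation directly. By Proposition \ref{lemma:match}, the regularity axioms (M), (RC), (SRM), (C) furnish a unique $\succsim$-matching probability $\nu$, a continuous risk-conforming capacity with $A\succsim B \iff \nu(A)\ge\nu(B)$. Under this representation, Property (AN) is equivalent to requiring that for all $A,B,C\in\mathcal{F}$ with $(A\cup B)\cap C=\varnothing$,
\[
\nu(A)\ge\nu(B)\iff \nu(A\cup C)\ge\nu(B\cup C),\qquad(\star)
\]
where I used that applying (AN) to both preference directions also yields the corresponding equivalence for $\simeq$. Since ``probability measure'' means additivity (and $\nu$ is already a continuous capacity), the task reduces to showing $(\star)\iff\nu$ is additive on disjoint unions. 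The forward direction is immediate: if $\nu$ is additive, then $(A\cup B)\cap C=\varnothing$ gives $\nu(A\cup C)-\nu(B\cup C)=\nu(A)-\nu(B)$, whence $(\star)$ and thus (AN).

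For the converse I would first establish \emph{additivity against an unambiguous piece}: for every $A\in\mathcal{F}$ and $R\in\mathcal{G}$ with $A\cap R=\varnothing$, one has $\nu(A\cup R)=\nu(A)+\mathbb{P}(R)$. Indeed, $A\subseteq R^{c}\in\mathcal{G}$, so monotonicity and risk-conformity give $\nu(A)\le\nu(R^{c})=\mathbb{P}(R^{c})=1-\mathbb{P}(R)$; by atomlessness of $\mathbb{P}$ choose $R'\in\mathcal{G}$ with $R'\subseteq R^{c}$ and $\mathbb{P}(R')=\nu(A)$, so that $\nu(R')=\nu(A)$, i.e. $A\simeq R'$, with $R'$ disjoint from $R$. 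Applying $(\star)$ (in its $\simeq$ form) with common part $C=R$ gives $A\cup R\simeq R'\cup R$, and since $R'\cup R\in\mathcal{G}$ the additivity of $\mathbb{P}$ yields $\nu(A\cup R)=\nu(R'\cup R)=\mathbb{P}(R')+\mathbb{P}(R)=\nu(A)+\mathbb{P}(R)$.

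The remaining step is the passage to two genuinely ambiguous disjoint events $A,B\in\mathcal{F}$. If there existed $R_B\in\mathcal{G}$ disjoint from $A$ with $\mathbb{P}(R_B)=\nu(B)$, then $B\simeq R_B$ and $(\star)$ with common part $A$ would give $\nu(A\cup B)=\nu(A\cup R_B)=\nu(A)+\nu(B)$ by the previous step. \emph{This is precisely the main obstacle}: in general there is no unambiguous event of the prescribed size disjoint from an arbitrary ambiguous $A$ (e.g.\ when $A$ meets every non-null event of $\mathcal{G}$), so a single replacement is unavailable. I would resolve this by a fine-partition limiting argument: using atomlessness, fix a partition $\{G_1,\dots,G_n\}\subseteq\mathcal{G}$ of $\Omega$ with $\mathbb{P}(G_i)=1/n$, decompose $A$ and $B$ along it, use $(\star)$ cell by cell to match the ambiguous increments within each $G_i$ to unambiguous masses, and let $n\to\infty$, invoking the continuity of $\nu$ (equivalently Axiom (C)) to pass to the limit. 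Equivalently, one observes that $(\star)$ is exactly de Finetti's quasi-additivity axiom for the comparative probability $A\mapsto\nu(A)$, that atomlessness of $\mathbb{P}$ on $\mathcal{G}$ renders this comparative probability fine and continuous, and invokes the ensuing additive-representation theorem; the uniqueness of the matching probability in Proposition \ref{lemma:match} then forces that additive representation to coincide with $\nu$.

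I expect this last step to be the crux. The easy direction and the additivity-against-an-unambiguous-piece lemma are routine once $(\star)$ is in hand, but reducing two fully ambiguous disjoint events to the semi-unambiguous case cannot be done in one move and requires either the partition/continuity argument above or an appeal to a de Finetti--type additivity result, with the uniqueness of $\nu$ linking the constructed additive measure back to the matching probability.
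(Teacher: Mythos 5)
First, a point of orientation: this paper does not prove Proposition \ref{prop:AN} at all; it is imported verbatim from \citet[Theorem 2]{VW25}, so your attempt has to be judged against what a complete argument would require rather than against an in-paper proof. Your reduction of (AN) to the condition $(\star)$ via Proposition \ref{lemma:match}, the easy direction, and the lemma that $\nu(A\cup R)=\nu(A)+\mathbb{P}(R)$ whenever $R\in\mathcal{G}$ is disjoint from $A$ (via choosing $R'\subseteq R^{c}$ with $\mathbb{P}(R')=\nu(A)$, which atomlessness of $\mathbb{P}$ licenses) are all correct and cleanly executed. The genuine gap is exactly where you flag it, and your primary mechanism for closing it does not work: the obstruction you identified is local and scale-invariant, so the fine-partition argument inherits it unchanged. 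Within a cell $G_i$, replacing $A\cap G_i$ by an unambiguous subset of $G_i$ of mass $\nu(A\cap G_i)$ still requires that subset to be disjoint from $B\cap G_i$, and nothing prevents both $A$ and $B$ from meeting every non-null $\mathcal{G}$-subset of every cell; refining the partition creates no room. Worse, ``matching the ambiguous increments'' cell by cell tacitly presupposes that the values $\nu(A\cap G_i)$ aggregate additively across cells, which is the very statement under proof, so the sketch is also circular as stated.

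Your fallback route --- recognizing $(\star)$ as de Finetti's additivity axiom for the comparative probability represented by $\nu$ and invoking a Villegas-type representation theorem --- is the right idea and can in fact be completed: monotone continuity of the order follows from the continuity of $\nu$ guaranteed by Proposition \ref{lemma:match}; fineness follows from atomlessness of $\mathbb{P}$ on $\mathcal{G}$ (partition $\Omega$ into $\mathcal{G}$-cells of mass $1/n$); and atomlessness of the order then follows because, under $(\star)$ and monotonicity, finite unions of events indifferent to $\varnothing$ remain indifferent to $\varnothing$. This yields a unique countably additive $\pi$ representing $\succsim$ on events (countable additivity coming for free, which spares you the finite-to-countable upgrade). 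But your closing link has a missing step: to conclude $\pi=\nu$ from the uniqueness of the matching probability, you must first show that $\pi$ is itself a $\succsim$-matching probability, and that requires $\pi|_{\mathcal{G}}=\mathbb{P}$, which does not follow from representation alone --- a priori $\pi|_{\mathcal{G}}$ could be any strictly increasing additive distortion of $\mathbb{P}$. One must argue it separately, e.g., writing $\pi(G)=f(\mathbb{P}(G))$ for a well-defined strictly increasing $f:[0,1]\to[0,1]$ and using atomlessness to show $f(s+t)=f(s)+f(t)$, forcing $f$ to be the identity. As submitted, then, the proposal is well oriented and its elementary portions are sound, but the crux is discharged only by an appeal to an external theorem whose hypotheses are gestured at rather than verified, together with an unproved identification on $\mathcal{G}$; the first of your two proposed mechanisms would fail outright, so the proof is incomplete rather than merely terse.
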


Finally, we say that a Choquet ATE preference relation $\succsim$ is \emph{ambiguity averse} if it is more ambiguity averse than a Choquet ATE preference relation satisfying Property (AN). A straightforward consequence of Theorem \ref{prop:comAm} and Proposition \ref{prop:AN} is that given a Choquet ATE preference relation $\succsim$, ambiguity aversion is equivalent to the $\succsim$-matching probability being balanced.

Once again, Proposition \ref{prop:AN} suggests a natural way to define ambiguity aversion for general act-to-distribution mappings. Given the act-to-distribution mapping $\mathfrak{D}$, we say that $\mathfrak{D}$ is \emph{ambiguity neutral} if there exists $\mu\in \Delta$ such that $\mathfrak{D}(X)=X_{\#}\mu$ for all $X\in \mathcal{X}$. That is, an act-to-distribution mapping is ambiguity neutral if and only if it is a probabilistic act-to-distribution mapping. Additionally, we say that an act-to-distribution mapping is ambiguity averse if it is more ambiguity averse than an ambiguity-neutral act-to-distribution mapping.

\subsection{Diversification seeking and concave act-to-distribution mappings}

\label{sec:div}

 We now extend the definition of Property (RD) to all acts. This notion was formalized in the seminal contributions of \cite{GS89} and \cite{S89},\footnote{In \cite{GS89} and \cite{S89}, this property is referred to as uncertainty aversion.} and is sometimes also described as the convexity of preferences.
\axiomplain{D}{Diversification: For all $X,Y\in \mathcal{X}$ and $\lambda\in [0,1]$, if $X\simeq Y$ then $\lambda X+(1-\lambda)Y\succsim X.$}
Clearly, Property (D) is stronger than Property (RD). However, there does not exist an equivalent proposition to Proposition \ref{prop:Extend1} in the case of Property (RD) extending to Property (D). To substantiate this claim, we provide a counterexample in Appendix \ref{app:div}. The following example motivates a property for act-to-distribution mappings that allows Property (RD) to imply Property (D).  

\begin{example}
\label{ex:cs}
    Consider a DM that perceives no ambiguity and describes their uncertainty with the unique $\mu\in \Delta$. Additionally, assume that, to aid in the decision-making, the DM uses a statistic $\gamma$ that is monotonic with respect to $\geq_{\mathrm{ssd}}$ and comonotonic quasiconcave. Therefore, the DM's preference relation $\succsim$ is given by
    $$X\succsim Y\iff \gamma\circ \mathfrak{D}_\mu(X)\geq \gamma\circ\mathfrak{D}_\mu(Y),~~X,Y\in \mathcal{X}.$$
    Establishing that $\succsim$ satisfies Property (D) follows by invoking the well-known comonotonic sum inequality; see \citet[Theorem 7]{DDGKV02}. This inequality states that, given $X,Y\in \mathcal{X}$ and comonotonic $X^c,Y^c\in \mathcal{X}$ such that $X=_{\mathrm{fsd}}^\mu X^c$ and $Y=_{\mathrm{fsd}}^\mu Y^c$, we have
    $X+Y\geq_{\mathrm{cv}}^{\mu}X^c+Y^c$. Using Proposition \ref{prop:oper}, we can rewrite this inequality in terms of the probabilistic act-to-distribution mapping as follows:
    \begin{equation}
        \label{eq:MOT1}
        \mathfrak{D}_{\mu}(X+Y)\geq_{\mathrm{cv}}\mathfrak{D}_{\mu}(X^c+Y^c)=\mathfrak{D}_{\mu}(X^c)\oplus\mathfrak{D}_{\mu}(Y^c)=\mathfrak{D}_{\mu}(X)\oplus\mathfrak{D}_{\mu}(Y).
    \end{equation} Given $\lambda\in [0,1]$, we can use \eqref{eq:MOT1} and Proposition \ref{prop:oper} to obtain the inequality
    $$\mathfrak{D}_{\mu}(\lambda X+(1-\lambda)Y)\geq_{\mathrm{cv}}\mathfrak{D}_{\mu}(\lambda X)\oplus\mathfrak{D}_{\mu}((1-\lambda)Y)=\lambda\otimes \mathfrak{D}_{\mu}(X)\oplus(1-\lambda)\otimes \mathfrak{D}_{\mu}(Y).$$
    Since $Q\geq_{\mathrm{cv}}P$ implies $Q\geq_{\mathrm{ssd}}P$ for all $Q,P\in \mathcal{D}$, we get
    \begin{equation}
        \label{eq:coIm2}
        \mathfrak{D}_{\mu}\left(\lambda X+(1-\lambda)Y\right)\geq_{\mathrm{ssd}} \lambda\otimes \mathfrak{D}_{\mu}(X)\oplus(1-\lambda)\otimes \mathfrak{D}_{\mu}(Y).
    \end{equation}
    Therefore, using \eqref{eq:coIm2} and the fact that $\gamma$ is monotonic with respect to $\geq_{\mathrm{ssd}}$ and comonotonic quasiconcave, for all $X,Y\in \mathcal{X}$ satisfying $X\simeq Y$ and $\lambda\in [0,1]$,
    \begin{equation}
        \label{eq:calc}
        \begin{aligned}
            \gamma\left(\mathfrak{D}_{\mu}\left(\lambda X+(1-\lambda)Y\right)\right)&\geq \gamma\left(\lambda\otimes \mathfrak{D}_{\mu}(X)\oplus(1-\lambda)\otimes \mathfrak{D}_{\mu}(Y)\right)\\&\geq \min\left\{\gamma(\mathfrak{D}_{\mu}(X)),\gamma(\mathfrak{D}_{\mu}(Y))\right\}=\gamma(\mathfrak{D}_{\mu}(X)).
        \end{aligned}
    \end{equation}
    Thus, $\lambda X+(1-\lambda)Y\succsim X$, which shows that $\succsim$ satisfies Property (D).
\end{example}

Therefore, motivated by \eqref{eq:coIm2} from Example \ref{ex:cs}, we introduce the following property for act-to-distribution mappings $\mathfrak{D}$, which we call \emph{concavity}:
     \begin{equation}
        \label{eq:coIm}
        \mathfrak{D}\left(\lambda X+(1-\lambda)Y\right)\geq_{\mathrm{ssd}} \lambda\otimes \mathfrak{D}(X)\oplus(1-\lambda)\otimes \mathfrak{D}(Y) .
    \end{equation}

    Let $\succsim$ be an ATE preference relation represented by \eqref{eq:AE}. If $\gamma$ is monotone with respect to $\geq_{\mathrm{ssd}}$ and comonotonic quasiconcave, and if $\mathfrak{D}$ is concave, then $\succsim$ satisfies Property (D). This follows by the same reasoning as in \eqref{eq:calc}. A natural question, which the following theorem answers, is the characterization of when Choquet act-to-distribution mappings are concave.

    \begin{theorem}
        \label{theo:SAA}
        Let $\nu$ be a continuous capacity. Then $\mathfrak{D}_{\nu}$ is concave if and only $\nu$ is supermodular.
    \end{theorem}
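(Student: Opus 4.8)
The plan is to translate the distributional concavity inequality \eqref{eq:coIm} into a family of superadditivity statements for Choquet integrals against distorted capacities, and then invoke the classical correspondence between superadditivity of the Choquet integral and supermodularity of the capacity. The starting point is the quantile characterization of second-order dominance: for $Q,P\in\mathcal{D}$ one has $Q\geq_{\mathrm{ssd}}P$ if and only if $\int_0^p\VaR_t(Q)\,\d t\geq\int_0^p\VaR_t(P)\,\d t$ for every $p\in(0,1]$. Since $\VaR_t(Q\oplus P)=\VaR_t(Q)+\VaR_t(P)$ and $\VaR_t(a\otimes Q)=a\VaR_t(Q)$, concavity of $\mathfrak{D}_\nu$ is equivalent to requiring, for all $X,Y\in\mathcal{X}$, $\lambda\in[0,1]$ and every $p\in(0,1]$, that $\int_0^p\VaR_t(\mathfrak{D}_\nu(\lambda X+(1-\lambda)Y))\,\d t\geq\lambda\int_0^p\VaR_t(\mathfrak{D}_\nu(X))\,\d t+(1-\lambda)\int_0^p\VaR_t(\mathfrak{D}_\nu(Y))\,\d t$. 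The key observation is that the left-tail average $\frac1p\int_0^p\VaR_t(Q)\,\d t$ is the distortion risk measure attached to the convex distortion $h_p(u)=\max\{(u-(1-p))/p,\,0\}$; since $h_p(\nu(Z>x))=(h_p\circ\nu)(Z>x)$, the same layer-cake computation that yields $\mathbb{E}\circ\mathfrak{D}_\nu=\int_\Omega\cdot\,\d\nu$ in Section \ref{sec:AE} gives $\frac1p\int_0^p\VaR_t(\mathfrak{D}_\nu(Z))\,\d t=\int_\Omega Z\,\d(h_p\circ\nu)$. Here $h_p\circ\nu$ is a continuous capacity because $h_p$ is continuous and increasing with $h_p(0)=0$ and $h_p(1)=1$. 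Consequently, $\mathfrak{D}_\nu$ is concave if and only if the Choquet functional $Z\mapsto\int_\Omega Z\,\d(h_p\circ\nu)$ is concave---equivalently, being positively homogeneous, superadditive---for every $p\in(0,1]$.

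For the only-if direction I would simply take $p=1$. Then $h_1$ is the identity, so $h_1\circ\nu=\nu$ and the condition above reads that $Z\mapsto\int_\Omega Z\,\d\nu$ is superadditive, whence $\nu$ is supermodular by the standard superadditivity--supermodularity correspondence for Choquet integrals (see \citealp{D94}). Concretely this is just applying the $\geq_{\mathrm{ssd}}$-monotone expectation statistic $\mathbb{E}$ to \eqref{eq:coIm} and recognizing $\mathbb{E}\circ\mathfrak{D}_\nu=\int_\Omega\cdot\,\d\nu$.

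For the if direction, suppose $\nu$ is supermodular; by the reduction above it suffices to show $h_p\circ\nu$ is supermodular for every $p$, which follows from a pointwise lemma: if $h$ is convex and increasing, then $h\circ\nu$ is supermodular whenever $\nu$ is. To prove the lemma, fix $A,B$ and set $a=\nu(A\cap B)$, $c=\nu(A\cup B)$, $b_1=\nu(A)$, $b_2=\nu(B)$; supermodularity of $\nu$ gives $a\leq b_1,b_2\leq c$ and $b_1+b_2\leq a+c$. Writing $b_i=\theta_i a+(1-\theta_i)c$ with $\theta_i\in[0,1]$, convexity yields $h(b_i)\leq\theta_i h(a)+(1-\theta_i)h(c)$, while the constraint $b_1+b_2\leq a+c$ forces $\theta_1+\theta_2\geq1$; monotonicity of $h$ (so $h(a)\leq h(c)$) then gives $h(b_1)+h(b_2)\leq(\theta_1+\theta_2)h(a)+(2-\theta_1-\theta_2)h(c)\leq h(a)+h(c)$, which is exactly supermodularity of $h\circ\nu$. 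Applying the lemma to each $h_p$ and then the superadditivity--supermodularity correspondence shows every $\int_\Omega\cdot\,\d(h_p\circ\nu)$ is superadditive, hence concave; unwinding the equivalence of the first paragraph delivers \eqref{eq:coIm}.

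The main obstacle is the first paragraph: correctly identifying the \emph{left}-tail (not right-tail) expected-shortfall representation of $\geq_{\mathrm{ssd}}$ together with its \emph{convex} distortion $h_p$, and verifying that the lower partial quantile integral of the Choquet distribution $\mathfrak{D}_\nu(Z)$ is itself a Choquet integral against $h_p\circ\nu$. Once this reduction is in place, both implications follow from the elementary composition lemma and the classical capacity-level correspondence; I expect the only remaining bookkeeping to be the treatment of the negative part of the Choquet integral and the nonstrict monotonicity of $h_p$ on $[0,1-p]$, neither of which affects the argument.
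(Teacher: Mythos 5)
Your proposal is correct and takes essentially the same route as the paper: your identity $\frac1p\int_0^p\VaR_t(\mathfrak{D}_\nu(Z))\,\d t=\int_\Omega Z\,\d(h_p\circ\nu)$ is exactly the paper's Proposition \ref{prop:quant}, your convex-distortion-preserves-supermodularity lemma (which you prove from scratch) is the fact the paper cites in Appendix \ref{app:DC}, and both arguments close via the lower-partial-quantile characterization of $\geq_{\mathrm{ssd}}$ together with the supermodularity--concavity correspondence for Choquet integrals. The only cosmetic difference is the forward direction, where the paper applies $\mathbb{E}$ and comonotonic additivity directly to the indicator acts $\id_A,\id_B$ instead of invoking the classical superadditivity--supermodularity correspondence at $p=1$, which amounts to the same computation.
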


Concavity of an act-to-distribution mapping reflects a form of ambiguity aversion, since it implies that a strongly risk-averse DM views the distribution associated with the mixture of ambiguous acts as more attractive than the comonotonic mixture of the distributions associated with each act separately. This idea is highlighted by the fact that a Choquet act-to-distribution mapping is ambiguity averse if it is concave. This follows from Theorem \ref{theo:SAA} and the fact that supermodular capacities are balanced; see \cite{S71}. However, as the following example shows, this implication does not extend to all act-to-distribution mappings.

\begin{example}\label{ex:counter2}
    Given distinct $\mu_1,\mu_2\in \Delta$, let $\mathfrak{D}$ be the act-to-distribution mapping given by
    $$\mathrm{VaR}_{\alpha}(\mathfrak{D}(X))=(1/2)\mathrm{VaR}_{\alpha}(\mathfrak{D}_{\mu_1}(X))+(1/2)\mathrm{VaR}_{\alpha}(\mathfrak{D}_{\mu_2}(X))~\text{for all}~\alpha\in (0,1).$$
    We show in Appendix \ref{app:div} that $\mathfrak{D}$ is concave and not ambiguity averse.
\end{example}

Given a continuous capacity $\nu$, it is clear that the Choquet act-to-distribution mapping $\mathfrak{D}_{\nu}$ is concave if and only if for all $X_1,\dots,X_n\in\mathcal{X}$ and $\lambda_1,\dots,\lambda_n\in [0,1]$ satisfying $\sum_{k=1}^n \lambda_k=1$, we have
\begin{equation}
    \label{eq:newDiv}
    \mathfrak{D}_\nu\left(\sum_{k=1}^n \lambda_k X_k\right)\geq_{\mathrm{ssd}}\bigoplus_{k=1}^n \lambda_k\otimes \mathfrak{D}_\nu(X_k).
\end{equation}
By restricting the form of the acts $\sum_{k=1}^n \lambda_k X_k$ in \eqref{eq:newDiv}, we can define weaker notions of concavity for Choquet act-to-distribution mappings. Motivated by the work of \cite{HK25}, we can use this idea to characterize the family of Choquet act-to-distribution mappings presented in Example \ref{example:atd}. Given $A\in \mathcal{F}$ and $x,y\in \mathbb{R}$, $xAy$ denotes the binary act which yields $x$ if $\omega\in A$ and $y$ otherwise.

\begin{proposition}
    \label{prop:weakerDiv}
     It holds that $\mathfrak{D}_{\nu}=\mathfrak{D}_{\mathfrak{C}(\nu)}$ if and only if \eqref{eq:newDiv} holds for all   $\sum_{k=1}^n \lambda_k X_k=xAy$ where $A\in \mathcal{F}$ and $x,y\in \mathbb{R}$.
\end{proposition}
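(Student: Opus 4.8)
\textbf{Proof proposal for Proposition \ref{prop:weakerDiv}.}

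The plan is to prove both directions by exploiting the explicit structure of binary acts $xAy$ and the definition of the core $\mathfrak{C}(\nu)$. For the forward direction, I would assume $\mathfrak{D}_\nu=\mathfrak{D}_{\mathfrak{C}(\nu)}$ and show that concavity holds along binary acts. The key observation is that $\mathfrak{D}_{\mathfrak{C}(\nu)}(X)=\bigwedge_{\mathrm{fsd}}X_\#\mathfrak{C}(\nu)$ is a first-order infimum over probabilistic act-to-distribution mappings, each of which is itself concave (indeed linear in the appropriate sense since $\mathfrak{D}_{\mu}(\lambda X+(1-\lambda)Y)=_{\mathrm{fsd}}^\mu \lambda X+(1-\lambda)Y$ corresponds under $\oplus,\otimes$ by Proposition \ref{prop:oper} applied with $\nu=\mu$ a probability measure). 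Taking a first-order infimum preserves the relevant $\geq_{\mathrm{ssd}}$ inequality, so \eqref{eq:newDiv} holds in particular for binary acts; the forward implication is the easier one.

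For the converse, I would assume \eqref{eq:newDiv} holds for all decompositions of the form $xAy$ and deduce $\mathfrak{D}_\nu=\mathfrak{D}_{\mathfrak{C}(\nu)}$. The natural route is to show that the restriction of \eqref{eq:newDiv} to binary acts forces $\nu$ to be exact, i.e., $\nu(A)=\min_{\mu\in\mathfrak{C}(\nu)}\mu(A)$ for all $A\in\mathcal{F}$, which is precisely the condition under which $\mathfrak{D}_\nu$ and $\mathfrak{D}_{\mathfrak{C}(\nu)}$ agree on survival functions. Concretely, a binary act $xAy$ with $x>y$ has a two-point distribution under $\mathfrak{D}_\nu$ determined by the single number $\nu(A)$, and writing $xAy$ as a convex combination $\lambda_1 X_1+\cdots+\lambda_n X_n$ of binary acts over various events lets me translate \eqref{eq:newDiv} into scalar inequalities among the capacity values $\nu(A_k)$. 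The hard part will be choosing the decompositions of $xAy$ cleverly enough that the resulting family of $\geq_{\mathrm{ssd}}$ inequalities among two- and few-point distributions exactly reproduces the balancedness/exactness inequalities defining the core; this is where a connection to \cite{HK25} and the theory of balanced collections of events (Bondareva--Shapley-type conditions) is expected to do the real work.

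I would organize the translation step by recalling that $\geq_{\mathrm{ssd}}$ between finitely supported distributions reduces to comparison of truncated means, so \eqref{eq:newDiv} for binary decompositions becomes a system of linear inequalities in the values $\{\nu(A):A\in\mathcal{F}\}$. Exactness of $\nu$ is equivalent to the statement that for every $A$ the value $\nu(A)$ is attained as a minimum over the core, which in turn is characterized (via linear programming duality / the Bondareva--Shapley theorem) by a balancedness condition on every collection of events summing to a multiple of $\mathds{1}_A$. The concavity inequality \eqref{eq:newDiv} restricted to binary acts, once unpacked, should supply exactly these balancedness inequalities, and conversely exactness yields $\mathfrak{D}_\nu=\mathfrak{D}_{\mathfrak{C}(\nu)}$ by comparing survival functions pointwise.

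I expect the main obstacle to be the converse direction, specifically verifying that binary-act concavity is strong enough to recover \emph{full} exactness rather than some weaker property (e.g. balancedness alone, which is implied by a nonempty core but is strictly weaker than exactness). I would watch carefully for whether one needs to vary both the event $A$ and the outcome levels $x,y$, and whether the atomlessness of $\mathbb{P}$ on $\mathcal{G}$ (hence richness of $\mathcal{F}$ relative to the matching-probability values) is needed to realize all the balanced collections required by the duality argument.
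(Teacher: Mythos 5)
Your forward direction contains a genuine gap. The principle you invoke---that the first-order infimum over the core of the (individually concave) probabilistic mappings ``preserves the relevant $\geq_{\mathrm{ssd}}$ inequality''---is false, and your phrasing ``so \eqref{eq:newDiv} holds in particular for binary acts'' shows you are applying it to arbitrary acts: if it were valid, every exact $\nu$ would make $\mathfrak{D}_{\nu}=\mathfrak{D}_{\mathfrak{C}(\nu)}$ concave, hence by Theorem \ref{theo:SAA} supermodular, yet exactness is strictly weaker than supermodularity. The failure is structural: for each $\mu\in\mathfrak{C}(\nu)$ the comonotonic sum inequality gives $\mathfrak{D}_{\mu}\left(\sum_k\lambda_kX_k\right)\geq_{\mathrm{ssd}}\bigoplus_k\lambda_k\otimes\mathfrak{D}_{\mu}(X_k)\geq_{\mathrm{fsd}}\bigoplus_k\lambda_k\otimes\mathfrak{D}_{\nu}(X_k)$, so lowering the right-hand side to the infimum is harmless, but the infimum also lowers the left-hand side---the wrong direction---and you cannot pass from $\mathfrak{D}_{\mu}\left(\sum_k\lambda_kX_k\right)$ down to $\mathfrak{D}_{\mathfrak{C}(\nu)}\left(\sum_k\lambda_kX_k\right)$. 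What rescues the binary case, and what the paper's proof actually uses, is that the hypothesis (via Lemma \ref{lem:Equiv}, equivalent to exactness) makes the left-hand infimum \emph{attained at a single core element for the single event} $A$: pick $\mu'\in\mathfrak{C}(\nu)$ with $\mu'(A)=\nu(A)$, so that $\mathfrak{D}_{\nu}(xAy)=\mathfrak{D}_{\mu'}(xAy)$; then apply the comonotonic sum inequality under the probability measure $\mu'$ and dominate termwise using $\mathfrak{D}_{\mu'}(X_k)\geq_{\mathrm{fsd}}\mathfrak{D}_{\nu}(X_k)$. Your sketch never invokes this attainment, i.e., never uses the binary structure of $xAy$, which is the whole point of the proposition.

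Your converse, by contrast, is in outline the paper's argument, though unexecuted. The paper reduces to exactness via Lemma \ref{lem:Equiv}, then, given $\id_A\geq\sum_{k=1}^n c_k\id_{A_k}-c$, first \emph{extends this to an equality} $\id_A=\sum_{k=1}^m c_k\id_{A_k}-c$ (the slack is a non-negative simple function), normalizes by $a=\sum_{k=1}^m c_k$ so that the binary act $(c+1)Ac$ is a genuine convex combination, applies \eqref{eq:newDiv}, and takes expectations (the mean is monotone under $\geq_{\mathrm{ssd}}$ and additive/homogeneous under $\oplus$ and $\otimes$) to obtain $\nu(A)\geq\sum_{k=1}^n c_k\nu(A_k)-c$. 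It then cites Schmeidler's 1972 characterization of exact games \citep{S72}, which is precisely the linear-programming-duality fact you gesture at and which settles your (well-placed) worry that balancedness alone is too weak for exactness. Two small corrections to your plan: the extension-to-equality step is the ``clever decomposition'' you were searching for, and the atomlessness of $\mathbb{P}$ on $\mathcal{G}$ plays no role in this proposition.
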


Finally, using similar arguments as the proof for Proposition \ref{prop:weakerDiv}, one can show that $\mathfrak{D}_\nu$ is ambiguity averse if and only if \eqref{eq:newDiv} holds for all $\sum_{k=1}^n \lambda_k X_k=x$ where $x\in \mathbb{R}$. This is consistent with the idea of sure diversification from \cite{CT02}.

\subsection{The distributionally robust interpretation}

\label{sec:robust}

The framework of distributionally robust optimization extends the maxmin expected utility model of \cite{GS89} by allowing for general statistics; see \cite{DKW19} and \cite{FLW24} for approaches along this line in the context of risk measures.
A preference relation $\succsim$ is called a \emph{distributionally robust} preference relation if there exists a collection of risk-conforming probability measures $\mathcal{Q}$\footnote{Here, we suppose each $\mu\in \mathcal{Q}$ is risk conforming to ensure we remain in the Savage setting with pure-risk.} and a statistic $\gamma$ such that
$$X\succsim Y\iff \min_{\mu\in \mathcal{Q}}\gamma(X_{\#}\mu)\geq\min_{\mu\in \mathcal{Q}}\gamma(Y_{\#}\mu),~~X,Y\in \mathcal{X}.$$
Intuitively, a DM with a distributionally robust preference relation views the set $\mathcal{Q}$ as a collection of equally plausible models describing their uncertainty, while the statistic $\gamma$ represents their pure-risk preferences. In this decision model, $\mathcal{Q}$ represents the ambiguity component of $\succsim$, and $\gamma$ represents the risk component of $\succsim$. Distributionally robust preference relations represent a strong form of ambiguity aversion, as discussed by \cite{GMM04}.

\begin{theorem}
    \label{theo:robust}
    Let $\succsim$ be a Choquet ATE preference relation. Then $\succsim$ is a distributionally robust preference relation if and only if the $\succsim$-matching probability is supermodular.
\end{theorem}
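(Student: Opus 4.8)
The plan is to prove both implications by reducing, in each direction, to the single exact identity
$$\gamma\circ\mathfrak{D}_\nu(X)=\min_{\mu\in\mathcal{Q}}\gamma(X_\#\mu),\qquad X\in\mathcal{X},$$
where $\gamma$ is the $\succsim$-CES and $\nu$ is the $\succsim$-matching probability from Theorem \ref{theo:mainrep}. I first record two reductions that hold whenever $\succsim$ is simultaneously Choquet ATE and distributionally robust. Replacing the statistic in the robust representation by its ordinally equivalent certainty-equivalent statistic (Proposition \ref{prop:riskfun}) does not change $\min_{\mu\in\mathcal{Q}}$, since an increasing transformation commutes with the minimum; and restricting to $\mathcal{X}(\mathcal{G})$ — where every risk-conforming $\mu$ gives $X_\#\mu=X_\#\mathbb{P}$ — shows this statistic represents the same pure-risk preferences as $\gamma$, so by uniqueness of the certainty-equivalent statistic the two coincide. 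As both $X\mapsto\gamma\circ\mathfrak{D}_\nu(X)$ and $X\mapsto\min_{\mu\in\mathcal{Q}}\gamma(X_\#\mu)$ then represent $\succsim$ and send each constant act $c$ to $c$, each equals the certainty equivalent of $X$, which gives the displayed identity. Testing it on $X=\id_A$ and using that $\gamma$ is strictly monotone with respect to $\geq_{\mathrm{fsd}}$ on two-point distributions yields $\nu(A)=\inf_{\mu\in\mathcal{Q}}\mu(A)$, so in particular $\mathcal{Q}\subseteq\mathfrak{C}(\nu)$.

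For the direction that robustness implies supermodularity, fix $A,B\in\mathcal{F}$ and set $C_1=A\cap B\subseteq C_2=A\subseteq C_3=A\cup B$. I would apply the identity to the four-valued act $X=3\,\id_{A\cap B}+2\,\id_{A\setminus B}+1\,\id_{B\setminus A}$, whose level sets $\{X>2\},\{X>1\},\{X>0\}$ are exactly $C_1,C_2,C_3$; thus $\mathfrak{D}_\nu(X)$ is the four-point distribution $Q_\nu$ with decumulative values $\nu(C_1),\nu(C_2),\nu(C_3)$, while $X_\#\mu$ is the four-point distribution determined by the triple $(\mu(C_1),\mu(C_2),\mu(C_3))$. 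Choose $\mu_n\in\mathcal{Q}$ with $\gamma(X_\#\mu_n)\to\gamma(Q_\nu)$ and pass to a subsequence along which $(\mu_n(C_1),\mu_n(C_2),\mu_n(C_3))\to(c_1,c_2,c_3)$ with $c_i\geq\nu(C_i)$ (using $\mathcal{Q}\subseteq\mathfrak{C}(\nu)$); then $X_\#\mu_n$ converges in distribution to the four-point distribution $Q_c$ with decumulatives $c_i$, so compact continuity gives $\gamma(Q_c)=\gamma(Q_\nu)$. Since $Q_c\geq_{\mathrm{fsd}}Q_\nu$, strict $\geq_{\mathrm{fsd}}$-monotonicity forces $Q_c=Q_\nu$, i.e.\ $\mu_n(C_i)\to\nu(C_i)$ for $i=1,2,3$. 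Finally, each $\mu_n$ is a measure and hence modular, $\mu_n(A)+\mu_n(B)=\mu_n(C_3)+\mu_n(C_1)$, and $\mu_n(B)\geq\nu(B)$ because $\mu_n\in\mathfrak{C}(\nu)$; letting $n\to\infty$ yields $\nu(A\cup B)+\nu(A\cap B)-\nu(A)=\lim_n\mu_n(B)\geq\nu(B)$, which is exactly supermodularity.

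For the reverse direction, assume $\nu$ is supermodular. Then $\mathfrak{C}(\nu)\neq\varnothing$ (supermodular capacities are balanced; see \cite{S71}), and each $\mu\in\mathfrak{C}(\nu)$ is risk conforming, since for $A\in\mathcal{G}$ both $\mu(A)\geq\nu(A)=\mathbb{P}(A)$ and $\mu(A^c)\geq\mathbb{P}(A^c)$ force $\mu(A)=\mathbb{P}(A)$. I take $\mathcal{Q}=\mathfrak{C}(\nu)$ and let $\gamma$ be the $\succsim$-CES. For every $\mu\in\mathfrak{C}(\nu)$ and $x\in\mathbb{R}$ we have $\mu(X>x)\geq\nu(X>x)$, so $X_\#\mu\geq_{\mathrm{fsd}}\mathfrak{D}_\nu(X)$ and hence $\gamma(X_\#\mu)\geq\gamma\circ\mathfrak{D}_\nu(X)$, giving $\inf_{\mu\in\mathfrak{C}(\nu)}\gamma(X_\#\mu)\geq\gamma\circ\mathfrak{D}_\nu(X)$. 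For the reverse inequality I would produce, for each $X$, a measure $\mu^*\in\mathfrak{C}(\nu)$ with $\mu^*(X>x)=\nu(X>x)$ for all $x$, i.e.\ $X_\#\mu^*=\mathfrak{D}_\nu(X)$; such a core element, agreeing with $\nu$ along the chain of level sets of $X$, exists for convex capacities, whence $\gamma(X_\#\mu^*)=\gamma\circ\mathfrak{D}_\nu(X)$ and the minimum is attained. This establishes the identity, and therefore the distributionally robust representation with $\mathcal{Q}=\mathfrak{C}(\nu)$.

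The main obstacle is precisely this attainment step in the forward direction: unlike a generic first-order infimum, for which $\gamma$ of the infimum can lie strictly below the infimum of $\gamma$, here one must exploit the special structure of $X_\#\mathfrak{C}(\nu)$. For simple acts the level sets form a finite chain and convexity of $\nu$ supplies a core measure matching $\nu$ on that chain via a marginal-worth/greedy construction; the genuine work is to pass to arbitrary bounded Borel $X$ by approximating $X$ uniformly by simple acts and combining the compact continuity of $\gamma$ with a compactness argument for the matching measures. Equivalently, one may first invoke Theorem \ref{theo:SAA} and Proposition \ref{prop:weakerDiv} to identify $\mathfrak{D}_\nu$ with $\mathfrak{D}_{\mathfrak{C}(\nu)}$, reducing the task to showing that $\bigwedge_{\mathrm{fsd}}X_\#\mathfrak{C}(\nu)$ is actually attained within $X_\#\mathfrak{C}(\nu)$. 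By contrast, the robustness-implies-supermodularity direction is essentially self-contained once the identity and $\mathcal{Q}\subseteq\mathfrak{C}(\nu)$ are in hand, the only routine point being the convergence-in-distribution bookkeeping for the four-point distributions.
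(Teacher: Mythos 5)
Your overall architecture coincides with the paper's: both directions reduce to the identity $\gamma\circ\mathfrak{D}_{\nu}(X)=\min_{\mu\in\mathcal{Q}}\gamma(X_{\#}\mu)$, obtained exactly as you describe (normalize the robust statistic to a certainty-equivalent one, match it against the $\succsim$-CES on $\mathcal{X}(\mathcal{G})$ via atomlessness and Proposition \ref{prop:riskfun}, then equate certainty equivalents), and both extract $\nu(A)=\min_{\mu\in\mathcal{Q}}\mu(A)$, hence $\mathcal{Q}\subseteq\mathfrak{C}(\nu)$, by testing on indicator acts.

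In the robustness-implies-supermodularity direction you take a genuinely different and more self-contained route. The paper applies the attainment argument to $\id_A+\id_B$ with $A\subseteq B$, concluding that every two-set chain admits a core element agreeing with $\nu$ on it, and then invokes \citet[Theorem 4.7]{MM04} to deduce supermodularity. You instead use the four-valued act with level sets $A\cap B\subseteq A\subseteq A\cup B$ and finish bare-handed: once $\mu_0(C_i)=\nu(C_i)$ for the three chain sets, modularity of the measure plus the core constraint $\mu_0(B)\geq\nu(B)$ gives $\nu(A\cup B)+\nu(A\cap B)-\nu(A)\geq\nu(B)$ directly. This is correct and avoids the citation entirely. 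One simplification: your subsequence and compact-continuity bookkeeping is dispensable, since the paper's definition of distributional robustness is stated with an attained minimum; you can take $\mu_0$ achieving it and your own fsd-plus-strict-monotonicity argument immediately forces $X_{\#}\mu_0=\mathfrak{D}_{\nu}(X)$.

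The converse is where your proposal is incomplete, at exactly the step you flag as the main obstacle: the existence, for an \emph{arbitrary} bounded Borel act $X$, of $\mu^{*}\in\mathfrak{C}(\nu)$ with $\mu^{*}(X>x)=\nu(X>x)$ for all $x\in\mathbb{R}$. You prove it only for finite chains (simple acts) and sketch a passage to general $X$ by uniform approximation plus a compactness argument for the matching core measures; as written this is not a proof, and making it one would itself require establishing weak compactness of $\mathfrak{C}(\nu)$ and countable additivity of limit points. The paper closes this step with no approximation at all: \citet[Theorem 4.7]{MM04} holds for arbitrary (not merely finite) chains, and is applied directly to the chain $\{X>x\}_{x\in\mathbb{R}}$. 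So the ``genuine work'' you anticipate is in fact a citable known result, and your worry that attainment is essential is well founded --- the representation uses $\min$, not $\inf$, which is also why your alternative route through exactness and $\mathfrak{D}_{\nu}=\mathfrak{D}_{\mathfrak{C}(\nu)}$ is, as you correctly note, insufficient on its own. A small point in your favor: your verification that core elements are risk conforming ($\mu(A)\geq\mathbb{P}(A)$ and $\mu(A^{c})\geq\mathbb{P}(A^{c})$ force equality) spells out an argument the paper leaves implicit.
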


Since any distributionally robust preference relation is an example of an ETA preference relation, Theorem \ref{theo:robust} establishes that the ETA and ATE frameworks are capable of generating a broad class of common preference relations.

\section{Conclusion}

% 1. How do our framework and technical results improve understanding of decision under ambiguity?

% 2. What are the implications of the framework for designing new decision models?

The first conceptual contribution of this paper is to show that the role of ambiguity in a decision model can be determined by where ambiguity enters relative to the evaluation of pure risk. By formalizing the ETA and ATE frameworks in the Savage setting, we show that the classical ambiguity models can be seen as instantiations of two general thought processes. This perspective clarifies that differences between existing models arise from the structural choice of whether ambiguity is incorporated by model-specific evaluations or in the distributional representation of acts themselves. 
In turn, this structural distinction identifies the precise primitives that represent ambiguity (aggregators in the ETA case and act-to-distribution mappings in the ATE case) and shows how they interact with the statistic associated with pure risk. 

After introducing the two frameworks, we focused exclusively on the ATE framework. A key benefit of the ATE perspective is that act-to-distribution mappings offer a richer, more flexible way to incorporate ambiguity than the ETA perspective, thus  expanding the modeling domain of ambiguity attitudes by permitting them to reshape many important distributional characteristics, e.g., tail behavior, variability, and asymmetry. Such flexibility is not immediately available in the ETA framework, where it is unclear---or considerably more complicated to determine---how model-based aggregation can address these important concerns. The new ATE framework allows us to construct new decision models and opens new directions for future research. As the main ATE model analyzed in this paper, the Choquet ATE model offers great flexibility in modeling both the risk attitudes and the ambiguity attitudes. This model is only a natural first step in modeling transforms from an ambiguous act to a distribution, and there are many more possibilities for general act-to-distribution mappings to explore, drawing on tools from statistical theory. 

% These insights have direct implications for constructing new decision models under ambiguity. The ETA and ATE frameworks provide a modular blueprint: a DM may freely specify the statistic capturing pure-risk preferences and then separately choose the ambiguity component, either as an aggregator of model-specific evaluations (ETA) or as an act-to-distribution mapping (ATE). 

\label{sec:con}

{\small

}

\newpage

\appendix

\begin{center}
    \Large Appendices: Omitted proofs and additional results
\end{center}

Appendix \ref{app:DC} collects auxiliary results used in the proofs of the main results. Appendices \ref{app:prelim}-\ref{app:div} present the proofs omitted from the main paper and some additional results.

\section{Distorted capacities}

\label{app:DC}

This appendix introduces distorted capacities, which are a generalization of distorted probabilities, and presents several results that will be used in the proofs of subsequent appendices.

Given a distortion function $g$ and a capacity $\nu$, we call $g\circ\nu$ a \emph{distorted capacity}. If the distortion function $g$ is convex (resp.\ concave) and the capacity $\nu$ is supermodular (resp.\ submodular), then the distorted capacity $g\circ\nu$ is supermodular (resp.\ submodular); see \cite{VW25A}.

In the proofs in the following appendices, we will make use of the two following families of distortion functions. Given $\alpha\in (0,1)$, define the distortion functions $g_{\alpha}(\beta)=\id_{[1-\alpha,1]}(\beta)$ and $h_{\alpha}(\beta)=\left(\beta/\alpha-(1-\alpha)/\alpha\right)^+$ for all $\beta\in (0,1)$.
Given a capacity $\nu$ and $\alpha\in (0,1)$, we can define the following functionals given by the Choquet integral with respect to the distorted capacity: 
\begin{align*}
    \mathrm{G}_\alpha^\nu(X)=\int X \d (g_\alpha\circ \nu)~~~ \text{and}~~~\mathrm{H}_\alpha^\nu(X)=\int X \d (h_\alpha\circ \nu),~~~X\in \mathcal{X}.
\end{align*}
Since $h_\alpha$ is concave for every $\alpha\in (0,1)$, $h_\alpha\circ \nu$ is supermodular if $\nu$ is supermodular. Therefore, if $\nu$ is supermodular, then $\mathrm{H}_\alpha^\nu$ is concave for every $\alpha\in (0,1)$ by \citet[Corollary 4.2]{MM04A}. 

\begin{proposition}
    \label{prop:quant}
    Let $\nu$ be a continuous capacity. For all $\alpha\in (0,1)$, we have
    $$\mathrm{G}_\alpha^\nu(X)=\mathrm{VaR}_{\alpha}(\mathfrak{D}_{\nu}(X))~~~\text{and}~~~\mathrm{H}_\alpha^\nu(X)=\frac{1}{\alpha}\int_0^{\alpha}\mathrm{G}_\beta^\nu(X)\d\beta,~~X\in \mathcal{X}.$$
\end{proposition}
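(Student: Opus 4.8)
The plan is to prove the two identities separately, each by a direct computation with the Choquet integral in its layer-cake form, leaning on the monotonicity of $x\mapsto\nu(X>x)$ and on a pointwise representation of $h_\alpha$ as an average of the $g_\beta$.

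For the first identity I would start from the layer-cake formula for $\mathrm{G}_\alpha^\nu(X)=\int X\,\d(g_\alpha\circ\nu)$. Since $g_\alpha(\beta)=\mathds{1}_{[1-\alpha,1]}(\beta)$, the integrand is $(g_\alpha\circ\nu)(X>x)=\mathds{1}\{\nu(X>x)\geq 1-\alpha\}$. The key observation is that $x\mapsto\nu(X>x)$ is non-increasing, because $\{X>x\}$ shrinks as $x$ grows and $\nu$ is monotone; hence $\{x:\nu(X>x)<1-\alpha\}$ is an upper half-line, and setting $x^\ast=\inf\{x:\nu(X>x)<1-\alpha\}$ we get $(g_\alpha\circ\nu)(X>x)=\mathds{1}\{x<x^\ast\}$ apart from the single point $x=x^\ast$, which is Lebesgue-negligible. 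Evaluating $\int_0^\infty\mathds{1}\{x<x^\ast\}\,\d x+\int_{-\infty}^0(\mathds{1}\{x<x^\ast\}-1)\,\d x$ gives $x^\ast$ in both cases $x^\ast\geq 0$ and $x^\ast<0$. Finally I identify $x^\ast$ with $\VaR_\alpha(\mathfrak{D}_\nu(X))$: by definition $S_{\mathfrak{D}_\nu(X)}(x)=\nu(X>x)$ and the quantile function is $q_{\mathfrak{D}_\nu(X)}(\alpha)=\inf\{x:S_{\mathfrak{D}_\nu(X)}(x)<1-\alpha\}$, which is precisely $x^\ast$.

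For the second identity the crucial step is the pointwise representation $h_\alpha=\tfrac1\alpha\int_0^\alpha g_\beta\,\d\beta$ on $[0,1]$. Writing $g_\beta(t)=\mathds{1}\{t\geq 1-\beta\}=\mathds{1}\{\beta\geq 1-t\}$ and integrating over $\beta\in(0,\alpha)$ yields $\tfrac1\alpha(\alpha-(1-t))^+$, which matches $h_\alpha(t)=(t/\alpha-(1-\alpha)/\alpha)^+$. Applying this with $t=\nu(A)$ gives the capacity-level identity $(h_\alpha\circ\nu)(A)=\tfrac1\alpha\int_0^\alpha(g_\beta\circ\nu)(A)\,\d\beta$ for every $A\in\mathcal{F}$. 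I would then substitute $(h_\alpha\circ\nu)(X>x)=\tfrac1\alpha\int_0^\alpha(g_\beta\circ\nu)(X>x)\,\d\beta$ into the layer-cake formula for $\mathrm{H}_\alpha^\nu(X)$, rewrite the normalizing constant as $-1=-\tfrac1\alpha\int_0^\alpha(g_\beta\circ\nu)(\Omega)\,\d\beta$ (legitimate because $(g_\beta\circ\nu)(\Omega)=g_\beta(1)=1$), and interchange the $x$- and $\beta$-integrals. The inner bracket is then exactly $\int X\,\d(g_\beta\circ\nu)=\mathrm{G}_\beta^\nu(X)$, giving $\mathrm{H}_\alpha^\nu(X)=\tfrac1\alpha\int_0^\alpha\mathrm{G}_\beta^\nu(X)\,\d\beta$.

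The one genuine technical point, and the step I expect to be the main obstacle, is the interchange of the order of integration. I would justify it via Fubini's theorem: since $X$ is bounded, all the capacities involved take values in $[0,1]$ and the $x$-integration only matters over the compact range of $X$, so the double integrand is bounded and supported on a bounded rectangle in $(x,\beta)$, making Fubini directly applicable. The rewriting of the normalization term is the device that makes the Choquet integral respond linearly to the averaging of capacities $h_\alpha\circ\nu=\tfrac1\alpha\int_0^\alpha(g_\beta\circ\nu)\,\d\beta$, and it is what lets the two computations combine cleanly.
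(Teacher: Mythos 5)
Your proposal is correct, and while the first identity is proved essentially as in the paper (you take $x^\ast=\inf\{x:\nu(X>x)<1-\alpha\}$, observe that $g_\alpha(\nu(X>x))$ is the indicator of a lower half-line up to the single ambiguous point $x=x^\ast$, and evaluate the layer-cake integral to get $x^\ast=\VaR_\alpha(\mathfrak{D}_\nu(X))$; the paper handles the endpoint by a two-case split rather than by Lebesgue-negligibility, which is the same substance), your second identity takes a genuinely different route. The paper establishes the same pointwise averaging identity $h_\alpha(\delta)=\frac1\alpha\int_0^\alpha g_\beta(\delta)\,\d\beta$, but then argues abstractly: it notes that $\varphi(X)=\frac1\alpha\int_0^\alpha\mathrm{G}_\beta^\nu(X)\,\d\beta$ is monotone, comonotonic additive, and normalized, invokes Schmeidler's representation theorem (cited as \cite{S86} in Section \ref{sec:def}) to conclude $\varphi$ is a Choquet integral with respect to some capacity $\tilde{\nu}$, and pins down $\tilde{\nu}=h_\alpha\circ\nu$ by evaluating $\varphi$ on indicator acts. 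You instead push the pointwise identity directly through the layer-cake formula and swap the $x$- and $\beta$-integrals by Fubini, with the normalization rewritten as $1=\frac1\alpha\int_0^\alpha g_\beta(\nu(\Omega))\,\d\beta$. Your computation is sound: the double integrand $\id\{\beta\geq 1-\nu(X>x)\}$ is jointly measurable because $x\mapsto\nu(X>x)$ is monotone (hence Borel), and boundedness of $X$ confines the nontrivial part of the integrand to a bounded rectangle, so Fubini applies with no further conditions. What each approach buys: yours is elementary and self-contained, requiring no representation theorem at the cost of the (easy) measurability check; the paper's is shorter on computation but silently requires verifying comonotonic additivity of $\varphi$, which in turn rests on comonotonic additivity of each $\mathrm{G}_\beta^\nu$. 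One trivial point you use implicitly in the first identity and could state: $x^\ast$ is finite precisely because $X$ is bounded, since $\nu(X>x)=0$ for $x\geq\|X\|$ and $\nu(X>x)=1$ for $x<-\|X\|$ (the paper is equally silent on this).
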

\begin{proof}
    Fix $X\in \mathcal{X}$, let $x_0=\inf\{x\in \mathbb{R}:\nu(X>x)< 1-\alpha\}$. For $\epsilon>0$, we know that 
    $\nu(X>x_0+\epsilon)< 1-\alpha$ and $\nu(X>x_0-\epsilon) \geq1-\alpha.$ Therefore, $g_\alpha(\nu(X>x_0+\epsilon))=0$ and $g_\alpha(\nu(X>x_0-\epsilon))=1$. Thus,
$$g_\alpha(\nu(X>x))=\begin{cases}
    \id_{(-\infty,x_0)}(x)&\text{if}~\nu(X>x_0)\geq 1-\alpha\\\id_{(-\infty,x_0]}(x)&\text{if}~\nu(X>x_0)< 1-\alpha
\end{cases},~~x\in \mathbb{R}.$$
By the definition of the Choquet integral, $\mathrm{G}_\alpha^\nu(X)=x_0.$ Since, by definition, $S_{\mathfrak{D}_\nu(X)}(x)=\nu(X>x)$ for all $x\in \mathbb{R}$, we have $\mathrm{G}_\alpha^\nu(X)=\mathrm{VaR}_{\alpha}(\mathfrak{D}_{\nu}(X))$. 
    
    Let $\delta\in [0,1]$, we have
    $$\frac{1}{\alpha}\int_0^{\alpha}g_{\beta}(\delta)\d\beta= \frac{1}{\alpha}\int_0^{\alpha} \id_{[1-\delta,1]}(\beta)\d\beta=\left(\delta/\alpha-(1-\alpha)/\alpha\right)^+=h_{\alpha}(\delta).$$
    Since $\varphi:\mathcal{X}\to\mathbb{R}$ defined by $\varphi(X)=\frac{1}{\alpha}\int_0^{\alpha}\mathrm{G}_\beta^\nu(X)\d\beta$ is monotone, comonotonic additive, and normalized, $\varphi(X)=\int_{\Omega}X\d\tilde{\nu}$ for some capacity $\tilde{\nu}$. For all $A\in \mathcal{F}$,
    $$\tilde{\nu}(A)=\varphi(\id_A)=\frac{1}{\alpha}\int_0^{\alpha}\mathrm{G}_\beta^\nu(\id_A)\d\beta=\frac{1}{\alpha}\int_0^{\alpha}g_{\beta}(\nu(A))\d\beta=h_{\alpha}(\nu(A)).$$
    Therefore, $\varphi(X)=\mathrm{H}_{\alpha}^\nu(X)$ for all $X\in \mathcal{X}$.
\end{proof} 

\section{Proof accompanying Section \ref{sec:prelim}}

\label{app:prelim}

\begin{proof}[Proof of Proposition \ref{prop:riskfun}]
    Let $\gamma$ and $\tilde{\gamma}$ define the same preference relation on distributions $\succsim^{\ell}$ in \eqref{eq:repLot}. Define $f:\mathbb{R}\to\mathbb{R}$ and $\tilde{f}:\mathbb{R}\to\mathbb{R}$ by
        $$f(x)=\gamma(\delta_x)~\text{and}~\tilde{f}(x)=\tilde{\gamma}(\delta_x),~~x\in \mathbb{R}.$$
        Since $\gamma$ and $\tilde{\gamma}$ are statistics, both $f$ and $\tilde{f}$ are strictly increasing and continuous. We claim that $\im(\gamma)=\im(f)$. Clearly, $\im(f)\subseteq\im(\gamma)$. Let $Q\in \mathcal{D}$. Since $Q$ has compact support, there exists $x,y\in \mathbb{R}$ such that 
$\delta_x\geq_{\mathrm{fsd}}Q\geq_{\mathrm{fsd}}\delta_y.$ Therefore, $f(x)\geq \gamma(Q)\geq f(y)$. As $f$ is continuous, by the intermediate value theorem, there exists $z\in \mathbb{R}$ such that $f(z)=\gamma(Q)$. Thus, $\im(\gamma)\subseteq\im(f)$ and $\im(\gamma)=\im(f)$. A similar argument will show that $\im(\tilde{\gamma})=\im(\tilde{f})$. Define the statistics $\psi:\mathcal{D}\to\mathbb{R}$ and $\tilde{\psi}:\mathcal{D}\to\mathbb{R}$ by 
$$\psi(Q)=f^{-1}\circ \gamma(Q)~\text{and}~\tilde{\psi}(Q)=\tilde{f}^{-1}\circ \tilde{\gamma}(Q),~~Q\in \mathcal{D},$$
where $f^{-1}:\im(f)\to \mathbb{R}$ and $\tilde{f}^{-1}:\im(\tilde{f})\to \mathbb{R}$ denote the inverses of $f$ and $\tilde{f}$, respectively. It is straightforward to show that $\psi$ and $\tilde{\psi}$ are certainty-equivalent statistics that satisfy \eqref{eq:repLot}. Note that this additionally shows the existence of a certainty-equivalent statistic representing $\succsim^\ell$. Given $Q\in \mathcal{D}$, $Q\simeq^{\ell}\delta_{\psi(Q)}$, which implies $\psi(Q)=\tilde{\psi}(Q).$ Thus, we have
$\tilde{\gamma}=\tilde{f}\circ f^{-1}\circ \gamma$.
Additionally, if $\gamma$ and $\tilde{\gamma}$ were certainty-equivalent statistics, then $\gamma=\tilde{\gamma}$ since $f(x)=\tilde{f}(x)=x$ for all $x\in \mathbb{R}$.
    \end{proof}

\section{Proofs, discussions, and results accompanying Section \ref{sec:two}}

\label{app:two}

\begin{proof}[Proof of Proposition \ref{prop:labelGood1}]
    Fix $\eta\in \mathcal{X}(\Delta,\Sigma)$.
    We claim the set $\mathcal{I}_1(\eta)=\{c\in \mathbb{R}:\eta\succsim^{m} c\}$ must be of the form $(-\infty,a_0(\eta)]$ for some $a_0(\eta)\in \mathbb{R}$. Let $a\in \mathcal{I}_1(\eta)$ and $b\in \mathbb{R}$ satisfy $a>b$, then, by Property (M3), $\eta\succsim^{m} a\succ^{m} b$ and $b\in \mathcal{I}_1(\eta)$. Find $a\in \mathbb{R}$ such that $a\geq \eta$. Therefore, by Property (M1), $a\succsim^{m} \eta$. If $b\in \mathbb{R}$ satisfies $b>a$, then, by Property (M3), $b\succ^{m} a\succsim^{m}\eta$ and $b\notin \mathcal{I}_1(\eta)$. Therefore, $\sup(\mathcal{I}_1(\eta))$ exists. Denote by $a_0(\eta)=\sup(\mathcal{I}_1(\eta))$. By Property (M2), $a_0(\eta)\in \mathcal{I}_1(\eta)$. Therefore, $\mathcal{I}_1(\eta)=(-\infty,a_0(\eta)]$. Similarly, we can show that $\mathcal{I}_2(\eta)=\{c\in \mathbb{R}:c\succsim^{m} \eta\}$ is of the form $[b_0(\eta),\infty)$ for some $b_0(\eta)\in \mathbb{R}$. It must hold that $a_0(\eta)\geq b_0(\eta)$ as $\succsim$ is a preference relation on $\mathcal{X}(\Delta,\Sigma)$. Therefore,  $\eta\simeq^{m} a_0(\eta).$
    
    Define the functional $\rho:\mathcal{X}(\Delta,\Sigma)\to\mathbb{R}$ by
    $\rho(\eta)=a_0(\eta).$ By Property (M3), it is clear that
    $$\eta\succsim^{m}\zeta\iff \rho(\eta)\geq \rho(\zeta),~~\eta,\zeta\in\mathcal{X}(\Delta,\Sigma).$$ By Property (M1), $\rho(\eta)\geq \rho(\zeta)$ for all $\eta,\zeta\in \mathcal{X}(\Delta,\Sigma)$
    satisfying $\eta\geq\zeta$. By property (M3), $\rho(a)>\rho(b)$ for all $a,b\in \mathbb{R}$ satisfying $a>b$.
\end{proof}

\begin{proposition}
    Let $\gamma$ be a statistic. Then for all $X\in \mathcal{X}$, $\gamma\circ \mathfrak{R}_X\in \mathcal{X}(\Delta,\Sigma)$.
\end{proposition}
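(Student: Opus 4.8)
The plan is to verify directly the two defining requirements for membership in $\mathcal{X}(\Delta,\Sigma)$: that $\gamma\circ\mathfrak{R}_X$ is bounded and that it is $\Sigma$-measurable. Throughout I would set $M=\|X\|$, so that for every $\mu\in\Delta$ the pushforward $X_\#\mu$ is supported in the compact interval $[-M,M]$ and hence genuinely lies in $\mathcal{D}$; write $\mathcal{D}_M$ for the set of distributions with support contained in $[-M,M]$, so that $\mathfrak{R}_X$ maps $\Delta$ into $\mathcal{D}_M$. Boundedness is then immediate: for every $\mu\in\Delta$ we have $\delta_{M}\geq_{\mathrm{fsd}}X_\#\mu\geq_{\mathrm{fsd}}\delta_{-M}$, and since $\gamma$ is monotone with respect to $\geq_{\mathrm{fsd}}$ this gives $\gamma(\delta_{-M})\leq\gamma(X_\#\mu)\leq\gamma(\delta_{M})$, so $\gamma\circ\mathfrak{R}_X$ is bounded by $\max\{|\gamma(\delta_{-M})|,|\gamma(\delta_{M})|\}$.

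For measurability the strategy is to prove the stronger statement that $\gamma\circ\mathfrak{R}_X$ is \emph{continuous} from $(\Delta,\text{weak*})$ to $\mathbb{R}$, from which Borel ($\Sigma$-)measurability follows automatically. I would factor it as $\gamma|_{\mathcal{D}_M}\circ\mathfrak{R}_X$ and establish continuity of each factor. First, $\mathfrak{R}_X\colon(\Delta,\text{weak*})\to(\mathcal{D}_M,\text{weak convergence})$ is continuous: since the topology of convergence in distribution is generated by the evaluation maps $Q\mapsto\int_{\mathbb{R}}f\,\d Q$ over bounded continuous $f$, it suffices to show that $\mu\mapsto\int_{\mathbb{R}}f\,\d(X_\#\mu)=\int_\Omega f(X)\,\d\mu$ is weak*-continuous. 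Here $f(X)$ is a bounded $\mathcal{F}$-measurable function; approximating it uniformly by $\mathcal{F}$-simple functions and using that $\mu\mapsto\mu(A)$ is weak*-continuous for every $A\in\mathcal{F}$ (so that $\mu\mapsto\int s\,\d\mu$ is continuous for each simple $s$), the map $\mu\mapsto\int_\Omega f(X)\,\d\mu$ appears as a uniform limit of continuous functions and is therefore continuous.

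Second, I would argue that $\gamma|_{\mathcal{D}_M}$ is continuous for convergence in distribution. The definition of a statistic supplies only \emph{sequential} continuity along sequences of uniformly bounded support, so the key observation is that on $\mathcal{D}_M$---the probability measures on the compact set $[-M,M]$---the topology of weak convergence is metrizable (indeed compact, by Prokhorov's theorem). On a metrizable space sequential continuity coincides with continuity, and every sequence in $\mathcal{D}_M$ trivially has uniformly bounded support, so the statistic hypothesis applies and yields continuity of $\gamma|_{\mathcal{D}_M}$. Composing the two continuous maps shows that $\gamma\circ\mathfrak{R}_X$ is continuous, hence $\Sigma$-measurable, and together with boundedness this gives $\gamma\circ\mathfrak{R}_X\in\mathcal{X}(\Delta,\Sigma)$.

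The main obstacle I anticipate is precisely this mismatch between the merely sequential continuity built into the notion of a statistic and the weak* topology on $\Delta$, which in general need not be metrizable (so that sequential arguments could fail to give topological measurability there). The resolution is to route everything through the image space $\mathcal{D}_M$: because $X$ is bounded, all relevant distributions live in a weakly compact metrizable set, and it is there that sequential continuity can be safely upgraded to genuine continuity before composing back with the weak*-continuous map $\mathfrak{R}_X$.
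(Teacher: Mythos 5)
Your proposal is correct and follows essentially the same route as the paper's proof: boundedness via the first-order stochastic dominance sandwich $\delta_{\|X\|}\geq_{\mathrm{fsd}}X_\#\mu\geq_{\mathrm{fsd}}\delta_{-\|X\|}$, and measurability by establishing continuity of $\gamma\circ\mathfrak{R}_X$ as the composition of the weak*-to-weak continuous map $\mathfrak{R}_X$ (reduced via change of variables to continuity of $\mu\mapsto\int_\Omega f(X)\,\d\mu$) with the restriction of $\gamma$ to distributions supported in a fixed compact interval. If anything, you are more careful than the paper on the one delicate step: the paper simply asserts that $\gamma$ restricted to this set is continuous ``as $\gamma$ is a statistic,'' whereas you correctly identify that the statistic hypothesis only gives sequential continuity and justify the upgrade via metrizability of the weak topology on measures over a compact interval.
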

\begin{proof}
    Fix $X\in \mathcal{X}$. Find $a,b\in \mathbb{R}$ such that $a\geq X\geq b$. Given $\mu\in \Delta$, we have $$\delta_a\geq_{\mathrm{fsd}}\mathfrak{R}_X(\mu)\geq_{\mathrm{fsd}}\delta_b.$$ Therefore, since $\gamma$ is a statistic, we have
    $\gamma(\delta_a)\geq \gamma\circ\mathfrak{R}_X(\mu)\geq \gamma(\delta_b)$. Since $\mu$ was arbitrary, $\gamma(\delta_a)\geq \gamma\circ\mathfrak{R}_X\geq \gamma(\delta_b)$. Therefore, $\gamma\circ\mathfrak{R}_X$ is bounded. To show that $\gamma\circ\mathfrak{R}_X$ is $\Sigma$-measurable, we will show that it is continuous. Let $(\mu_\lambda)_{\lambda\in \Lambda}\subseteq \Delta$ be a net converging to $\mu\in \Delta$. Given the bounded and continuous $f:\mathbb{R}\to\mathbb{R}$, we have 
    $$\lim_{\lambda}\int_{\mathbb{R}}f\d(\mathfrak{R}_X(\mu_{\lambda}))=\lim_{\lambda}\int_{\Omega}f(X)\d\mu_{\lambda}=\int_{\Omega} f(X)\d\mu =\int_{\mathbb{R}}f\d(\mathfrak{R}(\mu)).$$
    Therefore, $\mathfrak{R}_X(\mu_{\lambda})\to\mathfrak{R}_X(\mu)$ in distribution. Define $\mathcal{D}_0\subseteq \mathcal{D}$ by $$\mathcal{D}_0=\{Q\in \mathcal{D}:\supp(Q)\subseteq [\gamma(\delta_b),\gamma(\delta_a)]\}.$$
    Equipping $\mathcal{D}_0$ with the topology consistent with convergence in distribution, we have that $\gamma|_{\mathcal{D}_0}$ is continuous as $\gamma$ is a statistic. Since $\im(\mathfrak{R}_X)\subseteq \mathcal{D}_0$, we have $\lim_{\lambda}\gamma\circ\mathfrak{R}_X(\mu_\lambda)=\gamma\circ\mathfrak{R}_X(\mu)$. Therefore, $\gamma\circ\mathfrak{R}_X$ is $\Sigma$-measurable.
\end{proof}

\begin{lemma}\label{lem:FSD}
    Let $\nu$ be a continuous capacity. If $X,Y\in \mathcal{X}$ satisfy $X\geq Y$, then $\mathfrak{D}_\nu(X)\geq_{\mathrm{fsd}}\mathfrak{D}_\nu(Y)$.
\end{lemma}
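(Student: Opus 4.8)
The plan is to reduce the first-order stochastic dominance assertion to a pointwise comparison of survival functions, which then follows immediately from the monotonicity of the capacity $\nu$.

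First I would exploit the structure of the Choquet act-to-distribution mapping. Since $X\geq Y$ means $X(\omega)\geq Y(\omega)$ for every $\omega\in\Omega$, the upper level sets are nested: $\{Y>x\}\subseteq\{X>x\}$ for all $x\in\mathbb{R}$. Because $\nu$ is a capacity, it is monotone with respect to set inclusion, so $\nu(X>x)\geq\nu(Y>x)$ for every $x$. By the defining identity $S_{\mathfrak{D}_\nu(Z)}(x)=\nu(Z>x)$, this reads
$$
S_{\mathfrak{D}_\nu(X)}(x)=\nu(X>x)\geq\nu(Y>x)=S_{\mathfrak{D}_\nu(Y)}(x),\qquad x\in\mathbb{R}.
$$
Thus the survival function of $\mathfrak{D}_\nu(X)$ dominates that of $\mathfrak{D}_\nu(Y)$ everywhere, and this step uses nothing beyond monotonicity of $\nu$ and continuity (which guarantees $\mathfrak{D}_\nu$ is well defined).

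Next I would convert pointwise domination of survival functions into the quantile-based definition of $\geq_{\mathrm{fsd}}$ adopted in the paper. Write $Q=\mathfrak{D}_\nu(X)$ and $P=\mathfrak{D}_\nu(Y)$, and fix $\alpha\in(0,1)$. Because $S_Q\geq S_P$ pointwise, any $x$ with $S_Q(x)<1-\alpha$ also satisfies $S_P(x)\leq S_Q(x)<1-\alpha$, giving the set inclusion $\{x:S_Q(x)<1-\alpha\}\subseteq\{x:S_P(x)<1-\alpha\}$. Since the infimum over a smaller set is no smaller, taking infima yields $q_Q(\alpha)\geq q_P(\alpha)$. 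As $\alpha\in(0,1)$ was arbitrary, we conclude $Q\geq_{\mathrm{fsd}}P$, i.e.\ $\mathfrak{D}_\nu(X)\geq_{\mathrm{fsd}}\mathfrak{D}_\nu(Y)$, which is the claim.

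There is no genuine obstacle here: the lemma is an elementary consequence of the monotonicity of $\nu$. The only point demanding a little care is the second step, where one must align the survival-function comparison with the quantile definition of $\geq_{\mathrm{fsd}}$ used in Section~\ref{sec:risk}; the set-inclusion-and-infimum argument handles this matching cleanly.
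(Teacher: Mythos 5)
Your proof is correct and takes essentially the same route as the paper: the paper's proof is exactly the observation that $\{Y>x\}\subseteq\{X>x\}$ plus monotonicity of $\nu$ gives $S_{\mathfrak{D}_\nu(X)}\geq S_{\mathfrak{D}_\nu(Y)}$ pointwise, from which $\geq_{\mathrm{fsd}}$ follows. Your second step merely makes explicit the routine conversion from pointwise survival-function domination to the quantile definition of $\geq_{\mathrm{fsd}}$, which the paper leaves implicit; that step is carried out correctly.
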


\begin{proof}
    For all $x\in\mathbb{R}$, $ \{X>x\}\supseteq\{Y>x\}$. Thus, for all $x\in\mathbb{R}$,$$S_{\mathfrak{D}_{\nu}(X)}(x)=\nu(X>x)\geq \nu(Y>x)=S_{\mathfrak{D}_{\nu}(Y)}(x).$$ Therefore, $\mathfrak{D}_{\nu}(X)\geq_{\mathrm{fsd}}\mathfrak{D}_{\nu}(Y)$.
\end{proof}

\begin{lemma}\label{lem:con}
    Let $\nu$ be a continuous capacity. If $(Z_n)_{n\in \N}\subseteq \mathcal{X}$ is a bounded sequence pointwise converging to $Z\in \mathcal{X}$, then $\mathfrak{D}_\nu(Z_n)\to \mathfrak{D}_{\nu}(Z)$ in distribution.
\end{lemma}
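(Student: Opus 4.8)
The plan is to reduce convergence in distribution to pointwise convergence of survival functions at continuity points of the limit. Write $S_n:=S_{\mathfrak{D}_\nu(Z_n)}$ and $S:=S_{\mathfrak{D}_\nu(Z)}$, so that by definition $S_n(x)=\nu(Z_n>x)$ and $S(x)=\nu(Z>x)$ for all $x\in\mathbb{R}$. Since $(Z_n)_{n\in\N}$ is bounded, say $\|Z_n\|\le M$ for all $n$, every $\mathfrak{D}_\nu(Z_n)$ and $\mathfrak{D}_\nu(Z)$ is supported in $[-M,M]$; hence no mass escapes and it suffices to show that $S_n(x)\to S(x)$ at each $x$ where $S$ is continuous, which is the standard criterion for convergence in distribution in $\mathcal{D}$. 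Everything thus reduces to controlling the capacity of the events $\{Z_n>x\}$.

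First I would record two set inclusions that follow purely from the pointwise convergence $Z_n\to Z$. If $Z(\omega)>x$, then $Z_n(\omega)>x$ for all large $n$, giving $\{Z>x\}\subseteq\bigcup_m\bigcap_{k\ge m}\{Z_k>x\}$; and if $Z_n(\omega)>x$ for infinitely many $n$, then passing to that subsequence yields $Z(\omega)\ge x$, giving $\bigcap_m\bigcup_{k\ge m}\{Z_k>x\}\subseteq\{Z\ge x\}$. Next I would fix $x$ and exploit the continuity of $\nu$ on the monotone approximating sequences. The sets $B_m:=\bigcap_{k\ge m}\{Z_k>x\}$ increase to the left-hand $\liminf$, so upward continuity of $\nu$ together with monotonicity gives $\lim_m\nu(B_m)\ge\nu(Z>x)$, while $\nu(B_m)\le\nu(Z_k>x)$ for all $k\ge m$ forces $\lim_m\nu(B_m)\le\liminf_k S_k(x)$; combining, $\liminf_n S_n(x)\ge S(x)$. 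Symmetrically, $C_m:=\bigcup_{k\ge m}\{Z_k>x\}$ decreases to the $\limsup$, so downward continuity gives $\lim_m\nu(C_m)\le\nu(Z\ge x)$, while $\nu(C_m)\ge\nu(Z_k>x)$ for $k\ge m$ forces $\lim_m\nu(C_m)\ge\limsup_k S_k(x)$; combining, $\limsup_n S_n(x)\le\nu(Z\ge x)$.

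Finally I would identify the continuity points of $S$. Upward and downward continuity of $\nu$ show that $S$ is right-continuous with left limit $S(x^-)=\nu(Z\ge x)$, so $S$ is continuous at $x$ exactly when $\nu(Z\ge x)=\nu(Z>x)$. At every such $x$ the two bounds from the previous step read $\limsup_n S_n(x)\le\nu(Z\ge x)=\nu(Z>x)=S(x)\le\liminf_n S_n(x)$, which forces $S_n(x)\to S(x)$. A monotone function has at most countably many discontinuities, so this convergence holds on a dense set of $x$, and together with the uniformly bounded support this is precisely $\mathfrak{D}_\nu(Z_n)\to\mathfrak{D}_\nu(Z)$ in distribution.

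The main obstacle, and the reason the full two-sided continuity of $\nu$ is required rather than mere monotonicity, is the gap between $\{Z>x\}$ and $\{Z\ge x\}$: pointwise convergence only pins down the events up to this boundary, so one cannot expect $S_n(x)\to S(x)$ at an atom $x$ of the limit where $\nu(Z>x)<\nu(Z\ge x)$. Restricting to continuity points of $S$, where these two values coincide, is exactly what closes the gap and makes the $\liminf$ and $\limsup$ bounds meet.
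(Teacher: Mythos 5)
Your proof is correct and follows essentially the same route as the paper's: both arguments sandwich $\nu(Z_n>x)$ between monotone approximations, invoke upward and downward continuity of $\nu$, and exploit that at continuity points of $S_{\mathfrak{D}_\nu(Z)}$ one has $\nu(Z>x)=\nu(Z\geq x)$. The only cosmetic difference is that you work directly with the event sequences $B_m=\bigcap_{k\geq m}\{Z_k>x\}$ and $C_m=\bigcup_{k\geq m}\{Z_k>x\}$, whereas the paper introduces the envelope acts $X_n=\sup_{m\geq n}Z_m$ and $Y_n=\inf_{m\geq n}Z_m$ (note $\{X_n>x\}=C_n$) and sandwiches via the first-order dominance lemma.
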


\begin{proof}
      For $n\in\N$, define $X_n=\sup_{m\geq n}Z_m$. It holds that $(X_n)_{n\in \N}$ is a bounded sequence such that $X_n\downarrow Z$ pointwise. We claim that $\mathfrak{D}_{\nu}(X_n)\to \mathfrak{D}_{\nu}(Z)$ in distribution.
      Denote by $\mathfrak{Z}\subseteq \mathbb{R}$ the set of continuity points for $S_{\mathfrak{D}_{\nu}(Z)}$. Given $x\in \mathfrak{X}$, since $$\{Z>x\}\subseteq \bigcap_{n=1}^\infty\{X_n>x\}\subseteq \{Z\geq x\},$$ it holds that $\nu(Z>x)=\nu(Z\geq x)\geq\nu(\bigcap_{n=1}^\infty\{X_n>x\})\geq \nu(Z> x).$ Additionally, as $\nu$ is continuous, we have
    $$S_{\mathfrak{D}_\nu(X)}(x)=\nu(X>x)=\nu\left(\bigcap_{n=1}^\infty\{X_n>x\}\right)=\lim_{n\to\infty}\nu(X_n>x)=\lim_{n\to\infty}S_{\mathfrak{D}_\nu(X_n)}(x)$$
    Therefore $\mathfrak{D}_{\nu}(X_n)\to\mathfrak{D}_{\nu}(Z)$ in distribution. 

    For $n\in \N$, define $Y_n=\inf_{m\geq n}Z_m$. It holds that $(Y_n)_{n\in \N}$ is a bounded sequence such that $Y_n\uparrow Z$ pointwise.  Since $\nu$ is continuous,
$$\lim_{n\to\infty}S_{\mathfrak{D}_\nu(Y_n)}(x)=\lim_{n\to\infty}\nu(Y_n>x)=\nu\left(\bigcup_{n=1}^\infty \{Y_n>x\}\right)=\nu(Z>x)=S_{\mathfrak{D}_\nu(Z)}(x),~~x\in \mathbb{R}.$$
    Thus, $\mathfrak{D}_\nu(Y_n)\to \mathfrak{D}_\nu(Z)$ in distribution.

    Finally, since $X_n\geq Z_n\geq Y_n$ for all $n\in \N$, by Lemma \ref{lem:FSD}, we have that $$\mathfrak{D}_\nu(X_n)\geq_{\mathrm{fsd}}\mathfrak{D}_\nu(Z_n)\geq_{\mathrm{fsd}}\mathfrak{D}_\nu(Y_n).$$
    Therefore, if $x\in \mathbb{R}$ is a continuity point of $S_{\mathfrak{D}_\nu(Z)}$, by the above results, we have
    $$
    \begin{aligned}
      S_{\mathfrak{D}_\nu(Z)}(x)=\lim_{n\to\infty}S_{\mathfrak{D}_\nu(X_n)}(x)&\geq \limsup_{n\to\infty}S_{\mathfrak{D}_\nu(Z_n)}(x)\\&\geq \liminf_{n\to\infty}S_{\mathfrak{D}_\nu(Z_n)}(x)\geq \lim_{n\to\infty}S_{\mathfrak{D}_\nu(Y_n)}(x)=S_{\mathfrak{D}_\nu(Z)}(x).  
    \end{aligned}
    $$
    Therefore, $\mathfrak{D}_\nu(Z_n)\to\mathfrak{D}_\nu(Z)$ in distribution.
\end{proof}

\begin{proof}[Proof of Theorem \ref{theo:riskEq}]
    Assume that $\mathfrak{D}$ satisfies Properties (R1)--(R3). Let $f:\mathbb{R}\to\mathbb{R}$ be continuous and increasing. We will first show that \begin{equation}
        \label{eq:incr}
        \mathfrak{D}(f(X))=f_{\#}(\mathfrak{D}(X)) \footnote{Note that Property (R3) only guarantees this equality for continuous and strictly increasing functions.},~~X\in \mathcal{X}.
    \end{equation} For each $n\in \N$, define the function $f_n:\mathbb{R}\to\mathbb{R}$ by
    $f_n(x)=f(x)+\exp(x-n)$. The following are immediately clear. (1) For all $x\in \mathbb{R}$, $\lim_{n\to\infty}f_n(x)=f(x)$. (2) For all $n\in \N$, $f_n$ is a vNM utility function. (3) For all $X\in \mathcal{X}$, $(f_n(X))_{n\in \N}$ is bounded. Let $Q\in \mathcal{D}$, we claim that $(f_n)_{\#}Q\to f_\# Q$ in distribution. To see this, let $g:\mathbb{R}\to\mathbb{R}$ be bounded and continuous. Then, by the dominated convergence theorem, we have
    $$\int_{\mathbb{R}}g\d((f_n)_{\#}Q)=\int_{\mathbb{R}}g\circ f_n\d Q\to \int_{\mathbb{R}}g\circ f\d Q=\int_{\mathbb{R}}g\d(f_{\#}Q).$$
    Therefore, $(f_n)_{\#}Q\to f_\# Q$ in distribution. By Property (R2) and (R3), for all $X\in \mathcal{X}$, we have
    $$\mathfrak{D}(f(X))=\lim_{n\to\infty}\mathfrak{D}(f_n(X))=\lim_{n\to\infty}(f_n)_{\#}(\mathfrak{D}(X))=f_{\#}(\mathfrak{D}(X)).$$

    Next, we claim that \begin{equation}
        \label{eq:supp}\supp(\mathfrak{D}(X))\subseteq \overline{\im(X)}, ~~X\in\mathcal{X},
    \end{equation} where $\overline{\im(X)}$ denotes the closure of $\im(X)$. To see this claim, for the sake of contradiction, assume that \eqref{eq:supp} is false. That is, assume there exists $X\in \mathcal{X}$ and $x_0\in \supp(\mathfrak{D}(X))$ such that $x_0\notin \overline{\im(X)}$. To ease notation, let $Q_X=\mathfrak{D}(X)$. Therefore, we can find $\epsilon>0$ such that for $I= (x_0-\epsilon,x_0+\epsilon)$, we have $Q_X(I)>0$ and $X^{-1}(I)=\varnothing.$ Define $I_1=(x_0-\epsilon,x_0]$ and $I_2=(x_0,x_0+\epsilon)$. Without loss of generality, we may assume  $Q_X(I_1)>0$. Define the function $f:\mathbb{R}\to\mathbb{R}$ by 
    $$f(x)=\begin{cases} x_0-\epsilon&\text{if}~x\in I_1\\ 2x-x_0-\epsilon &\text{if}~x\in I_2\\ x &\text{if}~x\notin I_1\cup I_2\end{cases}.$$
    Let $g$ denote the identity function. We have that $f(X)=g(X)$. Therefore, by \eqref{eq:incr}, $f_{\#}Q_X=g_{\#}Q_X.$
    However, we have
    $$\begin{aligned}
       \relax[f_{\#}Q_X](\{x_0-\epsilon\})&=Q_X([x_0-\epsilon,x_0])\\&=Q_X(\{x_0-\epsilon\})+Q_X(I_1)>Q_X(\{x_0-\epsilon\})=[g_{\#}Q_X](\{x_0-\epsilon\}), 
    \end{aligned}
    $$
    a contradiction. 
    
    Let $X,Y\in \mathcal{X}$ be comonotonic, by \citet[Proposition 4.5]{D94A}, there exist  increasing and continuous $f:\mathbb{R}\to\mathbb{R}$ and $g:\mathbb{R}\to\mathbb{R}$ such that $f(X+Y)=X$, $g(X+Y)=Y$, and $f(x)+g(x)=x$ for all $x\in \mathbb{R}$. Therefore, by \eqref{eq:incr} and \citet[Theorem 1]{LMW24A}, for all $\alpha\in (0,1)$,
    $$\begin{aligned}
        \VaR_{\alpha}(\mathfrak{D}(X)\oplus \mathfrak{D}(Y))&=\VaR_{\alpha}(\mathfrak{D}(X))+\VaR_{\alpha}(\mathfrak{D}(Y))\\&=\VaR_{\alpha}(f_{\#}(\mathfrak{D}(X+Y)))+\VaR_{\alpha}(g_{\#}(\mathfrak{D}(X+Y)))\\&=f(\VaR_{\alpha}(\mathfrak{D}(X+Y)))+g(\VaR_{\alpha}(\mathfrak{D}(X+Y)))\\&=\VaR_{\alpha}(\mathfrak{D}(X+Y)).
    \end{aligned}$$
    Thus $\mathfrak{D}(X+Y)=\mathfrak{D}(X)\oplus \mathfrak{D}(Y)$. For $X\in \mathcal{X}$ and positive $a\in \mathbb{R}$, by Property (R3), we have $\mathfrak{D}(aX)=a\otimes \mathfrak{D}(X)$. If $a=0$, we have $\mathfrak{D}(aX)=a\otimes \mathfrak{D}(X)$ by \eqref{eq:supp}.

    Given $A\in \mathcal{F}$, by \eqref{eq:supp}, we know that there exists $a_A\in [0,1]$ such that $$\mathfrak{D}(\id_A)=a_A\delta_1+(1-a_A)\delta_0.$$
    Define the function $\nu:\mathcal{F}\to[0,1]$ by $\nu(A)=a_A$. By \eqref{eq:supp}, $\nu(\varnothing)=0$ and $\nu(\Omega)=1$. By Property (R1), for all $A,B\in \mathcal{F}$ satisfying $A\subseteq B$, $\nu(A)\leq \nu(B)$. Therefore, $\nu$ is a capacity. Let $(A_n)_{n\in\N}\subseteq \mathcal{F}$ be an increasing sequence. Define $A=\bigcup_{n=1}^\infty A_n$. By Property (R2), we have $\mathfrak{D}(\id_{A_n})\to\mathfrak{D}(\id_A)$ in distribution. Therefore, since $\mathbb{E}$ is a statistic,
$$\lim_{n\to\infty}\nu(A_n)=\lim_{n\to\infty}\mathbb{E}(\mathfrak{D}(\id_{A_n}))= \mathbb{E}(\mathfrak{D}(\id_A))=\nu(A).$$ Thus, $\nu$ is upwards continuous. A similar proof will show that $\nu$ is downwards continuous. Thus, $\nu$ is continuous. Remark that for all $A\in \mathcal{F}$, $\mathfrak{D}(\id_A)=\mathfrak{D}_\nu(\id_A)$.

We claim that for all non-negative simple $X\in \mathcal{X}$, $\mathfrak{D}(X)=\mathfrak{D}_{\nu}(X)$. Fix a non-negative simple $X\in \mathcal{X}$. We can find $N\in \N$, $(a_n)_{n=1}^N\subseteq \mathbb{R}$ satisfying $0<a_1<\dots<a_N$, and $(A_n)_{n=1}^N\subseteq \mathcal{F}$ satisfying $\Omega=A_1\supset A_2\supset \dots \supset A_N\neq \varnothing$ such that $X=\sum_{n=1}^N(a_n-a_{n-1})\id_{A_n}$, where $a_0=0$. For all $\alpha\in (0,1)$, $$\begin{aligned}
    \VaR_{\alpha}(\mathfrak{D}(X))&=\VaR_{\alpha}\left(\bigoplus_{n=1}^N (a_n-a_{n-1})\otimes \mathfrak{D}(\id_{A_n})\right)\\&=\sum_{n=1}^N(a_n-a_{n-1})\VaR_\alpha(\mathfrak{D}(\id_{A_n}))\\&=\sum_{n=1}^N(a_n-a_{n-1})\VaR_\alpha(\mathfrak{D}_{\nu}(\id_{A_n}))=\VaR_{\alpha}(\mathfrak{D}_{\nu}(X)).
\end{aligned}$$
Therefore, $\mathfrak{D}(X)=\mathfrak{D}_{\nu}(X)$.

We claim that for all non-negative $X\in \mathcal{X}$, $\mathfrak{D}(X)=\mathfrak{D}_{\nu}(X)$. Fix a non-negative $X\in \mathcal{X}$. Find a sequence $(X_n)_{n\in \N}\subseteq \mathcal{X}$ of non-negative simple acts such that $X_{n+1}\geq X_n\geq 0$ for all $n\in \N$ and $X_n\to X$ pointwise. By Property (R2) and Lemma \ref{lem:con}, we have
$$\mathfrak{D}(X)=\lim_{n\to\infty}\mathfrak{D}(X_n)=\lim_{n\to\infty}\mathfrak{D}_{\nu}(X_n)=\mathfrak{D}_{\nu}(X).$$
Therefore, $\mathfrak{D}(X)=\mathfrak{D}_{\nu}(X)$.

Finally, we claim that for general $X\in \mathcal{X}$, $\mathfrak{D}(X)=\mathfrak{D}_{\nu}(X)$. Fix $X\in \mathcal{X}$. We can find $a\in \mathbb{R}$ such that $X+a\geq 0$. Since $X+a$ and $-a$ are comonotonic, we have 
$$\mathfrak{D}(X)=\mathfrak{D}(X+a-a)=\mathfrak{D}(X+a)\oplus (-a)=\mathfrak{D}_\nu(X+a)\oplus (-a)=\mathfrak{D}_\nu(X).$$
Therefore, $\mathfrak{D}(X)=\mathfrak{D}_{\nu}(X)$.

Conversely, let $\nu$ be a continuous capacity. The fact that $\mathfrak{D}_\nu$ satisfies Properties (R1) and (R2) follows from Lemma \ref{lem:FSD} and Lemma \ref{lem:con}, respectively. Let $u$ be a vNM utility function. By \citet[Theorem 1]{LMW24A} and Proposition \ref{prop:quant}, for all $\alpha\in (0,1)$,
$$\VaR_\alpha(\mathfrak{D}_\nu(u(X)))=G_{\alpha}^\nu(u(X))=u(G_{\alpha}^\nu(X))=u(\VaR_\alpha(\mathfrak{D}_\nu(X))=\VaR_\alpha(u_{\#}(\mathfrak{D}_\nu(X))).$$
Therefore, $\mathfrak{D}_\nu(u(X))=u_{\#}(\mathfrak{D}_\nu(X)).$ Thus $\mathfrak{D}_\nu$ satisfies Property (R3).
\end{proof}

\section{Proofs accompanying
Section \ref{sec:Axioms}}

\begin{proof}[Proof of Theorem \ref{theo:mainrep}]
    Assume that $\succsim$ satisfies Axioms  (M), (RC), (SRM), (C), and (CD). We claim that for all $X\in \mathcal{X}$, there exists a unique $c_X\in \mathbb{R}$ such that $X\simeq c_X$. Let $X\in \mathcal{X}$, find $a,b\in \mathbb{R}$ such $a\geq X\geq b$. Therefore, by Axiom (M), the sets $\mathcal{L}_X=\{c\in \mathbb{R}:X\succsim c \}$ and $\mathcal{U}_X=\{c\in \mathbb{R}:c\succsim X\}$ are non-empty. It must hold that $\mathcal{L}_X\subseteq (-\infty,a]$, because if $X\succsim c$ for $c\in \mathbb{R}$ with $c>a$, then by Axioms (MO) and (SRM), it would follow that
    $X\succ a\succsim X$. Similarly, we have that $\mathcal{U}_X\subseteq [b,\infty).$ Let $a^*=\sup(\mathcal{L}_X)\in \mathbb{R}$ and $b^*=\inf(\mathcal{U}_X)\in \mathbb{R}$. By Axiom (C), it is clear that
    $\mathcal{L}_X=(-\infty,a^*]$ and $\mathcal{U}_X=[b^*, \infty).$ As $\mathcal{L}_X\cup \mathcal{U}_X=\mathbb{R}$, $b^*\leq a^*$. Take $c_X\in \mathcal{L}_X\cap \mathcal{U}_X$. Uniqueness of $c_X$ follows from Axiom (SRM). Therefore, by Axiom (SRM),
    $$X\succsim Y\iff c_X\geq c_Y,~~X,Y\in \mathcal{X}.$$
    
    Let $\nu$ denote the $\succsim$-matching probability. By Proposition \ref{lemma:match}, $\nu$ is continuous and risk conforming. As $(\Omega,\mathcal{G},\mathbb{P})$ is atomless, given any $Q\in \mathcal{D}$, there exists $X\in \mathcal{X}(\mathcal{G})$ such that $\mathfrak{D}_{\nu}(X)=Q$. Define $\gamma:\mathcal{D}\to \mathbb{R}$ by
    $\gamma(Q)=c_X,$ where $\mathfrak{D}_{\nu}(X)=Q$. To show that $\gamma$ is well-defined, we will show that
    $$\mathfrak{D}_{\nu}(X)=\mathfrak{D}_{\nu}(Y)\implies c_X= c_Y,~~X,Y\in \mathcal{X}.$$ Let $X,Y\in \mathcal{X}$ satisfy $\mathfrak{D}_{\nu}(X)=\mathfrak{D}_{\nu}(Y)$. Therefore, for all $x\in \mathbb{R}$, $\nu(X>x)=\nu(Y>x)$. By Proposition \ref{lemma:match}, $\{X>x\}\simeq \{Y>x\}.$ As $x\in \mathbb{R}$ was general, by Axiom (CD), $X\simeq Y$ and $c_X=c_Y$. As $c_X=\gamma\circ \mathfrak{D}_{\nu}(X)$ for all $X\in \mathcal{X}$,
    $$X\succsim Y\iff c_X\geq c_Y\iff \gamma\circ \mathfrak{D}_{\nu}(X)\geq \gamma\circ \mathfrak{D}_{\nu}(Y),~~X,Y\in \mathcal{X}.$$
    
    We need to show that $\gamma$ is a statistic. Let $Q,P\in \mathcal{D}$ satisfy $Q\geq_{\mathrm{fsd}} P.$ As $(\Omega,\mathcal{G},\mathbb{P})$ is atomless, there exists a $\mathcal{G}$-measurable $U:\Omega\to(0,1)$ with $\mathbb{P}(U\leq x)=x$ for all $x\in (0,1)$, i.e., $U$ has a uniform distribution under $\mathbb{P}$. It is routine to check that $\mathcal{G}$-measurable acts $X_Q=q_Q(U)$ and $X_P=q_P(U)$ satisfy $\mathfrak{D}_{\nu}(X_Q)=Q$ and $\mathfrak{D}_{\nu}(X_P)=P$. As $X_Q\geq X_P$, by Axiom (M),
    $$\gamma(Q)=\gamma\circ \mathfrak{D}_{\nu}(X_Q)\geq \gamma\circ \mathfrak{D}_{\nu}(X_P)=\gamma(P).$$
   
    Let $Q,P\in \mathcal{D}$ satisfy $Q>_{\mathrm{fsd}} P.$ Define $X_Q,X_P\in \mathcal{X}(\mathcal{G})$ the same way as above. As $X_Q\geq X_P$, we have that $X_Q\geq^{\mathbb{P}}_{\mathrm{as}}X_P$ We will show that $X_Q>^{\mathbb{P}}_{\mathrm{as}}X_P$. As $Q>_{\mathrm{fsd}} P$, there exists $\alpha_0\in (0,1)$, such that $q_Q(\alpha_0)>q_P(\alpha_0)$. By the right continuity of $q_Q$ and $q_P$, there exists $\epsilon\in(0,1-\alpha_0)$ such that $q_Q(\alpha)>q_P(\alpha)$ for all $\alpha\in [\alpha_0,\alpha_0+\epsilon]$. Therefore, $\{\alpha_0<U\leq \alpha_0+\epsilon\}\subseteq\{X_Q>X_P\}.$ Since $\mathbb{P}(\alpha_0<U\leq \alpha_0+\epsilon)=\epsilon>0$, $X_Q>^{\mathbb{P}}_{\mathrm{as}}X_P$. By Axiom (SRM),
    $$\gamma(Q)=\gamma\circ \mathfrak{D}_{\nu}(X_Q)> \gamma\circ \mathfrak{D}_{\nu}(X_P)=\gamma(P).$$
    
    Finally, let $(Q_n)_{n\in \N}\subseteq \mathcal{D}$ be a sequence with uniformly bounded support converging in distribution to $Q\in \mathcal{D}$. Let $\mathfrak{X}\subseteq (0,1)$ denote the points of discontinuity for $q_{Q}$. Since $q_{Q}$ is increasing, $\mathfrak{X}$ is at most countable. Therefore, $\mathfrak{X}$ is Borel measurable. Define $A=U^{-1}(\mathfrak{X})\in \mathcal{G}$. As $U$ has a uniform distribution under $\mathbb{P}$, $\mathbb{P}(A)=\mathbb{P}(U\in \mathfrak{X})=0$. For all $n\in \N$, define the $\mathcal{G}$-masurable acts $X_n=\id_{A^c}q_{Q_n}(U)$ and $X=\id_{A^c} q_{Q}(U)$. Since $\mathbb{P}(A^c)=1$, for all $n\in \N$, $X_n=_{\mathbb{P}}^{\mathrm{as}}q_{Q_n}(U)$ and $X=_{\mathbb{P}}^{\mathrm{as}} q_{Q}(U)$. Therefore, $\mathfrak{D}_{\nu}(X_n)=Q_n$ and $\mathfrak{D}_{\nu}(X)=Q$. We claim that $X_n\to X$ pointwise. Let $\omega\in A^c$, then, by the definition of $A$, $U(\omega)$ is a point of continuity of $q_Q$. Therefore,
    $$\lim_{n\to\infty}X_n(\omega)=\lim_{n\to\infty}q_{Q_n}(U(\omega))= q_{Q}(U(\omega))=X(\omega),$$
    as $Q_n\to Q$ in distribution. If $\omega\in A$, for all $n\in \N$, $X_n(\omega)=X(\omega)=0$. Therefore, $X_n\to X$ pointwise. Note that $(X_n)_{n\in \N}\subseteq \mathcal{X}(\mathcal{G})$ is a bounded sequence.  For each $n\in N$, define $Z_n=\inf_{m\geq n}X_m$. The sequence $(Z_n)_{n\in \N}$ is bounded and $Z_n\to X$ pointwise. 
    As $X_m\geq Z_n$ for all $n\in \N$ and $m\geq n$, by Axiom (M), $X_m\succsim Z_n$ for all $n\in \N$ and $m\geq n$. Therefore,
    $$\gamma(Q_m)=\gamma\circ\mathfrak{D}_{\nu}(X_m)\geq \gamma\circ\mathfrak{D}_{\nu}(Z_n),~~n\in \N~\text{and}~m\geq n.$$
    Thus, for all $n\in \N$, $\inf_{m\geq n}\gamma(Q_m)\geq \gamma\circ\mathfrak{D}_{\nu}(Z_n).$ 
    Therefore,
    $\liminf_{m\to\infty} \gamma(Q_m)\geq \gamma\circ \mathfrak{D}_{\nu}(Z_n)$ for all $n\in \N$, implying that $\liminf_{m\to\infty} \gamma(Q_m)\succsim Z_n$ for all $n\in \N$. By Axiom (C), $\liminf_{m\to\infty} \gamma(Q_m)\succsim X$. Therefore,
    $$\liminf_{m\to\infty} \gamma(Q_m)\geq \gamma
    \circ\mathfrak{D}_{\nu}(X)=\gamma(Q).$$
    One can show that $\gamma(Q)\geq \limsup_{m\to\infty} \gamma(Q_m)$ in a similar fashion. Therefore, $\lim_{n\to\infty}\gamma(Q_n)=\gamma(Q)$. Thus $\gamma$ is a statistic.

    For the converse, assume that there exists a statistic $\gamma$ and a risk-conforming capacity $\nu$ 
     such that \eqref{eq:mainrep} holds. As $\gamma$ is a statistic, by Lemma \ref{lem:FSD}, $\succsim$ satisfies Axiom (M). Since $\nu$ is risk conforming, it is clear that $\succsim$ satisfies Axiom (RC). Since  $\nu$ is risk conforming and $\gamma$ is a statistic, it is clear that $\succsim$ satisfies Axiom (SRM). Since $\gamma$ is a statistic, by Lemma \ref{lem:con}, $\succsim$ satisfies Axiom (C).    

    Define the function $\varphi:[0,1]\to[0,1]$ by $\varphi(x)=\gamma(x\delta_1+(1-x)\delta_0)$. As $\gamma$ is a statistic, $\varphi$ is distortion function. Let $X,Y\in \mathcal{X}$ satisfy $X\simeq^* Y$. Therefore, given $x\in \mathbb{R}$,
    $$\varphi(\nu(X>x))=\gamma\circ \mathfrak{D}_{\nu}\left(\id_{\{X>x\}}\right)=\gamma\circ\mathfrak{D}_{\nu}\left(\id_{\{Y>x\}}\right)=\varphi(\nu(Y>x)).$$ Therefore, $\nu(X>x)=\nu(Y>x)$. As $x\in \mathbb{R}$ was general, $\mathfrak{D}_{\nu}(X)=\mathfrak{D}_{\nu}(Y)$. Therefore, $\gamma\circ \mathfrak{D}_{\nu}(X)=\gamma\circ \mathfrak{D}_{\nu}(Y)$ and $X\simeq Y$. Thus $\succsim$ satisfies Axiom (CD).

    The proof of the uniqueness claim for $\gamma$ in \eqref{eq:mainrep} follows from Proposition \ref{prop:riskfun}. Finally, we will show that $\nu$ in \eqref{eq:mainrep} is the $\succsim$-matching probability. Let $A\in \mathcal{F}$ and $R_A\in \mathcal{G}$ with $\mathbb{P}(R_A)=\nu(A)$. Then, using the function $\varphi$ defined above, 
    $$\gamma\circ \mathfrak{D}_{\nu}(\id_A)=\varphi(\nu(A))=\varphi(\mathbb{P}(R_A))=\varphi(\nu(R_A))=\gamma\circ \mathfrak{D}_{\nu}(\id_{R_A}).$$
    Therefore $A\simeq R_A$ and $\nu$ is the $\succsim$-matching probability.
\end{proof}

\section{Proofs accompanying
Section \ref{sec:RA}}

\label{app:div}

\begin{proof}[Proof of Proposition \ref{prop:oper}]
    Let $X,Y\in \mathcal{X}$ be comonotonic. Then, by Proposition \ref{prop:quant}, for all $\alpha\in (0,1)$ it holds that 
    \begin{align*}
    \mathrm{VaR}_\alpha(\mathfrak{D}_\nu(X+Y))=\mathrm{G}_\alpha^\nu(X+Y)=\mathrm{G}_\alpha^\nu(X)+\mathrm{G}_\alpha^\nu(Y)&=\mathrm{VaR}_\alpha(\mathfrak{D}_\nu(X))+\mathrm{VaR}_\alpha(\mathfrak{D}_\nu(Y))\\&=\mathrm{VaR}_\alpha(\mathfrak{D}_\nu(X)\oplus\mathfrak{D}_\nu(Y)),
    \end{align*}
    as $X$ and $Y$ are comonotonic and $\mathrm{G}_\alpha^\nu$ is comonotonic additive. Given $\lambda\in [0,1]$, an identical proof will show that $\mathfrak{D}_\nu(\lambda X)=\lambda \otimes \mathfrak{D}_\nu(X)$ since $\mathrm{G}_\alpha^\nu(aX)=a\mathrm{G}_\alpha^\nu(X)$ for all $X\in \mathcal{X}$ and non-negative $a\in \mathbb{R}$.
\end{proof}

\begin{proof}[Proof of Proposition \ref{prop:coDi}]
    Let $\gamma$ denote the $\succsim$-CES and $\nu$ denote the $\succsim$-matching probability. Assume that $\succsim$ satisfies Property (WRD). Fix $Q,P\in \mathcal{D}$, without loss of generality, we may assume that $\gamma(Q)\geq \gamma(P)$. Given $\lambda\in [0,1]$, we need to show that
    $\gamma(\lambda\otimes Q\oplus (1-\lambda)\otimes P)\geq \gamma(P)$. Since $(\Omega,\mathcal{G},\mathbb{P}|_{\mathcal{G}})$ is atomless, there exists a $\mathcal{G}$-measurable $U:\Omega\to(0,1)$ with $\mathbb{P}(U\leq x)=x$ for all $x\in (0,1)$, i.e., $U$ has a uniform distribution under $\mathbb{P}$. Define $X_Q=q_Q(U)$ and $X_P=q_P(U)$, which are both elements of $\mathcal{X}(\mathcal{G})$. Since the $\succsim$-matching probability is risk conforming, this implies that $\mathfrak{D}_\nu(X_Q)=Q$ and $\mathfrak{D}_\nu(X_P)=P$. Since $\gamma(Q)\geq \gamma(P)$, this implies that $X_Q\succsim X_P$. 
    
    We claim that there exists $c_0\in [0,\infty)$ such that $X_Q-c_0\simeq X_P.$ Define $$c_0=\sup\{c\in [0,\infty): X_Q-c\succsim X_P\}.$$
    Let $c_1=\|X_Q\|+\|X_P\|$. We have that 
    $X_Q-c_1\leq -\|X_P\|\leq X_P.$ By Axioms (MO) and (SRM), for all $c>c_1$, $X_P\succ X_Q-c$. Thus $\{c\in [0,\infty): X_Q-c\succsim X_P\}\subseteq [0,c_1]$ and $c_0\in [0,\infty)$. A simple consequence of Axiom (C) is that $X_Q-c_0\simeq X_P$.  

    It is clear that $X_Q$ and $X_P$ are comonotonic. Therefore, $X_Q-c_0$ and $X_P$ are comonotonic and, by Property (WRD),
    $\lambda(X_Q-c_0)+(1-\lambda)X_P\succsim X_P$. Thus, $$\gamma\circ\mathfrak{D}_\nu\left(\lambda (X_Q-c_0)+(1-\lambda)X_P\right)\geq \gamma\circ\mathfrak{D}_\nu(X_P)=\gamma(P).$$
    Also, by Proposition \ref{prop:oper} and Axiom (MO), we have
    $$\gamma(\lambda\otimes Q\oplus (1-\lambda)\otimes P)=\gamma\circ\mathfrak{D}_\nu(\lambda X_Q+(1-\lambda)X_P)\geq \gamma\circ\mathfrak{D}_\nu(\lambda (X_Q-c_0)+(1-\lambda)X_P).$$
    Thus, it holds that $\gamma(\lambda\otimes Q\oplus (1-\lambda)\otimes P)\geq \gamma(P)$.

    Conversely, assume that $\gamma$ is comonotonic quasiconcave. Let $X,Y\in \mathcal{X}(\mathcal{G})$ be comonotonic such that $X\simeq Y$ and $\lambda\in [0,1]$. As $\lambda X$ and $(1-\lambda)\lambda Y$ are comonotonic, by Proposition \ref{prop:oper},
    $$\mathfrak{D}_\nu(\lambda X+(1-\lambda)Y)=\lambda\otimes \mathfrak{D}_\nu(X)\oplus(1-\lambda)\otimes \mathfrak{D}_\nu(Y).$$
    Since $\gamma$ is comonotonic quasiconcave,
    $$\begin{aligned}
        \gamma\circ\mathfrak{D}_\nu\left(\lambda X+(1-\lambda )Y\right)&=\gamma\left(\lambda\otimes \mathfrak{D}_\nu(X)\oplus(1-\lambda)\otimes \mathfrak{D}_\nu(Y)\right)\\&\geq \min\left\{\gamma\circ\mathfrak{D}_\nu(X), \gamma\circ\mathfrak{D}_\nu(Y)\right\}=\gamma\circ\mathfrak{D}_\nu(X).
    \end{aligned}$$
    Therefore, $\lambda X+(1-\lambda)Y\succsim X$.
\end{proof}

\begin{proof}[Proof of Theorem \ref{theo:rD}]
    The forward is trivial. The converse follows from Example \ref{ex:cs}. 
\end{proof}

\begin{proof}[Proof of Proposition \ref{prop:Extend1}]
    This is a direct consequence of Proposition \ref{prop:oper} and Proposition \ref{prop:coDi}.
\end{proof}

\begin{proof}[Proof of Proposition \ref{prop:riskEx}]
        Results (i) and (ii) follow directly from Proposition \ref{prop:oper}. To prove (iii), let $\gamma$ denote the $\succsim$-CES and $\nu$ denote the $\succsim$-matching probability. Since $\gamma$ is biseparable, there exists a vNM utility $u$ and a statistic $\tilde{\gamma}$ satisfying constant additivity and positive homogeneity such that \eqref{eq:bis} holds. Define $I:\mathcal{X}\to\mathbb{R}$ by $I(X)=\tilde{\gamma}
        \circ \mathfrak{D}_\nu(X)$. It is clear that  $I(X)\geq I(Y)$ for all $X,Y\in\mathcal{X}$ with $X\geq Y$. By Proposition \ref{prop:oper}, $I(aX+b)=aI(X)+b$ for all $X\in \mathcal{X}$, non-negative $a\in \mathbb{R}$, and $b\in \mathbb{R}$. By Theorem \ref{theo:riskEq}, we have
        $$I(u(X))=\tilde{\gamma}\circ\mathfrak{D}_{\nu}(u(X))=\tilde{\gamma}(u_{\#}(\mathfrak{D}_\nu(X)))=\gamma\circ\mathfrak{D}_\nu(X),~~X\in \mathcal{X}.$$
        Therefore, 
        $$X\succsim Y\iff \gamma\circ\mathfrak{D}_\nu(X)\geq \gamma\circ\mathfrak{D}_\nu(Y)\iff I(u(X))\geq I(u(Y)),~~X,Y\in \mathcal{X},$$
        and the proof is complete.
    \end{proof}

\begin{proof}[Proof of Theorem \ref{prop:comAm}]
    Assume that $\succsim_2$ is more ambiguity averse than $\succsim_1$. Let $A\in \mathcal{F}$ and $R_1\in \mathcal{G}$ such that $A\simeq_1 R_1$. By the definition of comparative ambiguity attitudes, $R_1\succsim_2 A$. Find $R_2\in \mathcal{G}$ such that $A\simeq_2 R_2$. As $R_1 \succsim_2 R_2$, it holds that $\mathbb{P}(R_1)\geq \mathbb{P}(R_2)$ and $\nu_1(A)\geq \nu_2(A).$ To show the converse, let $R\in \mathcal{G}$ and $A\in \mathcal{F}$ satisfy $R\succsim_1 A$. Furthermore, let $R_1,R_2\in \mathcal{G}$ satisfy $A\simeq_1 R_1$ and $A\simeq_2 R_2$. Since $\nu_1(A)\geq \nu_2(A)$ and $R\succsim_1 A$, $\mathbb{P}(R)\geq\mathbb{P}(R_1)\geq \mathbb{P}(R_2)$. Therefore, $R\succsim_2 R_2\simeq_2 A$.
\end{proof}

\begin{proof}[Proof of Proposition \ref{prop:attitudes}]
    Assume that the $\succsim_1$-CES is equal to the $\succsim_2$-CES and $\succsim_2$ is more ambiguity averse than $\succsim_1$. Denote the shared certainty-equivalent statistic by $\gamma$. Denote the $\succsim_1$-matching probability by $\nu_1$ and the $\succsim_2$-matching probability by $\nu_2$. By Theorem \ref{prop:comAm}, for all $A\in \mathcal{F}$, $\nu_1(A)\geq \nu_2(A)$. Therefore, for all $X\in \mathcal{X}$,
    $\nu_1(X>x)\geq \nu_2(X>x)$ for all $x\in \mathbb{R}.$
    Thus, for all $X\in \mathcal{X}$, $\mathfrak{D}_{\nu_1}(X)\geq_{\mathrm{fsd}} \mathfrak{D}_{\nu_2}(X)$. Since $\gamma$ is a statistic, for all $X\in \mathcal{X}$, we have $\gamma\circ\mathfrak{D}_{\nu_1}(X)\geq \gamma\circ\mathfrak{D}_{\nu_2}(X)$.

    Let $X\in \mathcal{X}(\mathcal{G})$ and $Y\in \mathcal{X}$ satisfy $X\succsim_1Y~(X\succ_1Y)$. We have 
    $$\gamma\circ\mathfrak{D}_{\nu_2}(X)=\gamma(X_{\#}\mathbb{P})=\gamma\circ\mathfrak{D}_{\nu_1}(X)\overset{(>)}{\geq} \gamma\circ\mathfrak{D}_{\nu_1}(Y)\geq \gamma\circ \mathfrak{D}_{\nu_2}(Y).$$
    Therefore, $X\succsim_2 Y~(X\succ_2Y)$.

    For the converse, assume that \eqref{eq:ep} holds true. Since \eqref{eq:ep} is stronger than the definition of comparative ambiguity aversion, $\succsim_2$ is more ambiguity averse than $\succsim_1$. Let $X,Y\in \mathcal{X}(\mathcal{G})$. If $X\succsim_1 Y$, then $X\succsim_2 Y$ by \eqref{eq:ep}.
    If $X\succsim_2 Y$, then $X\succsim_1 Y$ or else there would be a contradiction with \eqref{eq:ep}. Thus,
    $$X\succsim_1 Y\iff X\succsim_2 Y,~~X,Y\in \mathcal{X}(\mathcal{G}).$$
    Therefore, the $\succsim_1$-CES and the $\succsim_2$-CES both represent the Choquet ATE preference relation $\succsim_1$ on $\mathcal{X}(\mathcal{G})$. Therefore, by Proposition \ref{prop:riskfun}, the $\succsim_1$-CES equals the $\succsim_2$-CES.
\end{proof}

The following example shows that Property (RD) does not necessarily imply Property (D). In the following example, given $A\in \mathcal{F}$ and $\mu \in \Delta$, we use $\mu(A|\mathcal{H})$ to denote the conditional probability of $A$ under $\mu$ given the sub-$\sigma$-algebra $\mathcal{H}$.

\begin{example}
    Assume there exists a risk-conforming $\mu\in \Delta$ and a sub-$\sigma$-algebra $\mathcal{H}$, independent of $\mathcal{G}$ under $\mu$, such that there exists $A_0\in \mathcal{H}$ satisfying $\mu(A_0)= 1/2$. Let $\tilde{g}$ be a concave distortion function satisfying $\tilde{g}(1/2)>1/2$. Define the distortion function $\tilde{f}=\tilde{g}\circ \tilde{g}$. Let $g=\tilde{g}^{-1}$ and $f=\tilde{f}^{-1}$, it is clear that $g$ is convex. Also, $g(\tilde{f}(1/2))>1/2$. Define the capacity 
    $$\tilde{\nu}(A)=\int_{\Omega} f(\mu(A|\mathcal{H}))\d\mu,~~A\in \mathcal{F}.$$
    An application of both the monotone convergence theorem and the conditional monotone convergence theorem will verify that $\tilde{\nu}$ is continuous. As $\mathcal{G}$ and $\mathcal{H}$ are independent, it is straightforward to show that for all $A\in \mathcal{G}$, $\tilde{\nu}(A)=f(\mu(A))=f(\mathbb{P}(A))$. Also, for all $A\in \mathcal{H}$, $\tilde{\nu}(A)=\mu(A)$. Define the capacity $\nu$ by $\nu(A)=\tilde{f}\circ\tilde{\nu}(A)$ for all $A\in \mathcal{F}.$ It is clear that $\nu$ is risk conforming and $\nu(A)=\tilde{f}\circ\mu(A)$ for all $A\in \mathcal{H}$. As $f$ is continuous, $\nu$ is continuous. Define the Choquet ATE preference relation $\succsim$ by
    $$X\succsim Y\iff \gamma^{\mathrm{DU}}_g\circ\mathfrak{D}_{\nu}(X)\iff \gamma^{\mathrm{DU}}_g\circ\mathfrak{D}_{\nu}(Y).$$
    Since $g$ is convex, by \citet[Corollary 4.2]{MM04A}, $\succsim$ satisfies Property (RD). It is clear that $\id_{A_0}\simeq 
    \id_{A_0^c}$. We have
    $$
    \begin{aligned}
        \gamma^{\mathrm{DU}}_g&\circ\mathfrak{D}_{\nu}\left((1/2)\id_{A_0}+(1/2)\id_{A_0^c}\right)\\&=\gamma_g^{\mathrm{DU}}\left(\delta_{(1/2)}\right)=1/2<g\left(\tilde{f}(1/2)\right)=\gamma_g\left(\tilde{f}(\mu(A_0))\delta_1+\left(1-\tilde{f}(\mu(A_0))\right)\delta_0\right)=\gamma^{\mathrm{DU}}_g\circ\mathfrak{D}_{\nu}\left(\id_{A_0}\right).
    \end{aligned}$$
    Therefore, $\id_{A_0}\succ(1/2)\id_{A_0}+(1/2)\id_{A_0^c}$ and $\succsim$ does not satisfy Property (D).
\end{example}

\begin{proof}[Proof of Therorem \ref{theo:SAA}]
        Assume that $\mathfrak{D}_{\nu}$ is concave. Fix $A,B\in \mathcal{F}$. By Proposition \ref{prop:oper}, and the fact that $\id_{A\cup B}$ and $\id_{A\cap B}$ are comonotonic, we have 
        $$\mathfrak{D}_{\nu}((1/2)\id_A+(1/2)\id_B)=\mathfrak{D}_{\nu}((1/2)\id_{A\cap B}+(1/2)\id_{A\cup B})=(1/2)\otimes \mathfrak{D}_{\nu}(\id_{A\cup B})\oplus (1/2)\otimes \mathfrak{D}_{\nu}(\id_{A\cap B}).$$
        Furthermore, since $\mathfrak{D}_{\nu}$ is concave, it holds that 
        $$\mathfrak{D}_{\nu}((1/2)\id_A+(1/2)\id_B)\geq_{\mathrm{ssd}}(1/2)\otimes \mathfrak{D}_{\nu}(\id_{A})\oplus (1/2)\otimes \mathfrak{D}_{\nu}(\id_{B}).$$
        The following two properties of $\mathbb{E}$ are immediately clear. (a) $\mathbb{E}$ is monotonic with respect to $\geq_{\mathrm{ssd}}$. (b) For all $Q,P\in \mathcal{D}$ and non-negative $a,b\in \mathbb{R}$, it holds 
        $\mathbb{E}(a\otimes Q\oplus b\otimes P)=a\mathbb{E}(Q)+b\mathbb{E}(P).$ Let $A\in \mathcal{F}$, since $\mathfrak{D}_{\nu}(\id_A)=\nu(A)\delta_1+(1-\nu(A))\delta_0$, we have that $\mathbb{E}(\mathfrak{D}_{\nu}(\id_A))=\nu(A)$.
        Therefore, we have
        \begin{align*}
            (1/2)\nu(A\cap B)+(1/2)\nu(A\cup B)&=\mathbb{E}\left[(1/2)\otimes\mathfrak{D}_{\nu}(\id_{A\cup B})\oplus (1/2)\otimes \mathfrak{D}_{\nu}(\id_{A\cap B})\right]\\&=\mathbb{E}\left[\mathfrak{D}_{\nu}((1/2)\id_A+(1/2)\id_B)\right]
            \\&\geq\mathbb{E}\left[(1/2)\otimes \mathfrak{D}_{\nu}(\id_{A})\oplus (1/2)\otimes \mathfrak{D}_{\nu}(\id_{B})\right]\\&=(1/2)\nu(A)+(1/2)\nu(B).
        \end{align*}
        Thus, $\nu(A\cap B)+\nu(A\cup B)\geq \nu(A)+\nu(B)$ and $\nu$ is supermodular.

        For the converse, let $X,Y\in \mathcal{X}$ and $\lambda\in [0,1]$. Since $\nu$ is supermodular, $\mathrm{H}_\alpha^\nu$ is concave for all $\alpha \in (0,1)$; see Appendix \ref{app:DC}. Therefore, by Proposition \ref{prop:quant}, we have
        \begin{align*}
            \int_0^\alpha \mathrm{VaR}_\beta\left(\mathfrak{D}_\nu\left(\lambda X+(1-\lambda)Y\right)\right)\d \beta&=\int_0^\alpha \mathrm{G}_\beta^\nu\left(\lambda X+(1-\lambda)Y\right)\d \beta\\&=\alpha \mathrm{H}_\alpha^\nu\left(\lambda X+(1-\lambda)Y\right)\\&\geq \alpha\lambda\mathrm{H}_\alpha^\nu(X)+\alpha(1-\lambda)\mathrm{H}_\alpha^\nu(Y)\\&=  \lambda\int_0^\alpha \mathrm{G}_\beta^\nu\left(X\right)\d \beta+(1-\lambda)\int_0^\alpha \mathrm{G}_\beta^\nu\left(Y\right)\d \beta\\&=\int_0^\alpha \lambda\mathrm{VaR}_\beta\left(\mathfrak{D}_\nu(X)\right)+(1-\lambda)\mathrm{VaR}_\beta\left(\mathfrak{D}_\nu(Y)\right)\d \beta\\&=\int_0^\alpha \mathrm{VaR}_\beta\left(\lambda\otimes \mathfrak{D}_{\nu}(X)\oplus(1-\lambda)\otimes \mathfrak{D}_{\nu}(Y)\right)\d \beta.
        \end{align*}
        Thus, by \citet[Theorem 4.A.3]{SS07A}, $\mathfrak{D}_{\nu}$ is concave.
    \end{proof}

\begin{proof}[Proof of the claims in Example \ref{ex:counter2}]
    Since $\mu_1$ and $\mu_2$ are supermodular, $\mathrm{H}_\alpha^{\mu_1}$ and $\mathrm{H}_\alpha^{\mu_2}$ are concave for all $\alpha \in (0,1)$; see Appendix \ref{app:DC}. Let $X,Y\in \mathcal{X}$ and $\lambda\in [0,1]$.  we have
        \begin{align*}
            \int_0^\alpha \mathrm{VaR}_\beta&\left(\mathfrak{D}\left(\lambda X+(1-\lambda)Y\right)\right)\d \beta\\&=(1/2)\int_0^\alpha \mathrm{VaR}_\beta\left(\mathfrak{D}_{\mu_1}\left(\lambda X+(1-\lambda)Y\right)\right)\d \beta+(1/2)\int_0^\alpha \mathrm{VaR}_\beta\left(\mathfrak{D}_{\mu_2}\left(\lambda X+(1-\lambda)Y\right)\right)\d \beta      \\&=(1/2)\int_0^\alpha \mathrm{G}_\beta^{\mu_1}\left(\lambda X+(1-\lambda)Y\right)\d \beta+(1/2)\int_0^\alpha \mathrm{G}_\beta^{\mu_2}\left(\lambda X+(1-\lambda)Y\right)\d \beta\\&=(1/2)\alpha \mathrm{H}_\alpha^{\mu_1}\left(\lambda X+(1-\lambda)Y\right)+(1/2)\alpha \mathrm{H}_\alpha^{\mu_2}\left(\lambda X+(1-\lambda)Y\right)\\&\geq (1/2)\alpha\lambda\mathrm{H}_\alpha^{\mu_1}(X)+(1/2)\alpha(1-\lambda)\mathrm{H}_\alpha^{\mu_1}(Y)+(1/2)\alpha\lambda\mathrm{H}_\alpha^{\mu_2}(X)+(1/2)\alpha(1-\lambda)\mathrm{H}_\alpha^{\mu_2}(Y)\\&= (1/2)\lambda\int_0^\alpha \mathrm{G}_\beta^{\mu_1}\left(X\right)\d \beta+(1/2)(1-\lambda)\int_0^\alpha \mathrm{G}_\beta^{\mu_1}\left(Y\right)\d \beta\\&~~~+(1/2)\lambda\int_0^\alpha \mathrm{G}_\beta^{\mu_2}\left(X\right)\d \beta+(1/2)(1-\lambda)\int_0^\alpha \mathrm{G}_\beta^{\mu_2}\left(Y\right)\d \beta\\&=(1/2)\lambda\int_0^\alpha \mathrm{VaR}_\beta\left(\mathfrak{D}_{\mu_1}(X)\right)+\mathrm{VaR}_\beta\left(\mathfrak{D}_{\mu_2}(X)\right)\d \beta\\&~~~+(1/2)(1-\lambda)\int_0^\alpha \mathrm{VaR}_\beta\left(\mathfrak{D}_{\mu_1}(Y)\right)+\mathrm{VaR}_\beta\left(\mathfrak{D}_{\mu_2}(Y)\right)\d \beta\\&=\int_0^\alpha \lambda\mathrm{VaR}_\beta\left(\mathfrak{D}(X)\right)+(1-\lambda)\mathrm{VaR}_\beta\left(\mathfrak{D}(Y)\right)\d \beta\\&=\int_0^\alpha \mathrm{VaR}_\beta\left(\lambda\otimes \mathfrak{D}(X)\oplus(1-\lambda)\otimes \mathfrak{D}(Y)\right)\d \beta.
        \end{align*}
    Thus, by \citet[Theorem 4.A.3]{SS07A}, $\mathfrak{D}$ is concave.

      For the sake of contradiction, assume that $\mathfrak{D}$ is ambiguity averse, that is, assume there exists a $\mu\in \Delta$ such that $\mathfrak{D}_\mu(X)\geq_{\mathrm{fsd}}\mathfrak{D}(X)$ for all $X\in \mathcal{X}.$
    Find $A\in \mathcal{F}$ satisfying $\mu_1(A)>\mu_2(A).$ We claim that $\mu(A)\geq \mu_1(A)$. If $\mu(A)=1$, this is clear. If $\mu(A)< 1$, define $\alpha'=1-\mu(A)$. Given $\epsilon\in (0,\alpha')$, it is straightforward to show that $\mathrm{VaR}_{\alpha'-\epsilon}(\mathfrak{D}_{\mu}(\id_A))=0$. Therefore, $0=\mathrm{VaR}_{\alpha'-\epsilon}(\mathfrak{D}(\id_A))$. Since for all $\alpha$,
    $$\mathrm{VaR}_{\alpha}(\mathfrak{D}(\id_A))=\begin{cases}
        1&\text{if}~\alpha\in [1-\mu_2(A),1)\\
        1/2&\text{if}~\alpha\in [1-\mu_1(A),1-\mu_2(A))\\
        0&\text{if}~\alpha\in (0,1-\mu_1(A)),
    \end{cases}$$
    we have $ 1-\mu_1(A)>\alpha'-\epsilon$. Since $\epsilon$ was arbitrary, $ 1-\mu_1(A)\geq \alpha'$. Therefore, $\mu(A)\geq \mu_1(A)$. A similar argument will show that $\mu(A^c)\geq \mu_2(A^c)$. Therefore,
    $$1=\mu(A)+\mu(A^c)\geq \mu_1(A)+\mu_2(A^c)>\mu_2(A)+\mu_2(A^c)=1,$$
    a contradiction. Thus, by contradiction, $\mathfrak{D}$ is not ambiguity averse.
    \end{proof} 

\begin{lemma}\label{lem:Equiv}
    Let $\nu$ be a continuous capacity. Then $\nu$ is exact if and only if $\mathfrak{D}_{\nu}=\mathfrak{D}_{\mathfrak{C}(\nu)}$.
\end{lemma}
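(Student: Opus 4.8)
The plan is to translate the identity $\mathfrak{D}_\nu=\mathfrak{D}_{\mathfrak C(\nu)}$ into pointwise statements about $\nu$ and its core, exploiting that both sides are determined by survival functions. Recall $S_{\mathfrak{D}_\nu(X)}(x)=\nu(X>x)$, and that for a family $\mathcal C\subseteq\mathcal D$ with uniformly bounded support the first-order infimum $\bigwedge_{\mathrm{fsd}}\mathcal C$ is the unique $Q$ that is $\mathrm{fsd}$-below every member of $\mathcal C$ and $\mathrm{fsd}$-above every common $\mathrm{fsd}$-lower bound; equivalently, $S_{\bigwedge_{\mathrm{fsd}}\mathcal C}$ is the right-continuous regularization of $x\mapsto\inf_{P\in\mathcal C}S_P(x)$. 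I would prove the two implications separately.

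For the forward direction, suppose $\nu$ is exact. Fix $X\in\mathcal X$ and set $Q=\mathfrak{D}_\nu(X)\in\mathcal D$, so $S_Q(x)=\nu(X>x)$. Applying exactness to the event $\{X>x\}$ gives $\nu(X>x)=\min_{\mu\in\mathfrak C(\nu)}\mu(X>x)=\inf_{\mu\in\mathfrak C(\nu)}S_{X_\#\mu}(x)$ for every $x$. Since each $\mu\in\mathfrak C(\nu)$ satisfies $\mu(X>x)\ge\nu(X>x)=S_Q(x)$, we have $X_\#\mu\ge_{\mathrm{fsd}}Q$ for all $\mu$, so $Q$ is an $\mathrm{fsd}$-lower bound of $X_\#\mathfrak C(\nu)$. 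Moreover, any $Q'$ with $X_\#\mu\ge_{\mathrm{fsd}}Q'$ for all $\mu$ obeys $S_{Q'}(x)\le\inf_\mu\mu(X>x)=\nu(X>x)=S_Q(x)$, hence $Q\ge_{\mathrm{fsd}}Q'$. Thus $Q=\bigwedge_{\mathrm{fsd}}X_\#\mathfrak C(\nu)=\mathfrak{D}_{\mathfrak C(\nu)}(X)$, and since $X$ was arbitrary, $\mathfrak{D}_\nu=\mathfrak{D}_{\mathfrak C(\nu)}$. Note this direction needs no regularization argument: exactness already presents the pointwise infimum as the right-continuous function $\nu(X>\cdot)$.

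For the converse, suppose $\mathfrak{D}_\nu=\mathfrak{D}_{\mathfrak C(\nu)}$ and fix $A\in\mathcal F$. Evaluating at the binary act $\id_A$ yields $\mathfrak{D}_\nu(\id_A)=\nu(A)\delta_1+(1-\nu(A))\delta_0$, whereas the family $\{\mu(A)\delta_1+(1-\mu(A))\delta_0:\mu\in\mathfrak C(\nu)\}$ is totally $\mathrm{fsd}$-ordered by the value $\mu(A)$, so its first-order infimum is $p^\ast\delta_1+(1-p^\ast)\delta_0$ with $p^\ast=\inf_{\mu\in\mathfrak C(\nu)}\mu(A)$. Equality therefore forces $\nu(A)=\inf_{\mu\in\mathfrak C(\nu)}\mu(A)$ for every $A$; this presupposes $\mathfrak C(\nu)\neq\varnothing$, which is guaranteed by the assumed well-definedness of $\mathfrak{D}_{\mathfrak C(\nu)}$. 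What remains is to upgrade this infimum to a minimum.

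To attain the infimum I would pass to finitely additive measures. Viewing $\mathfrak C(\nu)$ inside the dual $ba(\mathcal F)$ of the bounded measurable functions, the set $\{\mu\ge 0:\mu(\Omega)=1,\ \mu(B)\ge\nu(B)\text{ for all }B\in\mathcal F\}$ is weak$^\ast$-closed and norm-bounded, hence weak$^\ast$-compact by Banach--Alaoglu, and each $\mu\mapsto\mu(A)=\langle\id_A,\mu\rangle$ is weak$^\ast$-continuous. A minimizing net for $\mu(A)$ then has a weak$^\ast$ cluster point $\mu^\ast$ with $\mu^\ast(A)=\nu(A)$, $\mu^\ast\ge 0$, $\mu^\ast(\Omega)=1$, and $\mu^\ast(B)\ge\nu(B)$ for all $B$. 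The genuinely delicate point is that $\mu^\ast$ is a priori only finitely additive; here the continuity of $\nu$ rescues us, since for any $B_k\downarrow\varnothing$ we have $B_k^c\uparrow\Omega$, so upward continuity gives $\nu(B_k^c)\to 1$ and hence $\mu^\ast(B_k)=1-\mu^\ast(B_k^c)\le 1-\nu(B_k^c)\to 0$. Thus $\mu^\ast$ is continuous at $\varnothing$, i.e.\ countably additive, so $\mu^\ast\in\mathfrak C(\nu)$ realizes $\nu(A)=\min_{\mu\in\mathfrak C(\nu)}\mu(A)$ and $\nu$ is exact. I expect this promotion of the finitely additive cluster point to a countably additive element of the core to be the main obstacle; the rest is bookkeeping with survival functions.
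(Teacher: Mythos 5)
Your proof is correct, and it follows the same skeleton as the paper's own argument --- the forward direction via the survival-function identity $S_{\mathfrak{D}_\nu(X)}(x)=\nu(X>x)$ applied to the events $\{X>x\}$, and the converse by evaluating both mappings at the binary acts $\id_A$ --- but where the paper disposes of both directions in a few lines by citing Proposition 1 of Mao, Wang and Wu (2025) (which identifies $S_{\mathfrak{D}_{\mathfrak{C}(\nu)}(X)}(x)$ with $\min_{\mu\in \mathfrak{C}(\nu)}\mu(X>x)$), you prove the delegated content yourself. Concretely, you verify directly from the paper's definition that $\mathfrak{D}_\nu(X)$ satisfies the two defining properties of $\bigwedge_{\mathrm{fsd}}X_{\#}\mathfrak{C}(\nu)$ when $\nu$ is exact, and in the converse you correctly isolate the one genuinely nontrivial point the citation hides: binary acts only yield $\nu(A)=\inf_{\mu\in\mathfrak{C}(\nu)}\mu(A)$, whereas exactness as defined in the paper demands a minimum. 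Your upgrade --- weak$^*$ compactness in $ba(\mathcal{F})$ of the set of finitely additive candidates via Banach--Alaoglu, weak$^*$ continuity of $\mu\mapsto\mu(A)$, and the promotion of the cluster point $\mu^*$ to countable additivity using upward continuity of $\nu$ (since $\mu^*(B_k)\leq 1-\nu(B_k^c)\to 0$ for $B_k\downarrow\varnothing$) --- is exactly the classical argument underlying the nonemptiness and weak compactness of cores of continuous capacities (cf.\ Marinacci and Montrucchio, 2004), and it is sound. The paper's route buys brevity; yours buys self-containedness, an explicit record of where continuity of $\nu$ enters, and the correct observation that in the forward direction no right-continuous regularization of the pointwise infimum of survival functions is needed, because exactness already exhibits that infimum as the genuine survival function $\nu(X>\cdot)$. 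Your side remark that the identity $\mathfrak{D}_\nu=\mathfrak{D}_{\mathfrak{C}(\nu)}$ presupposes $\mathfrak{C}(\nu)\neq\varnothing$ is also a fair reading, since $\bigwedge_{\mathrm{fsd}}$ of the empty family does not exist in $\mathcal{D}$.
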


\begin{proof}
    Assume that $\nu$ is exact. Given $X\in \mathcal{X}$, by \citet[Proposition 1]{MWW25A}, we have
    $$S_{\mathfrak{D}_{\mathfrak{C}(\nu)}(X)}(x)=\min_{\mu\in \mathfrak{C}(\nu)}\mu(X>x)=\nu(X>x)=S_{\mathfrak{D}_\nu(X)}(x),~~x\in \mathbb{R}.$$
    Therefore, $\mathfrak{D}_{\nu}=\mathfrak{D}_{\mathfrak{C}(\nu)}$. 
    
    For the converse, let $A\in \mathcal{F}$. By \citet[Proposition 1]{MWW25A}, we have $$\nu(A)=\nu(\id_A>0)=S_{\mathfrak{D}_\nu(\id_A)}(0)=S_{\mathfrak{D}_{\mathfrak{C}(\nu)}(\id_A)}(0)=\min_{\mu\in \mathfrak{C}(\nu)}\mu(\id_A>0)=\min_{\mu\in \mathfrak{C}(\nu)}\mu(A).$$
    Thus, $\nu$ is exact.
\end{proof}

\begin{proof}[Proof of Proposition \ref{prop:weakerDiv}]
   Assume that $\mathfrak{D}_{\nu}=\mathfrak{D}_{\mathfrak{C}(\nu)}$. Let $X_1,\dots,X_n\in\mathcal{X}$ and $\lambda_1,\dots,\lambda_n\in [0,1]$ with $\sum_{k=1}^n \lambda_k=1$ 
    satisfy $\sum_{k=1}^n \lambda_k X_k=xAy$ for some $A\in \mathcal{F}$ and $x,y\in \mathbb{R}$. By Lemma \ref{lem:Equiv}, $\nu$ is exact. Therefore, there exists $\mu'\in \mathfrak{C}(\nu)$ satisfying $\mu'(A)=\nu(A).$ Without loss of generality, we may assume $x\geq y$. It is clear that $\mathfrak{D}_\nu(xAy)=\mathfrak{D}_{\mu'}(xAy).$ Also, by the comonotonic sum inequality, we have
    $$\mathfrak{D}_{\mu'}(xAy)=\mathfrak{D}_{\mu'}\left(\sum_{k=1}^n \lambda_k X_k\right)\geq_{\mathrm{ssd}} \bigoplus_{k=1}^n \lambda_k\otimes \mathfrak{D}_{\mu'}(X_k).$$
    Finally, since $\mu'\in \mathfrak{C}(\nu)$, we have that $\mathfrak{D}_{\mu'}(X_k)\geq_{\mathrm{fsd}}\mathfrak{D}_\nu(X_k)$ for all $k$. Therefore,
    $$\bigoplus_{k=1}^n \lambda_k\otimes \mathfrak{D}_{\mu'}(X_k)\geq_{\mathrm{fsd}}\bigoplus_{k=1}^n \lambda_k\otimes \mathfrak{D}_{\nu}(X_k).$$
    As $\geq_{\mathrm{fsd}}$ is stronger than $\geq_{\mathrm{ssd}}$, we have that \eqref{eq:newDiv} is true.

    For the converse, by Lemma \ref{lem:Equiv}, we only need to show that $\nu$ is exact. Fix $A_1,\dots,A_n,A\in \mathcal{F}$ and  non-negative $c_1,\dots,c_n,c\in\mathbb{R}$ satisfying
    $\id_A\geq\sum_{k=1}^n c_k\id_{A_k}-c.$ Find $A_{n+1},\dots A_{m}\in \mathcal{F}$ and non-negative $c_{n+1},\dots,c_m\in\mathbb{R}$ such that $\id_A=\sum_{k=1}^m c_k\id_{A_k}-c,$ where $m\geq n$. Define $a=\sum_{k=1}^m c_k$. Therefore, by \eqref{eq:newDiv},
    $$\begin{aligned}
        (1/a)\otimes\left[(\nu(A))\delta_{c+1}+(1-\nu(A))\delta_{c}\right]&=(1/a)\otimes\mathfrak{D}_{\nu}\left((c+1)Ac\right)\\&\geq_{\mathrm{ssd}}\bigoplus_{k=1}^m (c_k/a)\otimes \mathfrak{D}_\nu(\id_{A_k})\\&=\bigoplus_{k=1}^m (c_k/a)\otimes \left[(\nu(A_k))\delta_{1}+(1-\nu(A_k))\delta_{0}\right].
    \end{aligned}$$
    Since $\mathbb{E}\left( (1/a)\otimes\left[(\nu(A))\delta_{c+1}+(1-\nu(A))\delta_{c}\right]\right)=(1/a)(\nu(A)(c+1)+(1-\nu(A))c)=(1/a)(\nu(A)+c)$, $\mathbb{E}\left(\bigoplus_{k=1}^m (c_k/a)\otimes \left[(\nu(A_k))\delta_{1}+(1-\nu(A_k))\delta_{0}\right]\right)=\sum_{k=1}^m (c_k/a)\nu(A_k)$, and $\mathbb{E}$ is monotone with respect to $\geq_{\mathrm{ssd}}$, we have $\nu(A)\geq \sum_{k=1}^m c_k\nu(A_k)-c\geq \sum_{k=1}^n c_k\nu(A_k)-c$. Therefore, by \cite{S72}, $\nu$ is exact. 
\end{proof}

\begin{proof}[Proof of Theorem \ref{theo:robust}]
    Since $\succsim$ is a Choquet ATE preference relation, we have that
    \begin{equation}\label{eq:need1}
        X\succsim Y\iff \gamma\circ\mathfrak{D}_{\nu}(X)\geq \gamma\circ\mathfrak{D}_{\nu}(Y),~~X,Y\in \mathcal{X},
    \end{equation}
    where $\gamma$ is the $\succsim$-CES.

    Assume that $\succsim$ is a distributionally robust preference relation. Therefore, by definition, there exists a collection of risk-conforming probability measures $\mathcal{Q}$ and a statistic $\tilde{\gamma}$ such that 
    \begin{equation}
        \label{eq:need2}
        X\succsim Y\iff \min_{\mu\in \mathcal{Q}}\tilde{\gamma}(X_{\#}\mu)\geq\min_{\mu\in \mathcal{Q}}\tilde{\gamma}(Y_{\#}\mu),~~X,Y\in \mathcal{X}.
    \end{equation}
    Without loss of generality, we may assume that $\tilde{\gamma}$ is a certainty-equivalent statistic. Fix $Q\in \mathcal{D}$. Since $(\Omega,\mathcal{G},\mathbb{P})$ is atomless, we can find $X\in \mathcal{X}(\mathcal{G})$ such that $Q=X_{\#}\mathbb{P}=\mathfrak{D}_\nu(X)$. By \eqref{eq:need1}, we have that $X\simeq\gamma(Q)$. Thus, by \eqref{eq:need2}, we have $\gamma(Q)=\tilde{\gamma}(Q)$. Since $Q$ was arbitrary, we have that $\gamma=\tilde{\gamma}$. Given $X\in \mathcal{X}$, by \eqref{eq:need1}, $X\simeq \gamma\circ\mathfrak{D}_{\nu}(X)$. Thus, by \eqref{eq:need2}, we have \begin{equation}
        \label{eq:need3}
        \min_{\mu\in \mathcal{Q}}\gamma(X_{\#}\mu)=\gamma\circ\mathfrak{D}_{\nu}(X).
    \end{equation} Define $\varphi:[0,1]\to [0,1]$ by $\varphi(x)=\gamma(x\delta_1+(1-x)\delta_0)$. Since $\gamma$ is a certainty-equivalent statistic, it is easy to show that $\varphi$ is a distortion function. Let $A\in \mathcal{F}$, by \eqref{eq:need3} we have
    $$\begin{aligned}
        \varphi(\nu(A))=\gamma(\nu(A)\delta_1+(1-\nu(A))\delta_0)&=\gamma\circ\mathfrak{D}_\nu(\id_A)\\&=\min_{\mu\in \mathcal{Q}}\gamma((\id_A)_{\#}\mu)\\&=\min_{\mu\in \mathcal{Q}}\gamma(\mu(A)\delta_1+(1-\mu(A))\delta_0)\\&=\min_{\mu\in \mathcal{Q}}\varphi(\mu(A))=\varphi\left(\min_{\mu\in \mathcal{Q}}\mu(A)\right).
    \end{aligned}
    $$
    Since $\varphi$ is invertible, we have $\nu(A)=\min_{\mu\in \mathcal{Q}}\mu(A)$. Additionally, we have $\mu(A)\geq \nu(A)$ for all $\mu\in \mathcal{Q}$ and $A\in \mathcal{F}$. Thus $\mathcal{Q}\subseteq \mathfrak{C}(\nu).$ We claim that for each $X\in \mathcal{X}$, there exists $\mu_0\in \mathcal{Q}$ such that $\mathfrak{D}_\nu(X)=X_{\#}\mu_0.$ To see this note that for all $\mu\in \mathcal{Q}$, we have $\mu(X>x)\geq \nu(X>x)$ for all $x\in \mathbb{R}$. Therefore, $$X_{\#}\mu \geq_{\mathrm{fsd}}\mathfrak{D}_{\nu}(X),~~\mu\in \mathcal{Q}.$$
    Additionally, by \eqref{eq:need3}, there exists $\mu_0\in \mathcal{Q}$ such that $\gamma(X_{\#}\mu_0)=\gamma\circ\mathfrak{D}_{\nu}(X).$ If $X_{\#}\mu_0 >_{\mathrm{fsd}}\mathfrak{D}_{\nu}(X)$, it would hold that $\gamma(X_{\#}\mu_0)>\gamma\circ\mathfrak{D}_{\nu}(X),$ which is a contradiction. Therefore, $\mathfrak{D}_\nu(X)=X_{\#}\mu_0.$ Given $A,B\in \mathcal{F}$ satisfying $A\subseteq B$, we have that there exists a $\mu_0\in \mathcal{Q}$ such that
    $$\begin{aligned}
        \mu(A)\delta_2&+(\mu(B)-\mu(A))\delta_1+(1-\mu(B))\delta_0\\&=(\id_A+\id_B)_{\#}\mu_0=\mathfrak{D}_{\nu}(\id_A+\id_B)=\nu(A)\delta_2+(\nu(B)-\nu(A))\delta_1+(1-\nu(B))\delta_0.
    \end{aligned}$$
    Therefore, $\nu(A)=\mu(A)$ and $\nu(B)=\mu(B)$. Since $\mu\in \mathfrak{C}(\nu)$, by \citet[Theorem 4.7]{MM04A}, we have that $\nu$ is supermodular.

    Conversely, assume that $\nu$ is supermodular. We claim that \begin{equation}
        \label{eq:need4}
        \gamma\circ\mathfrak{D}_{\nu}(X)=\min_{\mu\in \mathfrak{C}(\nu)}\gamma(X_{\#}\mu),~~X\in \mathcal{X}.
    \end{equation}
    Since for all $\mu\in \mathfrak{C}(\nu)$, $\mu$ is risk conforming, \eqref{eq:need4}
     shows that $\succsim$ is a distributionally robust preference relation. To show \eqref{eq:need4}, start by fixing $X\in \mathcal{X}$. As $\{X>x\}_{x\in \mathbb{R}}$ is a chain, by \citet[Theorem 4.7]{MM04A}, there exists $\mu_0\in \mathfrak{C}(\nu)$ such that for all $\mu\in \mathfrak{C}(\nu)$, we have
     $$\mu(X>x)\geq\nu(X>x)=\mu_0(X>x),~~x\in \mathbb{R}.$$ Therefore, for all $\mu\in \mathfrak{C}(\nu)$, $X_{\#}\mu\geq_{\mathrm{fsd}}\mathfrak{D}_\nu(X)=X_{\#}\mu_0$. Thus, 
     $$\gamma(X_{\#}\mu)\geq \gamma\circ\mathfrak{D}_\nu(X)=\gamma(X_{\#}\mu_0),~~\mu\in \mathfrak{C}(\nu).$$
     Finally, as $X$ was arbitrary, \eqref{eq:need4} holds.
\end{proof}

{\small

}

\end{document}